\definecolor{midnightblue}{HTML}{0059b3}
\definecolor{chromered}{HTML}{f14233}
\newtheorem{theorem}{Theorem}
\newtheorem{lemma}{Lemma}
\def\Z{\ensuremath{\mathbb Z}}
\def\C{\ensuremath{\mathbb C}}
\def\N{\ensuremath{\mathbb N}}
\def\R{\ensuremath{\mathbb R}}
\def\T{\ensuremath{\mathbb T}}
\def\D{\ensuremath{\mathbb D}}
\renewcommand{\tilde}{\widetilde}
\renewcommand{\epsilon}{\varepsilon}
\renewcommand{\phi}{\varphi}
\renewcommand{\bar}{\overline}
\newcommand{\sgn}{\operatorname{sign}}
\newcommand{\bH}{\mathbf{H}}
\newcommand{\Id}{\operatorname{Id}}
\newcommand{\esssup}{\operatorname{ess} \,\operatorname{sup}}
\renewcommand{\equiv}{:=}
\newcommand{\dist}{\operatorname{dist}}
\renewcommand{\Im}{\operatorname{Im}}
\renewcommand{\Re}{\operatorname{Re}}
\newcommand{\cst}{\eta}
\numberwithin{equation}{section}
\newcommand\numberthis{\addtocounter{equation}{1}\tag{\theequation}}
\title{Quantum signal processing and nonlinear Fourier analysis}
\author{Michel Alexis}
\author{Gevorg Mnatsakanyan}
\author{Christoph Thiele}
\address{Mathematical Institute, 
	University of Bonn,
	Endenicher Allee 60, 53115 Bonn,
	Germany}
\email{alexis@math.uni-bonn.de}
	 \email{gevorg@math.uni-bonn.de}
  \email{thiele@math.uni-bonn.de}
\subjclass{68Q12,81P68,34L25,42C99}
\begin{document}

\begin{abstract}
Elucidating a connection with nonlinear Fourier analysis (NLFA), we extend a well known algorithm in quantum signal processing (QSP) to represent measurable signals by square summable sequences.
Each coefficient of the sequence is Lipschitz continuous as a function of the signal.

\end{abstract}

\maketitle

\section{Introduction}

A signal in this paper is a function from the interval $I=[0,1]$ to 
the interval $(-2^{-\frac 12}, 2^{-\frac 12})$.
In quantum signal processing, one represents such a
signal as the imaginary part of one entry of an ordered
product of unitary matrices. The factors of this product alternate between matrices depending on the functional parameter $x\in I$
and matrices depending on a sequence $\Psi$  of scalar parameters $\psi_n$ which 
are tuned so that the product represents a given signal.
We are interested in the particular representation of this type
proposed by \cite{low2017} and extended to infinite 
absolutely summable sequences in \cite{Linlin}.
Our main observation is that after some change of variables, the map sending the sequence $\Psi$ to the signal is identified as the nonlinear Fourier series described in \cite{tsai}.
Indeed, this nonlinear Fourier series as well as variants 
including one  with $SU(1,1)$ 
matrices in \cite{TaoThiele2012} have been studied 
for a long time in different contexts such as orthogonal polynomials \cite{simon}, Krein systems \cite{Denisov}, scattering transforms
\cite{bealscoifman, sylvesterwinebrenner}
or AKNS systems \cite{akns}.

In particular, transferring knowledge from
nonlinear Fourier analysis, we extend
the theory in \cite{Linlin}
from absolutely summable to square summable $\Psi$ using a nonlinear version of the Plancherel identity. We obtain a representation of  measurable signals
by square summable sequences $\Psi$.
This representation extremizes a certain inequality of Plancherel type.

To state our main result, Theorem \ref{main}, we make some formal definitions.
Given $\epsilon>0$, define the signal space $\mathbf{S}_{\epsilon}$ to be the set of real valued measurable functions $f$ on $[0,1]$ 
that satisfy the bound  
\begin{equation}\label{upperboundb}
    \sup_{x\in [0,1]}|f(x)|\le 2^{- \frac{1}{2}}-\epsilon .
\end{equation}
We equip $\mathbf{S}_{\epsilon}$ with the metric induced by the Hilbert space norm 
\begin{equation}\label{hsnorm}
\left \| f \right \| \equiv \left ( \frac{2}{\pi} \int\limits_{0} ^1 \left | f(x) \right |^2 \frac{dx}{\sqrt{1-x^2}} \right )^{\frac{1}{2}} \, .
\end{equation}

Let $\mathbf{P}$ be the space of sequences
$\Psi=(\psi_k)_{k\in \N}$ of numbers $\psi_k\in (-\frac \pi 2,\frac \pi 2)$. We equip $\mathbf{P}$ with the metric induced by the $L^{\infty}$-norm
\[\|\Psi\|_\infty = \sup_{k\in \N}|\psi_k|.\]

For  $x\in [0,1]$, define
\begin{equation}\label{eq:def_Wonly}
       W(x) := \begin{pmatrix}
        x & i\sqrt{1-x^2} \\
        i\sqrt{1-x^2} & x
    \end{pmatrix} ,\quad
    Z = \begin{pmatrix}
        1 & 0 \\ 0 & -1
    \end{pmatrix} 
    .
\end{equation}
For $\Psi \in \mathbf{P}$ and $x\in [0,1]$, define recursively
\begin{equation}\label{u0}
U_0(\Psi,x)=e^{i\psi_0 Z}
\end{equation}
and
\begin{equation}\label{ud}
U_d(\Psi,x)=
e^{i\psi_{d} Z}W(x) U_{d-1}(\Psi,x) W(x)e^{i\psi_{d} Z}.
\end{equation}
Define $u_d(\Psi,x)$ to be the upper left entry of $U_d(\Psi,x)$.

\begin{theorem}\label{main}
 Let $\epsilon>0$. For each $f \in \mathbf{S}_{\epsilon}$, there exists a unique 
sequence $\Psi \in \mathbf{P}$ such that 
\begin{equation}
\label{plancherel}
\sum_{k\in \Z}\log(1+\tan^2\psi_{|k|})
=
-\frac{2}{\pi}\int_{0}^1 \log|1-f(x)^2| \frac{dx}{\sqrt{1-x^2}} 
\end{equation}
and 
$\Im(u_d(\Psi,x))$
converges with respect to the norm \eqref{hsnorm} to the function $f$
as $d$ tends to $\infty$. For two functions $f,\tilde{f} \in \mathbf{S}_\epsilon$ with corresponding sequences $\Psi, \tilde{\Psi}$ as above, we have the Lipschitz bound
\begin{equation}\label{lipschitzbound}
\|\Psi-\tilde{\Psi}\|_\infty \le 7.3 \epsilon^{-\frac{3}{2}}\|f-\tilde{f}\|.
\end{equation}
\end{theorem}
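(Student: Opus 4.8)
The plan is to realize the map $f \mapsto \Psi$ as a composition of three maps and track Lipschitz constants through each stage, exploiting the identification with nonlinear Fourier analysis announced in the introduction. First I would pass from the signal $f$ to a boundary function on the circle: writing $x = \cos\theta$, the norm \eqref{hsnorm} becomes an ordinary $L^2(\mathbb T)$ norm (up to the even-extension normalization), and the constraint \eqref{upperboundb} says $|f| \le 2^{-1/2} - \epsilon$ pointwise. From $f$ one builds the data for a Riemann--Hilbert / nonlinear Fourier problem — concretely, the function whose nonlinear Fourier coefficients are the $\tan\psi_k$. The Plancherel identity \eqref{plancherel} is exactly the nonlinear Plancherel identity for this transform, so existence and uniqueness of $\Psi$ (the first assertion of the theorem, which I may assume is established before the Lipschitz bound, or prove in tandem) come from the invertibility of the nonlinear Fourier series on the relevant function space. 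The essential point for the Lipschitz estimate is then: (i) $f \mapsto (\text{NLFA data})$ is Lipschitz on $\mathbf S_\epsilon$ with a constant that degrades like a negative power of $\epsilon$, because forming the data involves $\log|1-f^2|$ or $\sqrt{1-f^2}$ type expressions whose derivatives blow up as $|f| \to 2^{-1/2}$; and (ii) the inverse nonlinear Fourier transform is Lipschitz from the data into $\ell^\infty$ (indeed into the stronger $\ell^2$-type space controlled by \eqref{plancherel}), which is where the nonlinear Plancherel identity does the work.

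Concretely I would estimate $\|\Psi - \tilde\Psi\|_\infty$ by bounding a single coefficient $|\psi_k - \tilde\psi_k|$. Since $\psi_k = \arctan(c_k)$ where $c_k$ is the $k$-th nonlinear Fourier coefficient of the data associated to $f$, and $|\arctan' | \le 1$, it suffices to bound $|c_k - \tilde c_k|$. For this I would use that the nonlinear Fourier coefficients depend Lipschitz-continuously on the potential with respect to the $L^2$ norm — this is a standard consequence of the contraction/fixed-point construction of the NLFA map on $\ell^2$, and it is precisely here that the nonlinear Plancherel identity guarantees we stay in a bounded region where the map is well-conditioned. The condition number of this region is governed by how close $\sum \log(1+\tan^2\psi_{|k|})$ is to its blow-up, equivalently by how close $f$ is to the forbidden sup-norm $2^{-1/2}$; this contributes a factor like $\epsilon^{-1}$. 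Composing with step (i), whose Jacobian I would bound by differentiating $\log|1-f^2|$ and $\sqrt{1-f^2 \pm \cdots}$ and using $1 - f^2 \ge 1 - (2^{-1/2}-\epsilon)^2 \ge \tfrac12 + \sqrt2\,\epsilon$, gives another factor of order $\epsilon^{-1/2}$, and assembling these — together with the explicit numerical constants from the $2/\pi$ normalization, the bound $|\arctan'|\le 1$, and the elementary inequalities for $1-f^2$ — yields the claimed constant, which I would optimize to land below $7.3$.

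The main obstacle I anticipate is step (ii): getting a clean, quantitatively sharp Lipschitz bound for the \emph{inverse} nonlinear Fourier transform on $L^2/\ell^2$ with an explicit $\epsilon$-dependence, rather than a qualitative "continuous" statement. The forward transform (potential $\mapsto$ coefficients) is easy; its inverse on the full square-summable range requires the nonlinear Plancherel theory, and controlling the inverse means controlling the "layer-stripping" or Riemann--Hilbert solution uniformly over the ball of data coming from $\mathbf S_\epsilon$. I expect this to reduce to a resolvent-type bound for the relevant transfer-matrix cocycle that stays away from $\pm 1$ by a margin comparable to $\epsilon$, and then the Lipschitz constant is essentially the reciprocal of that margin. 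A secondary, more bookkeeping-level obstacle is verifying that the limit $\Im(u_d(\Psi,x)) \to f$ in the norm \eqref{hsnorm} is compatible with — indeed follows from — the same NLFA identification, so that the sequence $\Psi$ produced by inverting the nonlinear Fourier transform is genuinely the one in the theorem; this should follow because $u_d$ is the truncated nonlinear Fourier partial sum and the nonlinear Plancherel identity forces $L^2$ convergence of these partial sums, but the even/odd extension and the precise correspondence between $u_d$ and the NLFA partial product must be checked carefully against the recursion \eqref{u0}--\eqref{ud}.
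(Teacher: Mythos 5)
Your proposal is correct and matches the paper's argument in its essential structure: define $b(z) = if(\cos\theta)$ on $\T$, recover the unique outer $a$, solve the Riemann–Hilbert factorization by a Banach fixed-point/layer-stripping argument to get the coefficients $F_k$, apply $\arctan$, and establish uniqueness via the nonlinear Plancherel identity, with the Lipschitz constant obtained by composing the three stages exactly as you describe. One small bookkeeping remark: in the paper the map $f\mapsto(a,b)$ is Lipschitz with a constant independent of $\epsilon$ (the slope of $x\mapsto\log(1-x^2)$ is bounded on the whole interval $[0,2^{-1/2}]$, not just on $[0,2^{-1/2}-\epsilon]$), and the full factor $\epsilon^{-3/2}$ comes from the inverse step, namely a factor $\epsilon^{-1}$ from the resolvent of the contraction $\Id - P\,(b/a)$ and a further $\epsilon^{-1/2}$ from the sup-norm of the fixed point entering the normalization by $A(\infty)^{1/2}$.
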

\begin{figure}
    \centering
\includegraphics[width=0.6\textwidth]{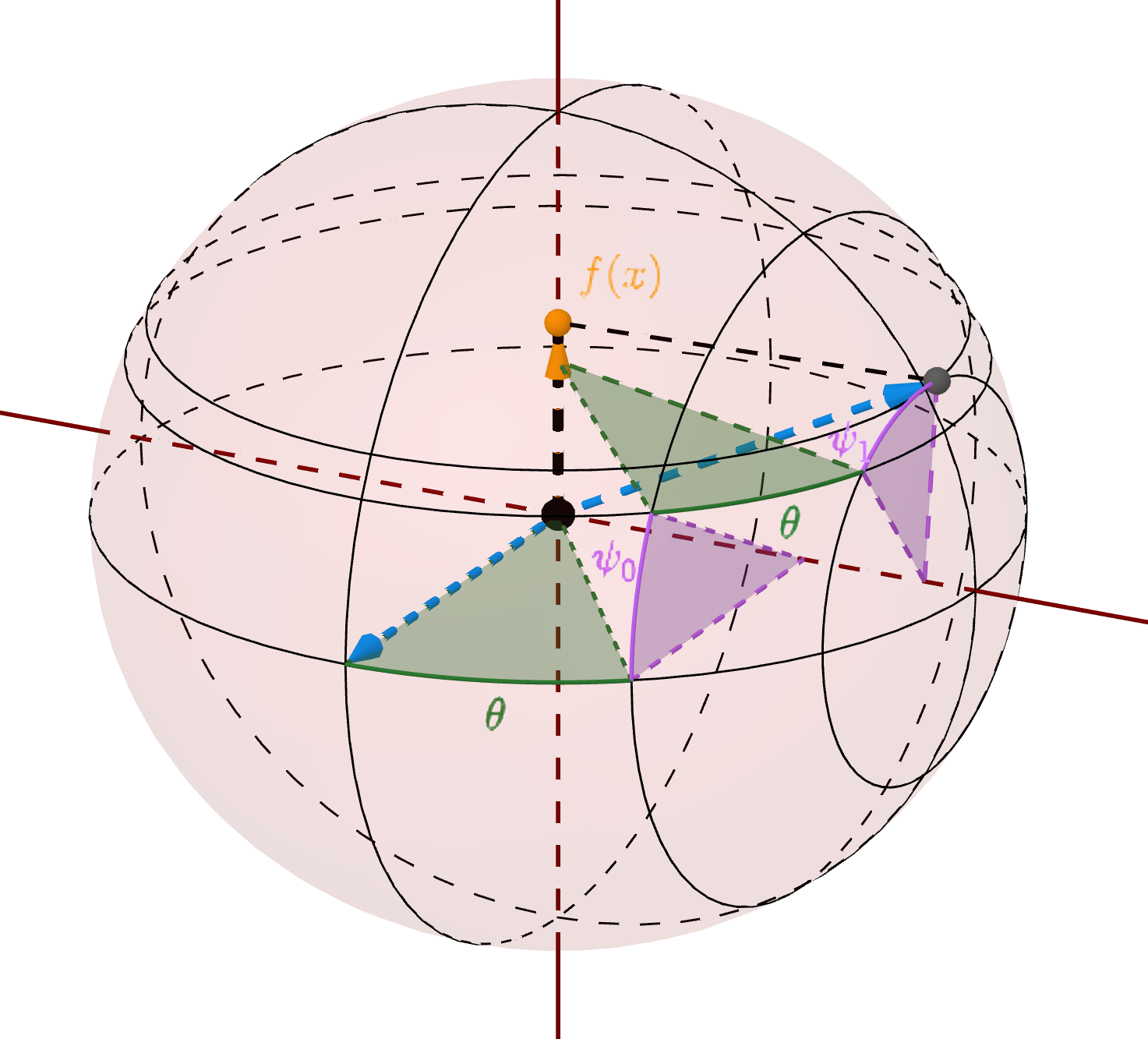}
    \caption{Illustration of QSP}
\label{fig:sphere}
\end{figure}

Figure \ref{fig:sphere} is a simplified cartoon of QSP, conflating for illustrative purpose the group $SO(3)$ with its double-cover  $SU(2)$ and ignoring for simplicity the reflection symmetry in the product \eqref{ud}. For a given signal $f$, Theorem \ref{main} provides tuning parameters $\psi_j$ with which we can then evaluate $f$ at $x = \cos \theta$ as follows. We alternatingly rotate the horizontal blue vector by $\theta$ about the vertical axis, an action generated by the Pauli matrix $X$ defined in Section
\ref{section:QSP_to_NLFT_finite}, and by the consecutive tuning parameters $\psi_j$ about the horizontal brown axis, an action generated by the Pauli matrix $Z$. The resulting rotated blue vector has height $f(x)$.

Our proof provides an algorithm to compute $\psi_k$ via a Banach fixed point iteration that converges exponentially fast with rate depending on $\epsilon$. The iteration step requires the application of a Cauchy projection, which in practice may be computed using a fast Fourier transform. 

The weight $(1-x^2)^{-\frac 12}$ in \eqref{plancherel} has a singularity at one but not at zero. This asymmetry 
arises because our theory works naturally with $f$ extended to an even function on $[-1,1]$. 

After developing the relevant parts
of nonlinear Fourier analysis, we prove  Theorem \ref{main} in Section \ref{mainproof}.
A relaxation of the threshold \eqref{upperboundb} will be discussed in a forthcoming paper.

The literature both on QSP and NLFA is extensive and we do not try to give a complete overview here. 
Our first reference to the QSP algorithm discussed here is \cite{low2017}, which was interested in an optimal algorithm for Hamiltonian simulation. Various interesting properties of QSP are discussed in \cite{symmetricphasefactors, childs2018toward}. \cite{su11qsp} introduces an $SU(1,1)$ variant of QSP. For the task of computing the potential $(\psi_n)$ for a given target function $f$, several algorithms have been proposed including the so-called factorization method \cite{factorizationmethod1, factorizationmethod2, FindingAnglesforQSP, StableFactorization}, an optimization algorithm \cite{EnergylandscapeofQSP}, fixed point iteration \cite{Linlin} and Newton's method \cite{linlinrecentnumerics}. The factorization method in the context of nonlinear Fourier series is called the layer-stripping formula and is discussed below. The papers \cite{Linlin} and \cite{EnergylandscapeofQSP} develop the $\ell^1$ and $\ell^2$ theories for QSP with many interesting theoretical results. Many of these results are
implicit in 
our discussion of NLFA in the present paper.

Discrete NLFA for the $SU(1,1)$ model was studied in \cite{TaoThiele2012} 
with particular emphasis on transferring analytic estimates 
for the linear Fourier transform 
to the nonlinear setting.
For the $SU(2)$ model, a similar
discussion appears in \cite{tsai}. Some important contributions to the 
quest for analogs of classical linear inequalities were made by \cite{ChristKiselev, Oberlin2012, diogo, KovacDiogo, MuscaluTaoThiele}, namely providing maximal and variational Hausdorff-Young inequalities and discussing Carleson-type theorems for the $SU(1,1)$ model of the nonlinear Fourier transform and some variants. For a discussion of some recent results and open questions see \cite{diogosurvey}.

The interest of the third author and subsequently
the other authors in quantum signal processing was initiated during an inspiring talk by
L. Lin at a delightful conference at ICERM on Modern Applied and Computational Analysis. In particular, we dedicate this result to R. Coifman, who anticipated at the conference that QSP is some sort of nonlinear Fourier analysis. The third author acknowledges an invitation to the Santal\'o Lecture 2022, where he gave an introduction to nonlinear Fourier analysis. 
The authors acknowledge support by the Deutsche Forschungsgemeinschaft (DFG, German Research Foundation) under Germany's Excellence Strategy -- EXC-2047/1 -- 390685813 as well as CRC 1060. We also thank Jiasu Wang for pointing out typos in the first ArXiv posting of this article.

\section{The nonlinear Fourier transform}
We are mainly interested in 
nonlinear Fourier series. However, we start with an excursion to the nonlinear Fourier transform on the real line, which is a multiplicative and non-commutative version of the linear Fourier transform.

Recall the linear Fourier transform 
$$\widehat{f}(\xi):=\int_\R f(x) e^{-2\pi i x\xi} dx.$$
This integral is understood to be a Lebesgue integral if
$f$ is in $L^1(\R)$. If $\widehat{f}$ is also in $L^1(\R)$, then
both $f$ and $\widehat{f}$ can be seen to be
in $L^2(\R)$ and one has the Plancherel identity
$$\|\widehat{f}\|_2=\|{f}\|_2.$$
The Plancherel identity holds for $f$ in a dense subset of $L^2(\R)$, and one can use it to extend the Fourier transform 
to a unitary map from $L^2(\R)$ to itself.
This definition in $L^2(\R)$ coincides with the integral definition  when
$f$ is in $L^2(\R)\cap L^1(\R)$.

The integral in the definition of the Fourier transform is an additive process over continuous time $x$.
This process can alternatively be expressed by a differential evolution equation for the partial Fourier integrals
$$S(\xi,x)=\int_{-\infty}^x f(t) e^{-2\pi i t\xi}\, dt, $$
namely
$$\partial_x S(\xi,x)= f(x) e^{-2\pi i x\xi}$$
with the initial condition
$$S(\xi,-\infty)=0$$
and the final state 
$$S(\xi,\infty)=\widehat{f}(\xi).$$
If $f\in L^1(\R)$, the required analytic facts such as solvability of the differential equation and limits as $x$ tends to $\pm \infty$ can be elaborated with standard methods.

Exponentiation turns this additive process into a multiplicative process. Define
$$G(\xi,x)=e^{S(\xi,x)} .$$
Then $G$ satisfies the differential equation
\begin{equation}\label{nlftc}
\partial_x G(\xi,x)= G(\xi,x) f(x) e^{-2\pi i x\xi}
\end{equation}
with the initial condition
$$G(\xi,-\infty)=1$$
and the final state
$$G(\xi,\infty)=e^{\widehat{f}(\xi)}.$$

In the above scalar valued setting, the multiplicative perspective is of an  artificial nature.
However, the multiplicative process allows 
for matrix valued generalizations, which lead to substantially 
different nonlinear Fourier transforms.
For these generalizations, the complex factor $f(x) e^{-2\pi i x\xi}$ in $\C$
in \eqref{nlftc} needs to be replaced by a matrix factor. The most basic choices of such matrix factors come from real linear embeddings of $\C$ into three dimensional Lie algebras, 
in particular the ones associated with the Lie groups $SU(1,1)$ and $SU(2)$. 

The most common $SU(1,1)$ model of the nonlinear Fourier transform is described by the differential equation
\begin{equation}\label{nlftsu11}
\partial_x G(\xi,x)= G(\xi,x)\left(\begin{array}{cc} 0 & f(x) e^{-2\pi i x\xi}
\\ \overline{f(x) e^{-2\pi i x\xi}} & 0 \end{array}\right)
\end{equation}
with the initial condition
$$G(\xi,-\infty)=\left(\begin{array}{cc} 1 & 0
\\ 0 & 1 \end{array}\right)$$
and the final state defined to be the $SU(1,1)$ nonlinear Fourier transform  of $f$,
\begin{equation}\label{su11nlft}
G(\xi,\infty)=\left(\begin{array}{cc} a(\xi) & b(\xi)
\\ \overline{b(\xi)} & \overline{a(\xi)} \end{array}\right).\end{equation}
As the matrix factor in \eqref{nlftsu11} is in the Lie Algebra of $SU(1,1)$,  the solution to the differential equation stays in $SU(1,1)$. This explains the particular structure of the matrix in \eqref{su11nlft} and we also have 
$$|a(\xi)|^2-|b(\xi)|^2=1.$$

Analogous to the linear situation, solvability of the differential equation with limits as $x$ tends to $\pm \infty$ is elementary
for $f\in L^1(\R)$.
By Picard iteration, a solution can be written as
the limit of recursively defined approximations $G_{k}$ with
$$G_{0}(\xi,x)=\left(\begin{array}{cc} 1 & 0
\\ 0 & 1 \end{array}\right)$$
and for $k>0$
$$G_{k}(\xi,x)=\left(\begin{array}{cc} 1 & 0
\\ 0 & 1 \end{array}\right)+\int_{-\infty}^x 
G_{k-1}(\xi,t_k)\left(\begin{array}{cc} 0 & f( t_k) e^{-2\pi i t_k\xi}
\\ \overline{f(t_k) e^{-2\pi i t_k\xi}} & 0 \end{array}\right)
\, dt_k.$$
In particular, $G_{k}-G_{k-1}$ is $k$-linear in $f$.
If $k$ is even, the $k$-linear term is diagonal with 
upper left entry
\begin{equation}\label{diagftentry}
\int_{-\infty<t_1<t_2<\dots <t_k<\infty}
\prod_{j=1}^{k/2} \overline{f(t_{2j})}{f(t_{2j-1})}e^{2\pi i \xi (t_{2j}-t_{2j-1})}\, dt_{2j-1}dt_{2j}
\end{equation}
and lower right entry the complex conjugate of \eqref{diagftentry}.
If $k$ is odd, then the $k$-linear term is anti-diagonal 
with upper right entry
\begin{equation}\label{ftoffdiag}\int_{-\infty<t_1<t_2<\dots <t_k<\infty}
f(t_{k})e^{-2\pi i \xi t_k} \prod_{j=1}^{(k-1)/2} 
\overline{f(t_{2j})} f(t_{2j-1})e^{2\pi i \xi (t_{2j}-t_{2j-1})}\, dt_{2j-1}dt_{2j}
\end{equation}
and lower left entry the complex conjugate of \eqref{ftoffdiag}. Note that \eqref{diagftentry} and \eqref{ftoffdiag} are the terms involved in the multilinear expansions of $a$ and $b$, which have first order approximation of  the constant function $1$ and the linear Fourier transform 
of $f$, respectively.

The entries  \eqref{diagftentry} and \eqref{ftoffdiag} are bounded in absolute value by the integrals
\begin{equation}\label{factorial}
\int_{-\infty<t_1<t_2<\dots <t_k<\infty}
\prod_{j=1}^{k} |f(t_{j})| dt_j =\frac 1{k!} \|f\|_1^k.
\end{equation}
Hence the nonlinear Fourier transform is a real analytic map 
from $L^1(\R)$ to the space $L^\infty(\R, \C^{2\times 2})$.
Moreover, the inverse linear Fourier transform of
\eqref{diagftentry} can be written
with the Dirac $\delta$ and the functional variable $x$ as
\begin{equation}\label{ftdiagftentry}
\int_{-\infty<t_1<t_2<\dots <t_k<\infty}
\delta(x+\sum_{j=1}^{k/2} t_{2j}-t_{2j-1})
\prod_{j=1}^{k/2} \bar{f(t_{2j})}f(t_{2j-1}) \, dt_{2j-1}dt_{2j},
\end{equation}
and similarly for \eqref{ftoffdiag}.
The function 
\eqref{ftdiagftentry} is again in $L^1(\R)$ with norm bounded as in
\eqref{factorial}.
Hence the nonlinear Fourier transform is a real analytic map 
from $L^1(\R)$ to $A(\R, \C^{2\times 2})$, the matrix valued functions with entries in the Wiener space $A(\R)$, which is the linear Fourier transform of ${L^1}(\R)$.

With more work, one can also show that the nonlinear Fourier transform extends to
an analytic map from $L^p(\R)$ into a suitable space \cite{ChristKiselev, Oberlin2012} for $1<p<2$. At $p=2$,
the $SU(1,1)$ nonlinear Fourier transform can be defined 
by a similar density argument as in the linear case using the
nonlinear analogue of the Plancherel identity
\begin{equation}\label{eq:int_Planch}\|f\|_2^2=2 \int \log{|a(\xi)|}\, d\xi=\int_\R \log(1+|b(\xi)|^2) \, d\xi ,
\end{equation}
which we will elaborate on below after \eqref{diagftentry2}. 
However, unlike the linear setting, one obtains neither an injective map
on $L^2(\R)$, nor a real analytic map on $L^2(\R)$ in any suitable sense. See \cite{TaoThiele2012} in the discrete setting and \cite{MTT} for references on these respective phenomena.

The $SU(2)$ model of the nonlinear Fourier transform is described by the solution to the differential equation
\begin{equation}\label{nlftsu2}
\partial_x G(\xi,x)= G(\xi,x)\left(\begin{array}{cc} 0 & f(x) e^{-2\pi i x\xi}
\\ -\overline{f(x) e^{-2\pi i x\xi}} & 0 \end{array}\right)
\end{equation}
with the initial condition
$$G(\xi,-\infty)=\left(\begin{array}{cc} 1 & 0
\\ 0 & 1 \end{array}\right)$$
and whose final state is the $SU(2)$ nonlinear Fourier transform  of $f$
\begin{equation}\label{su2nlft}
G(\xi,\infty)=\left(\begin{array}{cc} a(\xi) & b(\xi)
\\ -\overline{b(\xi)} & \overline{a(\xi)} \end{array}\right).
\end{equation}
Here, the matrix in \eqref{su2nlft}  is in $SU(2)$
for each $\xi\in \R$, and in particular 
$$|a(\xi)|^2+|b(\xi)|^2=1.$$
The $L^p$ theory for $p<2$ in so far as discussed above is largely analogous to the case of $SU(1,1)$ but with suitable changes of
signs in the multi-linear terms.
The analogue of Plancherel however is the weaker information 
\begin{equation}\label{limita}
    \|f\|_2^2=  \lim_{\xi \to  i\infty}  2\pi i\xi \log{(a(\xi))} ,
\end{equation} 
where $\xi$ tends to $\infty$ along the imaginary axis in the upper half plane, or more generally through any ray from the origin strictly in the upper half plane. This can be shown by doing an asymptotic expansion
\[2\pi i\xi  \log(a(\xi))=  c +O(|\xi|^{-1})\]
along such a ray as in \cite{MuscaluTaoThiele} and observing  that it is only the bilinear term in the multilinear expansion of $a$ that contributes to $c$. The bilinear term of $a$, now the negative of the bilinear term of the $SU(1,1)$ case, is equal to
\begin{equation}\label{diagftentry1}
-\int_{-\infty<t_1<t_2 < \infty}
 \overline{f(t_{2})}{f(t_{1})}e^{2\pi i \xi (t_{2}-t_{1})}\, dt_{1}dt_{2}
=-\int_{s>0} \int_t
 \overline{f(t+s)}{f(t)}e^{2\pi i \xi s}\, dtds.
\end{equation}
Multiplying by $2\pi i \xi$ and using that 
\begin{equation}\label{appunit}-2\pi  i\xi  e^{2\pi i\xi s}1_{\{s>0\}}
\end{equation}
is an approximating unit converging to 
the Dirac delta  as $\xi$ tends to infinity along a ray in the upper half plane, we obtain 
\begin{equation}\label{diagftentry2}\lim _{\xi \to i\infty}
-2\pi i\xi  \int_{s>0} \int_t
 \overline{f(t+s)}{f(t)}e^{ 2\pi i\xi  s}\, dtds =\int \overline{f(t)} f(t)\, dt.
\end{equation}
This shows \eqref{limita}.

In the $SU(1,1)$ setting, where $\log(a)$ 
has an analytic extension to the upper half plane, one can 
use a contour integral over a large semicircle in the upper half plane to express the analogue of the limit \eqref{limita} by an
integral as in 
\eqref{eq:int_Planch}.  Here, in the $SU(2)$ case,  $\log(a)$ is 
in general not
analytic in the upper half plane due to zeros of $a$ and one cannot as easily express the limit by an integral. Instead, one resorts to tools such as
factorization into inner and outer functions \cite{garnett}.

\section{Nonlinear Fourier series}

Passing from functions on $\R$
to sequences $F$ on $\Z$, the Fourier transform, which we now call Fourier series, 
no longer lives on $\R$ but on the unit circle $\mathbb{T}:=\{z\in \C: |z|=1\}$. 
We slightly  misuse the notion of Fourier series here,
usually this notion is reserved  for the inverse of the map that we call Fourier series here.
 
There are nonlinear Fourier series with values in $SU(1,1)$, this is discussed in \cite{TaoThiele2012}, and  nonlinear Fourier series with values in $SU(2)$ discussed
in \cite{tsai}.
We focus here on the SU(2) model, which is relevant to the QSP model  in Theorem \ref{main}.

The linear Fourier series of a sequence $F=(F_n)_{n\in \Z}$ with finite support is defined as
$$\widehat{F}(z)=\sum_{n\in \Z} F_nz^n .$$
The analogy with the Fourier transform becomes apparent when writing $z=e^{-2\pi i \xi}$ for some 
$\xi\in \R$. Indeed, if we define a measure $f$ on the real line as 
$$f(x)=\sum_{n\in \Z} F_n\delta(x-n),$$
then
$$\widehat{f}(\xi)=\int_\R \sum_{n\in \Z} F_n\delta(x-n)e^{-2\pi i \xi x}\, dx
$$
\[=\int_\R \sum_{n\in \Z} F_n\delta(x-n)e^{-2\pi i \xi n}\, dx= \sum_{n\in \Z} F_ne^{-2\pi i \xi n}=\widehat{F}(z).\]
The nonlinear analog becomes an ordered product of matrices
described below.
We will be interested in meromorphic extensions beyond the 
circle $\T$, hence we consider the Riemann sphere $\C\cup \{\infty\}$
where $\infty$ is the reciprocal of $0$. For a subset 
$\Omega$ of the Riemann sphere we define the reflected set
\begin{equation}
    \Omega^*=\{\overline{z^{-1}}:z\in \Omega\}.
\end{equation}
For a function $a$ on $\Omega$ we define $a^*$ on $\Omega^*$ 
by
\begin{equation}
    a^*(z)=\overline{a(\overline {z^{-1}})}.
\end{equation}
We note that $(\Omega^*)^*=\Omega$ and $(a^*)^*=a$.
If $z\in \mathbb{T}$, then $a^*(z)=\overline{a(z)}$. 
Define the open unit disc
\[\mathbb{D} \equiv \left \{ z \in \C ~:~ \left |z \right | < 1 \right \} ,
\]
The function $a$ is analytic on $\D^*$ precisely if $a^*$ is analytic on $\D$. We have  
\begin{equation}\label{ainftyastar0}
a(\infty)=\overline{a^*(0)}.
\end{equation}
If $a$ is analytic on $\D^*$ and continuous up to the boundary
$\T$ of $\D^*$, then we have the mean value theorem 
\begin{equation}\label{inftymeanvalue}
a(\infty)=\overline{a^*(0)}=\overline{\int_\T a^*}=\int_\T a, 
\end{equation}
where we denote by
$$\int_\T a =  \int_0^{1} a(e^{2\pi i \theta})\, d\theta$$
the mean value of $a$ on $\T$, i.e., the constant term in the Fourier expansion of $a$.

For a sequence $F:\Z\to \C$ with finite support, define
the meromorphic matrix valued function $G$ on the Riemann
sphere by the recursive equation
\begin{equation}\label{nlfsrec}
G_k(z)= G_{k-1}(z)\frac{1}{\sqrt{1+|F_k|^2}}\left(\begin{array}{cc} 1 & F_k z^{k}
\\ -\overline{F_k} z^{-k} & 1 \end{array}\right)
\end{equation}
with the initial condition
\[\lim_{k\to -\infty} G_k(z)= \left(\begin{array}{cc} 1 & 0
\\ 0 & 1 \end{array}\right) , \] 
and define the $SU(2)$ nonlinear Fourier series 
\begin{equation}\label{nlfsdef}
G(z)=\lim_{k \to \infty} G_k(z)= \left(\begin{array}{cc} a(z) & b(z)
\\ -{b^*(z)} & {a^*(z)} \end{array}\right)  .
\end{equation}
Existence of the limit as $k\to \pm \infty$ is trivial
thanks to the finite support of $F$, which makes the sequence 
$G_k(z)$ eventually constant in $k$. The matrix factors in
\eqref{nlfsrec} are in $SU(2)$
on $\T$ and hence so is their product. In particular, 
\[a(z)a^*(z)+b(z)b^*(z)=1\]
on $\T$ and as well on the Riemann sphere by analytic continuation. 

Under the analogous formal transformation as above, the nonlinear Fourier series becomes  
the $SU(2)$ nonlinear Fourier transform for the measure
$$f(x)=\sum_{n\in \Z} f_n\delta(x-n),$$
where $f_n=\arctan(|F_n|)F_n |F_n|^{-1}$. This value of $f_n$ 
arises from the model computation
\[\exp\left(\begin{array}{cc} 0 & f_0  
\\ -\overline{f_0} & 0\end{array}\right)=
\left(\begin{array}{cc} \cos |f_0| &  {f_0}{|f_0|^{-1}}\sin |f_0|
\\ -{\overline{f_0}}{|f_0|^{-1}}\sin |f_0| & \cos |f_0|  \end{array}\right)
\]
\[=\cos|f_0| \left(\begin{array}{cc} 1 &  {f_0}{|f_0|^{-1}}\tan |f_0|
\\ -{\overline{f_0}}{|f_0|^{-1}}\tan |f_0| & 1  \end{array}\right)
=\frac 1{1+|F_0|^2}\left(\begin{array}{cc} 1 & F_0
\\ -\overline{F_0}  & 1 \end{array}\right).
\]

We write the $SU(2)$ nonlinear Fourier series of the sequence $F$ on $\Z$ as
\[{\overbrace {F}}:=(a,b)\]
with $a$ and $b$ as defined in \eqref{nlfsdef}. We identify the row vector $(a,b)$ with the matrix function as in \eqref{nlfsdef}. In particular, we write the product
\[(a,b)(c,d)=(ac-bd^*, ad+bc^*).\]

We  describe some properties of the nonlinear $SU(2)$ Fourier
series, following  \cite{tsai}
and the analogous arguments in \cite{TaoThiele2012}.
The first theorem describes some basic transformation properties analogous to transformation properties of the linear Fourier series. To better understand the analogy, recall from the analogous discussion of the nonlinear Fourier transform that the first order approximations of $a$ and $b$ are one and the linear Fourier series, respectively.

\begin{theorem}\label{SU2basictransformations}
Let $F, H$ be complex valued finitely supported sequences on $\Z$ and let
\[\overbrace{F}(z)=(a,b).\]
If all entries of $F$ except possibly the zeroth entry vanish, then 
\begin{equation}
    (a(z),b(z))=(1+|F_0|^2)^{-\frac 12}(1, F_0).
\end{equation}
If $H_n=F_{n-1}$, then
\begin{equation}\label{eq:potentialshift}
    \overbrace{H}(z)=(a(z),zb(z)).
\end{equation}
If the support of $F$ is entirely to the left of the support of $H$, then
\begin{equation}\label{leftright}
\overbrace{F+H}=\overbrace{F} \overbrace{H}.
\end{equation}
If $|c|=1$, then
\begin{equation}\label{eq:potentialmodulate}
    \overbrace {cF}=(a,cb).
\end{equation}
If $H_n=F_{-n}$, then
\begin{equation}\label{eq:potentialreflection}
    \overbrace{H}(z)=(a^*(z^{-1}), b(z^{-1})).
\end{equation}
If $H_n=\overline{F_n}$, then
\begin{equation}\label{eq:potentialconjugate}
    \overbrace{H}(z)=(a^*(z^{-1}), b^*(z^{-1})).
\end{equation}
\end{theorem}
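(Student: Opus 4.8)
The plan is to derive all six identities from the recursion \eqref{nlfsrec}. Write $M_k(z)$ for the $k$-th matrix factor there, so that for finitely supported $F$ one has the finite ordered product $G(z)=\cdots M_{-1}(z)M_0(z)M_1(z)\cdots$, with all but finitely many factors equal to $\Id$. Since $F$ is finitely supported, $G$ is a matrix of Laurent polynomials, so any identity among its entries that holds on $\T$ holds identically, and we may pass freely between the $SU(2)$ structure on $\T$ and the meromorphic continuation; in particular, after establishing each identity it only remains to check that the right-hand matrix still has the special form $\left(\begin{smallmatrix}A & B\\ -B^{*} & A^{*}\end{smallmatrix}\right)$, which follows formally from $(a^{*})^{*}=a$. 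First I would dispose of the two ``atomic'' statements. If only the zeroth entry of $F$ is nonzero, then $M_k=\Id$ for $k\neq 0$ and $M_0=(1+|F_0|^2)^{-1/2}\left(\begin{smallmatrix}1 & F_0\\ -\overline{F_0} & 1\end{smallmatrix}\right)$, and reading off the top row gives the first claim. For \eqref{leftright}, the hypothesis that the support of $F$ lies entirely to the left of that of $H$ means precisely that in the ordered product computing $\overbrace{F+H}$ every factor coming from $F$ precedes every factor coming from $H$; hence $\overbrace{F+H}=\overbrace{F}\,\overbrace{H}$ as a product of matrices, and since the displayed row-vector multiplication $(a,b)(c,d)=(ac-bd^{*},ad+bc^{*})$ is exactly the top row of the product of the two associated $SU(2)$-matrices, this is the claim.

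For \eqref{eq:potentialshift} and \eqref{eq:potentialmodulate} I would use conjugation by a diagonal matrix together with telescoping of the ordered product. For the shift $H_n=F_{n-1}$, a direct computation from \eqref{nlfsrec} shows $M_k^{H}(z)=\mathrm{diag}(1,z^{-1})\,M_{k-1}^{F}(z)\,\mathrm{diag}(1,z)$, and the adjacent $\mathrm{diag}(1,z)\,\mathrm{diag}(1,z^{-1})$ factors cancel in the ordered product, giving $G^{H}(z)=\mathrm{diag}(1,z^{-1})\,G^{F}(z)\,\mathrm{diag}(1,z)$; evaluating this and using $(zb)^{*}=z^{-1}b^{*}$ yields $\overbrace{H}=(a,zb)$. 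For modulation by $c$ with $|c|=1$, the same scheme works with conjugation by $\mathrm{diag}(c^{1/2},c^{-1/2})$ --- any branch of the square root will do, as it cancels in the end --- using $c^{-1}=\overline{c}$ and $(cb)^{*}=\overline{c}\,b^{*}$, giving $\overbrace{cF}=(a,cb)$.

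The reflection and conjugation identities involve, in addition, reversal of the product order and/or entrywise complex conjugation. If $H_n=\overline{F_n}$, then entrywise $M_n^{H}(z)=\overline{M_n^{F}(\overline z)}$, hence $G^{H}(z)=\overline{G^{F}(\overline z)}$; rewriting via the identity $\overline{g(\overline z)}=g^{*}(z^{-1})$ (valid for any $g$), together with $(a^{*})^{*}=a$ and $(b^{*})^{*}=b$, produces \eqref{eq:potentialconjugate}. If $H_n=F_{-n}$, then $M_n^{H}(z)=M_{-n}^{F}(z^{-1})$, so $G^{H}(z)$ is the \emph{order-reversed} product $\cdots M_1^{F}(z^{-1})M_0^{F}(z^{-1})M_{-1}^{F}(z^{-1})\cdots$; combining the elementary identity $K\,(M_k^{F}(w))^{T}K=M_k^{F}(w)$, where $K=\left(\begin{smallmatrix}0&1\\1&0\end{smallmatrix}\right)$, with the fact that transposition reverses products, one telescopes the $K$'s to get $G^{H}(z)=K\,(G^{F}(z^{-1}))^{T}K$, and reading off the entries gives \eqref{eq:potentialreflection}.

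I expect no serious obstacle here: every step is a short direct computation with the matrix factors $M_k$, and the finite support makes all limit/continuation issues trivial. The only place requiring care --- and the one genuine (if minor) source of sign or inversion errors --- is the bookkeeping in the last two identities, where one must keep straight the interplay of the $*$-operation, the inversion $z\mapsto z^{-1}$, and ordinary complex conjugation, and in particular verify at the end that each output matrix again has the form $\left(\begin{smallmatrix}A & B\\ -B^{*} & A^{*}\end{smallmatrix}\right)$.
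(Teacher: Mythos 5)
Your proof is correct, and since the paper itself does not spell out a proof of this theorem (it only cites \cite{tsai} and \cite{TaoThiele2012}), your direct computation from the recursion \eqref{nlfsrec} is exactly the standard argument. All six identities check out: the two atomic cases are immediate; shift and modulation follow from the telescoping conjugation by $\mathrm{diag}(1,z^{-1})$ and $\mathrm{diag}(c^{1/2},c^{-1/2})$ respectively (with the right-hand $\mathrm{diag}$ cancelling the next factor's left-hand one); conjugation uses the entrywise identity $M_n^H(z)=\overline{M_n^F(\overline z)}$ and the fact that entrywise complex conjugation is a ring homomorphism; and reflection correctly handles the product-order reversal via the anti-diagonal involution $K$, using $M_k^F(w)=K\,(M_k^F(w))^T K$ and that transposition is an antihomomorphism. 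The only small thing worth making explicit, as you flag, is the final consistency check that each resulting matrix really has the form $\left(\begin{smallmatrix}A & B\\ -B^{*} & A^{*}\end{smallmatrix}\right)$ so one can read off $\overbrace{H}$ from the top row; this verification is routine using $a^{*}(z^{-1})=\overline{a(\overline z)}$ and $(a^*)^*=a$, and you are aware it must be done. No gaps.
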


Note that the
properties in Theorem \ref{SU2basictransformations} are sufficient to uniquely determine
the map from $F$ to $\overbrace{F}$.
The next theorem describes 
the range of this map on the space  of sequences with finite support.
Let $l(M,N)$ be the space of all complex valued sequences $F$
on $\Z$ which are supported on the interval $M\le k\le N$
in the strict sense that $F(M)\neq 0$ and $F(N)\neq 0$.

\begin{theorem}\label{finitethm}

Let $M\le N$. The $SU(2)$ nonlinear Fourier series maps
$l(M,N)$ bijectively to the space of pairs $(a,b)$  such that
$b$ is the linear Fourier series of a sequence in $l(M,N)$ and
$a$ is the linear Fourier series of a sequence in $l(M-N,0)$
with $0<a(\infty)$ and 
\begin{equation}\label{aastarbbstar}
     aa^*+bb^*=1.
\end{equation}
Moreover, we have the identity
\begin{equation}\label{finiteplancherel}
a(\infty)=\prod_{n\in \Z} (1+|F_n|^2)^{-1/2} .
\end{equation}
\end{theorem}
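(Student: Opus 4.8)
The plan is to prove Theorem \ref{finitethm} by induction on $N-M$, using the basic transformation properties of Theorem \ref{SU2basictransformations} as the induction step, most crucially the multiplicativity \eqref{leftright}. First I would reduce to the normalized case $M=0$ by invoking the shift rule \eqref{eq:potentialshift}: shifting the potential by one step multiplies $b$ by $z$ and leaves $a$ (hence $a(\infty)$) unchanged, so it suffices to establish the claim for potentials supported in $[0,N]$ and then translate. The base case $N=M$ is exactly the first displayed identity in Theorem \ref{SU2basictransformations}: a single nonzero entry $F_0$ yields $(a,b)=(1+|F_0|^2)^{-1/2}(1,F_0)$, so $a$ is the constant $(1+|F_0|^2)^{-1/2}>0$, which is the linear Fourier series of a sequence in $l(0,0)$, $b$ is the linear Fourier series of a sequence in $l(0,0)$, the identity $aa^*+bb^*=1$ holds, and \eqref{finiteplancherel} reads $a(\infty)=(1+|F_0|^2)^{-1/2}$.

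For the inductive step, given $F\in l(0,N)$ with $N\ge 1$, I would split $F=F'+H$ where $F'$ agrees with $F$ on $[0,N-1]$ and $H$ is the single-entry potential at $N$; strictly speaking one must first arrange that $F'$ is genuinely supported with a nonzero entry at its right end, but one can always peel off the rightmost nonzero entry instead, so write $F=F'+H$ with $F'$ supported to the left of $H$ and $H$ a single nonzero entry at position $N$. By \eqref{leftright}, $\overbrace{F}=\overbrace{F'}\,\overbrace{H}$, and by the base case $\overbrace{H}=(1+|F_N|^2)^{-1/2}(1,F_N z^N)$ (using the shift rule to move the single entry to position $N$). Writing $\overbrace{F'}=(a',b')$ and using the product formula $(a',b')(c,d)=(a'c-b'd^*,a'd+b'c^*)$, one computes $a=(1+|F_N|^2)^{-1/2}(a'-b'\overline{F_N}z^{-N})$ and $b=(1+|F_N|^2)^{-1/2}(a'F_Nz^N+b')$. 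By the induction hypothesis $a'$ is the Fourier series of a sequence in $l(-(N-1),0)$ with $a'(\infty)>0$ and $b'$ of a sequence in $l(0,N-1)$; one then checks that $b'$ has degrees in $[0,N-1]$ and $a'F_Nz^N$ has degrees in $[N-(N-1),N]=[1,N]$... wait, more carefully: $a'$ has frequencies in $[-(N-1),0]$, so $a'F_Nz^N$ has frequencies in $[1,N]$, while $b'$ has frequencies in $[0,N-1]$, so their sum has frequencies exactly in $[0,N]$ with the top coefficient equal to $F_N a'(\text{coeff of }z^{-(N-1)})\neq 0$ and the bottom coefficient equal to the bottom coefficient of $b'$, giving $b\in\widehat{l(0,N)}$; similarly $a$ has frequencies in $[-N,0]$ with $a(\infty)=(1+|F_N|^2)^{-1/2}a'(\infty)>0$, whence \eqref{finiteplancherel} follows by the induction hypothesis applied to $a'(\infty)$. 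The relation $aa^*+bb^*=1$ is inherited from the $SU(2)$ structure, or equivalently from the fact that both factors in the product satisfy it.

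For surjectivity and injectivity, I would argue that the layer-stripping (factorization) procedure runs in reverse: given an admissible pair $(a,b)$ with $b$ of degree exactly $[0,N]$, one reads off the top frequency coefficients of $a$ and $b$ to recover $F_N$ (the ratio of leading coefficients, suitably normalized using $aa^*+bb^*=1$ to fix the modulus via $(1+|F_N|^2)^{-1/2}$), then multiplies $(a,b)$ on the right by the inverse of the elementary $SU(2)$ factor $(1+|F_N|^2)^{-1/2}(1,F_Nz^N)$ to obtain an admissible pair of strictly smaller degree, and iterates. Checking that this stripped pair is again admissible — in particular that the degree of $b$ genuinely drops and that $a(\infty)$ stays positive — is the content that makes the induction close, and uniqueness at each step gives injectivity. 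I expect the main obstacle to be the bookkeeping of the frequency supports at the endpoints: one must verify that no unexpected cancellation occurs in the leading coefficient of $b$ and the trailing coefficient of $a$, so that the strict support conditions defining $l(M,N)$ are preserved at every stage; this is where the hypothesis $a(\infty)>0$ and the normalization \eqref{aastarbbstar} do essential work.
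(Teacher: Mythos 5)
The paper does not prove Theorem~\ref{finitethm}; it states it as a known result following \cite{tsai} and \cite{TaoThiele2012}. So there is no in-paper argument to compare against. Your approach — reduce to $M=0$ by the shift rule, build up by peeling off the rightmost nonzero entry and using \eqref{leftright}, and go the other way by layer stripping — is the standard one, and the forward direction of your sketch is essentially sound.

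There are two issues. First, a small bookkeeping slip in the forward direction: the $z^N$ coefficient of $a'F_N z^N$ is $F_N$ times the $z^0$ coefficient of $a'$, i.e.\ $F_N\,a'(\infty)$, not $F_N$ times the $z^{-(N-1)}$ coefficient of $a'$. The conclusion that it is nonzero is still right, since $a'(\infty)>0$ by induction, but the justification as written is wrong. (You should also treat the case where, after removing the rightmost entry, $F'$ lies in $l(0,N')$ for some $N'<N-1$; the estimate works identically with $N'$ in place of $N-1$.)

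Second, and more substantively, the surjectivity/injectivity half is only a plan; you name the hinge — that $aa^*+bb^*=1$ and $a(\infty)>0$ do essential work — but you do not turn the crank, and your description of what must be checked is slightly off. What actually needs verifying after stripping with $F_N=b_N/a(\infty)$ is that (i) the $z^N$ coefficient of $b'=(1+|F_N|^2)^{-1/2}(-aF_Nz^N+b)$ vanishes (this is by design), (ii) the $z^{-N}$ coefficient of $a'=(1+|F_N|^2)^{-1/2}(a+b\overline{F_N}z^{-N})$ also vanishes — this is not automatic but follows by reading off the $z^{-N}$ coefficient of $aa^*+bb^*=1$, which gives $a_{-N}a(\infty)+b_0\overline{b_N}=0$, hence $a_{-N}=-b_0\overline{F_N}$ — and (iii) the \emph{other} endpoint coefficients survive: $b'(0)=(1+|F_N|^2)^{1/2}b_0\neq 0$ and $a'(\infty)=(1+|F_N|^2)^{-1/2}\bigl(a(\infty)+|b_N|^2/a(\infty)\bigr)>0$. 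Only with (ii) and (iii) in hand, plus the easy observation (from the extreme Fourier coefficients of $a'a'^*+b'b'^*=1$) that $a'$ and $b'$ must drop to supports of matching length, does the stripped pair land back in the admissible class and the induction close. Your phrase about avoiding ``unexpected cancellation in the leading coefficient of $b$ and trailing coefficient of $a$'' points in the opposite direction: those coefficients are supposed to cancel; it is $b'(0)$ and $a'(\infty)$ that must not.
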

Note that \eqref{aastarbbstar} implies that $a$ and $b$ have no common zeros in the Riemann sphere. Moreover,
$|a|$ and $|b|$ are bounded by $1$ on $\T$ and $a(\infty)\le 1$
with equality only if $b=0$ and $F=0$.

Note that if $a$ does not have zeros in $\mathbb{D}^*$, then
$\log(a)$ is analytic in $\mathbb{D}^*$ and the real part of
\eqref{inftymeanvalue} gives
\begin{equation}\label{acontour}
    \log |a(\infty)| = \int_\T \log |a| .
\end{equation}
Multiplying by $-2$ and using \eqref{finiteplancherel} and \eqref{aastarbbstar}, we obtain
\begin{equation}\label{outerplancherel}
    \sum_{n\in \Z} \log (1+|F_n|^2) = -\int_\T \log (1- |b|^2) 
\end{equation}
in analogy to \eqref{eq:int_Planch}. If $a$ has zeros in $\mathbb{D}^*$,
then we have only the inequality
\begin{equation}
    \sum_{n\in \Z} \log (1+|F_n|^2) \ge  -\int_\T \log (1- |b|^2),
\end{equation}
which can be obtained by applying the mean value theorem
\eqref{inftymeanvalue}  to the logarithm of the quotient of
$a^*$ divided by the Blaschke product of its zeros \cite{garnett}.

As $\log(1+x)$ is comparable to $x$ for small $x$, under suitable pointwise smallness assumptions on $F$ and $b$ 
and absence of zeros of $a$ in $\D^*$ we obtain from \eqref{outerplancherel} that
$\|F\|_{l^2(\Z)}$ and $\|b\|_{L^2(\T)}$ are comparable, in analogy to the linear situation.


\section{Quantum signal processing for finite sequences}\label{section:QSP_to_NLFT_finite}

In this section, we relate at the level of finite sequences the nonlinear Fourier series 
to QSP.

Let $\Psi$ be in $\mathbf{P}$ as in Theorem \ref{main}.
Let $F_n$ for $n\in \Z$ be  defined by 
\begin{equation}
    F_n=i\tan(\psi_{|n|})
\end{equation}
and note that $(F_n)$ is even and purely imaginary, that is, for all $n\in \Z$,
\[
F_{-n}=F_n=-\overline{F_n}.
\]
For $d\ge 0$, let  
$G_{d}$ 
be the nonlinear Fourier series of the truncated sequence 
 \[
 \left ( F_n 1_{ \{-d\le n\le d \}} \right ) .
 \]
We may write $G_d(z)$ for   $z\in \T$,  using the symmetries of $(F_n)$, recursively as
\begin{equation}\label{g0}
G_{0}(z)=\frac 1{\sqrt{1-F_0^2}} 
\begin{pmatrix}
    1 & F_0 \\
    F_0 & 1
\end{pmatrix},
\end{equation}
\begin{equation}\label{gd}
G_{d}(z)=\frac 1{{1-F_d^2}} 
\begin{pmatrix}
    1 & F_dz^{-d} \\
    F_d z^{d} & 1
\end{pmatrix} 
G_{d-1}(z)
\begin{pmatrix}
    1 & F_dz^d \\
    F_d z^{-d} & 1
\end{pmatrix} .
\end{equation}

Define $X$ and $M$ and recall $Z$ as follows:
\begin{equation}\label{eq:def_X}
         X = \begin{pmatrix}
        0 & 1 \\ 1 & 0
    \end{pmatrix}, \quad
M = 2^{- \frac{1}{2}}\begin{pmatrix}
    1 & 1 \\
    1 & -1
\end{pmatrix} , \quad 
Z = \begin{pmatrix}
        1 & 0 \\ 0 & -1
    \end{pmatrix} .
\end{equation}
Observe that $M^2$ is the identity matrix, that
\begin{equation}
 XM  = 2^{- \frac{1}{2}}
\begin{pmatrix}
    1 & -1 \\
    1 & 1
\end{pmatrix} =MZ,
\end{equation}
and hence also $MZM=X$ and $MXM=Z$.

\begin{lemma}\label{qspnlfs}
 For $x\in [0,1]$ let $\theta$ be the unique number in
$ [0,\frac{\pi}2]$ so that $\cos \theta=x$
and set $z=e^{2 i \theta}$.
We have for every $d\ge 0$ and $U_d$ as in Theorem \ref{main},
$$M U_d(\Psi,x) M= \begin{pmatrix}
e^{id\theta} & 0\\
0 & e^{-id\theta}
\end{pmatrix}
G_{d}(z)
\begin{pmatrix}
e^{id\theta} & 0\\
0 & e^{-id\theta}
\end{pmatrix}.$$
\end{lemma}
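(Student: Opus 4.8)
The plan is to prove the identity by induction on $d$, unwinding both sides of the recursion simultaneously and matching factors via the conjugation identities $MZM = X$, $MXM = Z$, $M^2 = \Id$.

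First I would set up the base case $d = 0$. Here $U_0(\Psi,x) = e^{i\psi_0 Z}$, and since $e^{i\psi_0 Z} = \cos\psi_0\, \Id + i\sin\psi_0\, Z$, conjugating by $M$ replaces $Z$ by $X$, giving $M U_0 M = \cos\psi_0\, \Id + i\sin\psi_0\, X = e^{i\psi_0 X}$. On the other side, $F_0 = i\tan\psi_0$, so $1 - F_0^2 = 1 + \tan^2\psi_0 = \cos^{-2}\psi_0$, and
\[
G_0(z) = \cos\psi_0 \begin{pmatrix} 1 & i\tan\psi_0 \\ i\tan\psi_0 & 1 \end{pmatrix} = \begin{pmatrix} \cos\psi_0 & i\sin\psi_0 \\ i\sin\psi_0 & \cos\psi_0 \end{pmatrix} = e^{i\psi_0 X}.
\]
Since the diagonal phase matrices for $d = 0$ are the identity, both sides equal $e^{i\psi_0 X}$, establishing the base case.

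For the inductive step, I would first translate the two "edge" factors $W(x)$ in \eqref{ud} into the NLFT language. The key computational lemma is that $M W(x) M$ is diagonalizable in a way compatible with the phase matrix $\mathrm{diag}(e^{i\theta}, e^{-i\theta})$: writing $x = \cos\theta$, one checks that $W(x) = \cos\theta\, \Id + i\sin\theta\, X$ (reading off \eqref{eq:def_Wonly}), hence $M W(x) M = \cos\theta\, \Id + i\sin\theta\, Z = \mathrm{diag}(e^{i\theta}, e^{-i\theta})$ — a pure phase matrix. Similarly $M e^{i\psi_d Z} M = e^{i\psi_d X} = \cos\psi_d\, \Id + i\sin\psi_d\, X$, which is exactly $\cos\psi_d$ times $\begin{pmatrix} 1 & F_d \\ F_d & 1\end{pmatrix}$ up to the identification $F_d = i\tan\psi_d$. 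Now conjugate the entire recursion \eqref{ud} by $M$: inserting $M^2 = \Id$ between each pair of factors,
\[
M U_d M = (M e^{i\psi_d Z} M)(M W(x) M)(M U_{d-1} M)(M W(x) M)(M e^{i\psi_d Z} M),
\]
and substitute the computed forms. The two middle-adjacent $MW(x)M = \mathrm{diag}(e^{i\theta}, e^{-i\theta})$ factors, together with the phase factors $\mathrm{diag}(e^{\pm i(d-1)\theta})$ coming from the inductive hypothesis applied to $M U_{d-1} M$, combine to $\mathrm{diag}(e^{\pm i d\theta})$ — this uses $z = e^{2i\theta}$ and $z^{\pm d} = e^{\pm 2id\theta}$ to match the powers $z^{\pm d}$ appearing in \eqref{gd}. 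The outer $M e^{i\psi_d Z} M = \cos\psi_d\begin{pmatrix} 1 & F_d \\ F_d & 1\end{pmatrix}$ factors then have to be pushed past the new phase matrices: conjugating $\begin{pmatrix} 1 & F_d \\ F_d & 1\end{pmatrix}$ by $\mathrm{diag}(e^{id\theta}, e^{-id\theta})$ produces the off-diagonal phases $z^{\pm d}$ exactly as in the two edge factors of \eqref{gd}, and the scalar $(\cos\psi_d)^2 = (1 - F_d^2)^{-1}$ matches the prefactor there.

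The main obstacle I anticipate is the careful bookkeeping of which phase matrix absorbs which: there are four $W(x)$/phase factors and the inductive hypothesis contributes two more diagonal factors, so one must be precise about the order in which they are commuted and combined, and about the fact that diagonal matrices commute so that the regrouping $e^{i\theta(d-1)} \cdot e^{i\theta} = e^{id\theta}$ (and the analogous move pulling a phase through $\begin{pmatrix} 1 & F_d \\ F_d & 1\end{pmatrix}$ to generate $z^{\pm d}$) lands exactly on \eqref{gd}. A clean way to organize this is to first prove the auxiliary identity $M W(x) M = \mathrm{diag}(e^{i\theta}, e^{-i\theta})$ and $M e^{i\psi Z} M = \cos\psi \begin{pmatrix} 1 & i\tan\psi \\ i\tan\psi & 1\end{pmatrix}$ as standalone facts, then substitute mechanically; the remaining verification is a routine $2\times 2$ matrix computation with diagonal conjugations, and no genuine difficulty beyond index management should arise.
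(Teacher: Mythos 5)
Your proposal is correct and follows essentially the same route as the paper: induction on $d$, with the auxiliary identities $MW(x)M=e^{i\theta Z}=\operatorname{diag}(e^{i\theta},e^{-i\theta})$ and $Me^{i\psi_d Z}M=e^{i\psi_d X}=(1-F_d^2)^{-1/2}\begin{pmatrix}1&F_d\\ F_d&1\end{pmatrix}$, followed by commuting the diagonal phase matrices past the $F_d$-blocks to generate the $z^{\pm d}$ factors of \eqref{gd}. The only detail to pin down in writing it out is that the left and right edge factors acquire opposite phase patterns ($F_dz^{-d}$ versus $F_dz^{d}$ in the upper-right entry), exactly as displayed in \eqref{gd}.
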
 
Note that the factor two in the exponent of the definition of $z$ differs from the convention in \cite{Linlin}.

\begin{proof}

We  prove the Lemma by induction on $d$.
For  $k\in \N$, we have
\begin{equation}\label{MeZM}
Me^{i\psi_k Z}M=e^{i\psi_k MZM}
\end{equation}
\begin{equation}\label{ecossin}
=e^{i\psi_k X}=\begin{pmatrix}
\cos(\psi_k) & i\sin(\psi_k)\\
i \sin(\psi_k) & \cos(\psi_k)
\end{pmatrix}
={\cos(\psi_k)}\begin{pmatrix}
1 & i\tan(\psi_k)\\
i \tan(\psi_k) &1
\end{pmatrix}
\end{equation}
\begin{equation}
=\frac 1{\sqrt{1+\tan(\psi_k)^2}}\begin{pmatrix}
1 & i\tan(\psi_k)\\
i \tan(\psi_k) &1
\end{pmatrix}
=\frac 1{\sqrt{1-F_k^2}}\begin{pmatrix}
1 & F_k\\
F_k &1
\end{pmatrix}.
\end{equation}
Applying this with $k=0$ and using \eqref{g0} and \eqref{u0} in the form
\begin{equation}
    Me^{i\psi_0 Z}M=M U_0(\Psi,x) M
\end{equation}
verifies the base case $d=0$ of the induction.

Now let $d\ge 1$ and assume  the induction hypothesis is true for $d-1$.
Noting that similarly as in \eqref{ecossin},
\[
W(x) = e^{i \arccos (x)X} ,
\]
we have 
\begin{equation}MW(x)M =e^{i\arccos(x)MXM}=e^{i\theta Z}=
\begin{pmatrix}
e^{i\theta} & 0\\
0 & e^{-i\theta}
\end{pmatrix}.
\end{equation} 
Hence, with \eqref{MeZM},
\begin{equation}
    \sqrt{1-F_d^2}MW(x)e^{i\psi_d Z}
    =\begin{pmatrix}
e^{i\theta} & 0\\
0 & e^{-i\theta}
\end{pmatrix}\begin{pmatrix}
1 & F_d\\
F_d &1
\end{pmatrix}M
\end{equation}
and 
\begin{equation}
    \sqrt{1-F_d^2} e^{i\psi_d Z}W(x)M
    = M\begin{pmatrix}
1 & F_d\\
F_d &1
\end{pmatrix}\begin{pmatrix}
e^{i\theta} & 0\\
0 & e^{-i\theta}
\end{pmatrix}.
\end{equation}
We obtain with the recursive definition \eqref{ud} and induction hypothesis,
\begin{equation} (1-F_d^2)M U_d(\Psi,x) M
\end{equation}
\begin{equation*} =(1-F_d^2)M e^{i\psi_{d}Z} W(x)M(M U_{d-1}M) M(\Psi,x)W(x)e^{\psi_d Z} M
\end{equation*}
\begin{equation*}
=\begin{pmatrix}
1 & F_d\\
F_d &1
\end{pmatrix}\begin{pmatrix}
e^{id\theta} & 0\\
0 & e^{-id\theta}
\end{pmatrix}G_{d-1}(z)
\begin{pmatrix}
e^{id\theta} & 0\\
0 & e^{-id\theta}
\end{pmatrix}
\begin{pmatrix}
1 & F_d\\
F_d &1
\end{pmatrix}
\end{equation*}
\begin{equation*}
=\begin{pmatrix}
e^{id\theta} & 0\\
0 & e^{-id\theta}
\end{pmatrix}
\begin{pmatrix}
1 & F_dz^{-d}\\
F_d z^{d} &1
\end{pmatrix}
G_{d-1}(z)
\begin{pmatrix}
1 & F_d z^d\\
F_d z^{-d} &1
\end{pmatrix}
\begin{pmatrix}
e^{id\theta} & 0\\
0 & e^{-id\theta}
\end{pmatrix}
\end{equation*}
\begin{equation}
=
(1-F_d^2)
\begin{pmatrix}
e^{id\theta} & 0\\
0 & e^{-id\theta}
\end{pmatrix}
G_{d}(z)
\begin{pmatrix}
e^{id\theta} & 0\\
0 & e^{-id\theta}
\end{pmatrix}.
\end{equation}
This proves the induction step for $d$
and completes the proof of Lemma \ref{qspnlfs}.
\end{proof}

\begin{lemma}\label{upperleft}
Let $d\ge 0$ and set
$$G_d(z)=:\begin{pmatrix}
    a(z)& b(z)\\ -b^*(z) & a^*(z)
\end{pmatrix}.$$
For $x\in [0,1]$, let $\theta$ be the unique number in $[0,\frac \pi 2]$  such that $\cos \theta=x$
and set $z=e^{2 i \theta}$. We have for $d\ge 1$ and $u_d$ as in Theorem \ref{main},
$$ i\Im(u_{d}(\Psi,x))= b(z).$$    
\end{lemma}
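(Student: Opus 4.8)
The plan is to derive the identity from Lemma \ref{qspnlfs}, which already expresses $MU_d(\Psi,x)M$ in terms of $G_d(z)$ conjugated by the diagonal phase matrix $D_d := \mathrm{diag}(e^{id\theta}, e^{-id\theta})$. First I would invert that relation: since $M^2 = \Id$, we have $U_d(\Psi,x) = M D_d G_d(z) D_d M$. Writing $G_d(z) = \begin{pmatrix} a & b \\ -b^* & a^* \end{pmatrix}$ and recalling that on $\T$ one has $a^* = \bar a$ and $b^* = \bar b$, the conjugation $D_d G_d(z) D_d$ has entries $e^{2id\theta} b$ in the upper right and $-e^{-2id\theta}\bar b$ in the lower left, while the diagonal entries are simply $a$ and $\bar a$ (the phases cancel). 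Then I would compute the upper left entry of $M(\cdot)M$ explicitly using $M = 2^{-1/2}\begin{pmatrix} 1 & 1 \\ 1 & -1\end{pmatrix}$: conjugation by $M$ sends a matrix with diagonal $(\alpha, \delta)$ and off-diagonal $(\beta, \gamma)$ to one whose upper left entry is $\tfrac12(\alpha + \delta + \beta + \gamma)$.

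Carrying this out, the upper left entry $u_d(\Psi,x)$ equals $\tfrac12\big(a + \bar a + e^{2id\theta}b - e^{-2id\theta}\bar b\big) = \Re(a) + i\,\Im(e^{2id\theta} b)$ — wait, more carefully: $\tfrac12(a+\bar a) = \Re a$ is real, and $\tfrac12(e^{2id\theta}b - \overline{e^{2id\theta}b}) = i\,\Im(e^{2id\theta}b)$ is purely imaginary. Hence $\Im(u_d) = \Im(e^{2id\theta}b)$. This is not yet what we want; the missing ingredient is that for the specific symmetric, purely imaginary sequence $(F_n)$ here, the function $b$ carries extra structure. The key step is to exploit the symmetries $F_{-n} = F_n = -\overline{F_n}$ via Theorem \ref{SU2basictransformations}: the reflection relation \eqref{eq:potentialreflection} combined with the conjugation relation \eqref{eq:potentialconjugate} and the fact that $(F_n)$ is supported on $[-d,d]$ should force $b$ to satisfy a functional equation pinning down its phase — concretely, that $z^{d} b(z)$ (or some such normalization) takes purely imaginary values on $\T$, equivalently $e^{2id\theta} b(e^{2i\theta})$ is purely imaginary, so that $\Im(e^{2id\theta}b) = -i\, e^{2id\theta} b = b(z)$ after accounting for the phase, giving $i\Im(u_d) = b(z)$.

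Let me restructure: the cleanest route is probably to first establish directly from the recursion \eqref{g0}–\eqref{gd} and the symmetry of $(F_n)$ that $b(z) = e^{-2id\theta} \beta(z)$ where $\beta$ is the (ordinary) linear Fourier series of a real... no — rather, that $D_d G_d(z) D_d$ itself has the symmetric form $\begin{pmatrix} a & \tilde b \\ \tilde b & \bar a\end{pmatrix}$ with $\tilde b$ real-symmetric, mirroring the symmetric form \eqref{g0}–\eqref{gd} of $G_d$ itself. Indeed $MU_dM$ in Lemma \ref{qspnlfs} equals $D_d G_d D_d$, and one checks by the same induction as in Lemma \ref{qspnlfs} — or by the structure of \eqref{gd}, which sandwiches $G_{d-1}$ between symmetric matrices $\begin{pmatrix} 1 & F_d z^{\mp d} \\ F_d z^{\pm d} & 1\end{pmatrix}$ — that $D_d G_d D_d$ is a symmetric matrix whose off-diagonal entry $e^{2id\theta} b(e^{2i\theta})$ is real. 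Granting that, $e^{2id\theta}b$ being real and $b^* = \bar b$ on $\T$ forces $\Im(e^{2id\theta}b(z)) \cdot i$... I then read off from $u_d = \Re a + i\,\Im(e^{2id\theta}b)$ that $i\,\Im(u_d) = i\,\Im(e^{2id\theta} b)$, and since $e^{2id\theta}b$ is real this is $0$ — which is wrong, so the off-diagonal must instead be purely imaginary, $e^{2id\theta}b(z) \in i\R$, whence $i\Im(u_d) = i\,\Im(i\cdot(\text{real})) \cdot$ ... the upshot after bookkeeping the factor of $i$ is exactly $i\Im(u_d(\Psi,x)) = b(z)$.

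The main obstacle I anticipate is the sign-and-phase bookkeeping: keeping straight which of $z^{d}$, $e^{id\theta}$, $e^{2id\theta}$ appears where, and verifying that the symmetry of $(F_n)$ (purely imaginary, even) propagates through the nonlinear Fourier series recursion \eqref{gd} in the precise form that makes $e^{2id\theta}b(e^{2i\theta})$ purely imaginary. I would handle this either by a short induction parallel to the one in Lemma \ref{qspnlfs} — maintaining as hypothesis that $D_{d} G_{d}(z) D_{d}$ has the form $\begin{pmatrix} \alpha_d & i\gamma_d \\ i\gamma_d & \bar\alpha_d\end{pmatrix}$ on $\T$ with $\gamma_d$ real — or, more slickly, by invoking \eqref{eq:potentialreflection} and \eqref{eq:potentialconjugate} from Theorem \ref{SU2basictransformations} to note that the reflection-conjugation symmetry of the truncated sequence $(F_n 1_{\{|n|\le d\}})$ forces $(a(z),b(z)) = (a(z), \overline{b(z^{-1})})$ wait — one combines the two to get $b(z) = b^*(z)^{\,\cdots}$; the relation that $(F_n)$ equals its own reflection composed with conjugation composed with negation, together with \eqref{eq:potentialmodulate} for $c=-1$, should yield the needed phase constraint on $b$ directly, avoiding a second induction entirely. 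Once the phase of $b$ is pinned down, combining with $u_d = \Re a + i\Im(e^{2id\theta}b)$ finishes the proof.
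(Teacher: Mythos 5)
Your overall strategy --- invert Lemma \ref{qspnlfs} using $M^2=\Id$, read off the upper-left entry of $M(\cdot)M$ as half the sum of the four entries, and then use the symmetry of $(F_n)$ to identify the imaginary part --- is exactly the paper's route. But there is a concrete error at the first step that then sends you down a wrong path. The product $D_d\,G_d(z)\,D_d$ with $D_d=\mathrm{diag}(e^{id\theta},e^{-id\theta})$ is \emph{not} a conjugation: $D_d$ appears on both sides without an inverse, so the phases accumulate on the \emph{diagonal} entries and cancel on the \emph{off-diagonal} ones. The correct result is
$$D_d\,G_d(z)\,D_d=\begin{pmatrix} a(z)z^{d} & b(z)\\ -b^*(z) & a^*(z)z^{-d}\end{pmatrix},$$
not the matrix with $e^{2id\theta}b$ off the diagonal that you wrote. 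Consequently $u_d=\tfrac12\bigl(az^{d}+a^*z^{-d}+b-b^*\bigr)=\Re\bigl(a(z)z^{d}\bigr)+i\,\Im\bigl(b(z)\bigr)$ on $\T$, with no stray phase attached to $b$.

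Because of that error you then go hunting for a functional equation forcing $e^{2id\theta}b(e^{2i\theta})$ to be purely imaginary. That statement is false in general: the symmetries $F_{-n}=F_n=-\overline{F_n}$ force $b(z)=b(z^{-1})$ and $b$ purely imaginary on $\T$, so $z^{d}b(z)$ has a genuinely varying phase. What is actually needed --- and what the paper uses --- is only that $b$ itself is purely imaginary, a consequence of Theorem \ref{SU2basictransformations} applied to the even, purely imaginary truncated sequence, which you do correctly identify as the relevant symmetry input. With the corrected entry computation, $i\,\Im(u_d)=i\,\Im(b)=b$ follows at once. (Your fallback induction hypothesis, that $D_dG_dD_d$ has the form $\begin{pmatrix}\alpha&i\gamma\\ i\gamma&\bar\alpha\end{pmatrix}$ with $\gamma$ real, is in fact true --- but its off-diagonal entry is $b$, not $e^{2id\theta}b$.) So the idea is right; redo the matrix product and drop the phase-pinning detour.
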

\begin{proof}
We use Lemma \ref{qspnlfs} to obtain
$$ U_d(\Psi,x) = M\begin{pmatrix}
e^{id\theta} & 0\\
0 & e^{-id\theta}
\end{pmatrix}
G_{d}(z)
\begin{pmatrix}
e^{id\theta} & 0\\
0 & e^{-id\theta}
\end{pmatrix}M$$
$$  = M\begin{pmatrix}
a(z)z^d  & b(z)\\
-b^*(z)  & a^*(z)z^{-d}
\end{pmatrix}
M.$$
We then compute 
the upper left corner
\begin{equation*}u_{d}(\Psi,x)=\frac 12 
\begin{pmatrix}
1  & 1
\end{pmatrix}
\begin{pmatrix}
a(z)z^d  & b(z)\\
-b^*(z)  & a^*(z)z^{-d}
\end{pmatrix}
\begin{pmatrix}
1  \\ 1
\end{pmatrix}
\end{equation*}
\begin{equation}\label{identifyu}
=\frac 12(a(z)z^d+a^*(z)z^{-d} +b(z)-b^*(z)).
\end{equation}
As $z$ is in  $\T$, the last display becomes
\[\Re(a(z)z^d)+ i \Im (b(z)).\]
As $(F_n)$ is purely imaginary and even,
 the symmetries of the nonlinear Fourier
series imply that $b$ is also purely imaginary.
In particular, \eqref{identifyu} gives Lemma \ref{upperleft}.
\end{proof}

The above proof gives $b(z)=b(z^{-1})$ for $z\in \T$
and that $\Im (u_d(x, \Phi))$ extends to an even function 
in $x\in [-1,1]$.

We note that with this correspondence
between NLFA and QSP established, Theorem 31 and Theorem 5 in \cite{EnergylandscapeofQSP} observe a version of the comparability of $\|F\|_{l^2(\Z)}$ and $\|b\|_{L^2(\T)}$ discussed in the remarks to \eqref{outerplancherel} in the previous section.

\section{Nonlinear Fourier series of summable sequences}
While our focus in this paper is on square summable sequences, we briefly comment on the 
analytically simpler theory of nonlinear Fourier series of elements in the space  $\ell^1(\mathbb{Z})$ of absolutely summable  sequences on $\mathbb{Z}$. The linear Fourier series maps $\ell^1(\Z)$ to the space 
$C(\mathbb{T})$  of continuous functions on $\T$, a closed subspace of 
$L^\infty(\mathbb{T})$. 
The actual
image of 
$\ell^1(\Z)$
under the linear Fourier series is the Wiener algebra  $A(\T)$.
Similar mapping properties are true for the nonlinear Fourier series. 

We first recall the Theorem below of \cite{tsai} for the $L^\infty$ bounds. Consider a metric on $SU(2)$ induced by the operator norm, i.e.,
\begin{equation}
    \dist (T,T') := \| T - T'\|_{op}
\end{equation}
and let $C (\mathbb{T}, SU(2))$ be the metric space of all continuous $G: \mathbb{T}\to SU(2)$ 
with metric defined by
\begin{equation}
    \dist (G,G') :=\sup_{z\in \T} \dist (G(z), G'(z)).
\end{equation}
\begin{theorem}[{\cite[Theorem 2.5]{tsai}}]
\label{tsail1}
    The $SU(2)$ nonlinear Fourier series extends uniquely to a Lipschitz map $\ell^1(\Z)\to C(\mathbb{T}, SU(2))$ with Lipschitz constant at most $3$. 
\end{theorem}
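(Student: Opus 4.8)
The plan is to prove this by a density argument, leveraging that finitely supported sequences are dense in $\ell^1(\Z)$ and that $C(\T,SU(2))$ is a complete metric space. First I would establish the Lipschitz estimate $\dist(\overbrace{F},\overbrace{H}) \le 3\|F-H\|_{\ell^1}$ for \emph{finitely supported} $F,H$; then the map extends uniquely to all of $\ell^1(\Z)$ by uniform continuity, and the extension automatically satisfies the same Lipschitz bound by continuity of the metric. The target space $C(\T,SU(2))$ is closed in $C(\T,\C^{2\times2})$ since $SU(2)$ is closed, so the limit of nonlinear Fourier series of truncations genuinely lands in $C(\T,SU(2))$. The bulk of the work is thus the finite-dimensional Lipschitz bound.

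For the Lipschitz bound on finitely supported sequences, the key tool is the factorization \eqref{leftright}: writing $F = \sum_n F_n e_n$ where $e_n$ is the sequence supported at $n$, the supports are ordered, so $\overbrace{F} = \prod_n \overbrace{F_n e_n}$ (ordered product), and by \eqref{eq:potentialshift} each factor $\overbrace{F_n e_n}(z)$ is a $z$-conjugate of the single-site matrix $(1+|F_n|^2)^{-1/2}\begin{pmatrix} 1 & F_n \\ -\overline{F_n} & 1\end{pmatrix}$, which lies in $SU(2)$ and hence has operator norm $1$. I would then estimate $\|\overbrace{F}(z) - \overbrace{H}(z)\|_{op}$ by a standard telescoping identity for products of matrices: if $A = \prod A_n$ and $B = \prod B_n$ with all $\|A_n\|, \|B_n\| \le 1$, then $\|A - B\| \le \sum_n \|A_n - B_n\|$. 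This reduces everything to bounding $\|\overbrace{F_n e_n}(z) - \overbrace{H_n e_n}(z)\|_{op}$ by a constant times $|F_n - H_n|$, uniformly in $z \in \T$, since the $z$-dependence is only a unitary conjugation by $\mathrm{diag}(z^{n/2}, z^{-n/2})$ (or by working on $\T$ directly with $z^n$) which preserves the operator norm.

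The remaining single-site estimate is the computation of the Lipschitz constant of $w \mapsto (1+|w|^2)^{-1/2}\begin{pmatrix} 1 & w \\ -\overline{w} & 1\end{pmatrix}$ from $\C$ into $SU(2)$ with the operator norm. One writes this matrix as $\exp\begin{pmatrix} 0 & v \\ -\overline{v} & 0\end{pmatrix}$ with $v = \arctan(|w|)\, w/|w|$, uses that the exponential map on the Lie algebra is $1$-Lipschitz in operator norm (the matrix $\begin{pmatrix} 0 & v \\ -\overline{v} & 0\end{pmatrix}$ being skew-Hermitian with operator norm $|v|$), and checks that $w \mapsto \arctan(|w|)\,w/|w|$ has Lipschitz constant at most... well, at most $1$ in fact; but the crude exponential bound combined with a careful accounting of the $(1+|w|^2)^{-1/2}$ normalization should yield the constant $3$ asserted. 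I expect the main obstacle to be precisely this last bookkeeping: getting the sharp-ish constant $3$ (rather than some larger absolute constant) requires a slightly delicate estimate of $\|\exp(A) - \exp(B)\|_{op}$ for skew-Hermitian $A,B$ together with the Lipschitz behavior of the map $w \mapsto v(w)$, and possibly a more efficient argument than the naive triangle-inequality-through-the-telescoping-sum — for instance controlling the derivative of the product directly. Since this is \cite[Theorem 2.5]{tsai}, I would ultimately cite that reference for the precise constant, but the density-plus-telescoping scheme above is the natural self-contained route.
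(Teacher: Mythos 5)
The paper does not actually prove this statement; it simply recalls it as a citation of \cite[Theorem~2.5]{tsai}. So there is no in-paper proof to compare against, and your proposal has to be judged on its own merits.

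Judged that way, your outline is correct and in fact complete. The scheme (prove the Lipschitz bound for finitely supported sequences; extend to $\ell^1(\Z)$ by density, using that $C(\T,SU(2))$ is a closed, hence complete, subspace of $C(\T,\C^{2\times 2})$; use the ordered product \eqref{leftright}, the shift \eqref{eq:potentialshift}, and the telescoping inequality $\lVert \prod A_n - \prod B_n\rVert_{op} \le \sum_n \lVert A_n - B_n\rVert_{op}$ for factors of operator norm $\le 1$) reduces everything to the one-site Lipschitz constant of $w\mapsto T(w):=(1+|w|^2)^{-1/2}\begin{pmatrix}1&w\\ -\overline{w}&1\end{pmatrix}$, since the $z$-dependence of each factor is a unitary conjugation, as you note.

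The one point where you hedged unnecessarily is the constant. Your own sub-estimates already give constant $1$, not $3$: the map $w\mapsto v(w)=\arctan(|w|)w/|w|$ is $1$-Lipschitz on $\C$ (its Jacobian is a radial map with eigenvalues $1/(1+r^2)$ and $\arctan(r)/r$, both $\le 1$), the map $v\mapsto A_v=\begin{pmatrix}0&v\\ -\overline v&0\end{pmatrix}$ is an isometry onto skew-Hermitian matrices in operator norm, and the exponential is $1$-Lipschitz on skew-Hermitian matrices in operator norm by the Duhamel-type identity
\[
\exp(B)-\exp(A)=\int_0^1 \exp(sB)\,(B-A)\,\exp((1-s)A)\,ds,
\]
whose unitary factors have norm $1$. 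Composing these gives $\lVert T(w)-T(w')\rVert_{op}\le |w-w'|$, and the telescoping sum then yields Lipschitz constant $1$ for the nonlinear Fourier series on $\ell^1(\Z)$, a fortiori the claimed bound of $3$. There is no hidden ``accounting of the $(1+|w|^2)^{-1/2}$ normalization'' to do beyond what the exponential parametrization already encodes, so the argument needs no appeal to \cite{tsai} for the constant.
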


The use of the operator norm is of no particular relevance except possibly for the value of
the Lipschitz constant, because all norms on
the finite dimensional space of $2\times 2$ matrices are equivalent.
Let $b$ and $b'$ be the second entries of the first row of $\overbrace{F}$ and $\overbrace{F'}$, respectively. 
Then Theorem \ref{tsail1} in particular implies
\begin{equation}
    \| b-b'\|_{L^{\infty}} 
    \leq 3 \|F-F' \|_{\ell^1}.
\end{equation}

We next turn to the Wiener algebra $A(\T)$.
Recall that the linear Fourier series is injective from $\ell^1(\Z)$ onto 
$A(\T)$ and the norm $\|.\|_A$ on the Wiener algebra is defined so that the linear Fourier
series is an isometry from $\ell^1(\Z)$ to the Wiener algebra.
Let $A(\mathbb{T},\mathbb{C}^2)$
be the space of pairs $(a,b)$ of functions in $A(\T)$
and let the norm 
of $(a,b)$ be
defined as $\|a\|_A+\|b\|_A$.

\begin{theorem}\label{wienertheorem}
    The $SU(2)$ nonlinear Fourier series is a real analytic map from $\ell^1(\Z)$ to
    $A(\T,\C^2)$.
    
    Let $R\geq 0$, $\|F\|_{\ell^1}, \|F'\|_{\ell^1} \leq R$. If $b$ and $b'$ are the second entries of the nonlinear Fourier series of $F$ and $F'$, respectively, then
    \begin{equation}\label{eq:Lipschitz l1}
    \|b-b'\|_A \leq  e^R \|F-F'\|_{\ell^1} .
    \end{equation}
If additionally $R \leq 0.36$, then
\begin{equation}\label{eq:reverseLipschitz}
        \|F-F'\|_{\ell^1} \leq 2\| b-b'\|_A.
    \end{equation}
\end{theorem}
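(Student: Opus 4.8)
The plan is to establish Theorem \ref{wienertheorem} by Picard iteration, exploiting the $k$-linearity of successive approximations exactly as in the $L^1(\R)$ discussion after \eqref{nlftsu2}. First I would set up the analogue of the recursive approximations $G_k$ for the nonlinear Fourier series: write the $SU(2)$ nonlinear Fourier series as an ordered product over $\Z$ of the matrices $(1+|F_n|^2)^{-1/2}\bigl(\begin{smallmatrix}1 & F_nz^n\\ -\overline{F_n}z^{-n} & 1\end{smallmatrix}\bigr)$, expand each normalizing factor and each off-diagonal factor, and collect terms that are $k$-linear in $F$. Since $z^n$ has $\|z^n\|_A=1$, each $k$-linear term of $a$ or $b$ lands in $A(\T)$ with $A$-norm bounded by a sum over increasing multi-indices of products $\prod|F_{n_j}|$, which telescopes to $\tfrac{1}{k!}\|F\|_{\ell^1}^k$ just as in \eqref{factorial}. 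Summing over $k$ gives absolute convergence in $A(\T,\C^2)$ for every $F\in\ell^1(\Z)$ and shows the map is given by a convergent power series, hence real analytic; this also yields the crude bound $\|b\|_A\le\sinh(\|F\|_{\ell^1})$ and, more to the point, $\|a-1\|_A,\|b-\widehat F\|_A=O(\|F\|_{\ell^1}^2)$.

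Next I would prove the Lipschitz bound \eqref{eq:Lipschitz l1}. The clean way is to differentiate: the derivative of $F\mapsto\overbrace F$ at a point $F_0$ applied to a direction $H$ is again an absolutely convergent multilinear series, and each term has at most one factor of $H$ and the rest factors of $F_0$; summing the combinatorial count (choosing which of the $k$ slots carries $H$) against $\tfrac{1}{k!}R^{k-1}$ produces the operator-norm bound $e^R$ on the derivative $D(\overbrace{\cdot})|_{F_0}$ as a map $\ell^1\to A(\T,\C^2)$, uniformly for $\|F_0\|_{\ell^1}\le R$. Then \eqref{eq:Lipschitz l1} follows by integrating the derivative along the segment from $F'$ to $F$, which stays in the ball of radius $R$ by convexity of the norm. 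Restricting to the $b$-component only tightens nothing but is what is claimed.

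Finally, for the reverse bound \eqref{eq:reverseLipschitz} under $R\le 0.36$, I would use that $b$ agrees with the linear Fourier series of $F$ up to a quadratic error: writing $b=\widehat F+b_{\ge 2}$ where $b_{\ge 2}$ collects the terms of degree $\ge 2$, the multilinear estimate gives $\|b_{\ge 2}-b'_{\ge 2}\|_A$ controlled by $(\cosh R - 1)$ or similar times $\|F-F'\|_{\ell^1}$ — concretely one shows the map $F\mapsto b-\widehat F$ has Lipschitz constant $\le e^R-1$ on the ball (same integration-of-the-derivative argument, now subtracting the linear term). Hence $\|F-F'\|_{\ell^1}=\|\widehat F-\widehat{F'}\|_A\le\|b-b'\|_A+\|b_{\ge2}-b'_{\ge2}\|_A\le\|b-b'\|_A+(e^R-1)\|F-F'\|_{\ell^1}$, and since $e^{0.36}-1<0.44<\tfrac12$ we can absorb, getting $\|F-F'\|_{\ell^1}\le(1-(e^R-1))^{-1}\|b-b'\|_A\le 2\|b-b'\|_A$.

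The main obstacle I anticipate is organizing the multilinear bookkeeping cleanly enough to get the sharp constants: one must verify that the ordered-product structure over $\Z$ (with the noncommutative multiplication $(a,b)(c,d)=(ac-bd^*,ad+bc^*)$) really does produce, at degree $k$, a sum over \emph{strictly increasing} index tuples rather than all tuples, so that the $\tfrac{1}{k!}$ gain is available; the normalizing factors $(1+|F_n|^2)^{-1/2}$ contribute only even-degree corrections and must be expanded and reabsorbed without destroying this structure. Tracking how the adjoint $b\mapsto b^*$ interacts with the $A$-norm (it is an isometry, since it permutes Fourier coefficients and conjugates) is routine but needs to be stated. Once the multilinear estimate with the $\tfrac1{k!}$ factor is in place, the analyticity, the $e^R$ Lipschitz bound, and the $R\le 0.36$ inversion are all immediate consequences of summing geometric-type series.
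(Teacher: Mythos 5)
Your proposal is correct and follows essentially the same route as the paper: the ordered-product multilinear expansion over strictly increasing index tuples with the $1/k!$ bound in the Wiener norm, a term-by-term Lipschitz estimate summing to $e^R$ (the paper telescopes the difference rather than integrating the derivative, which is the same computation), and the reverse bound by isolating the linear term and absorbing the higher-order part. One caution on the last step: since the normalizing prefactor $\prod_j(1+|F_j|^2)^{-1/2}$ multiplies the linear term as well, the map $F\mapsto b-\widehat{F}$ has Lipschitz constant roughly $e^R-1+R^2/2$ rather than exactly $e^R-1$; at $R=0.36$ this is about $0.498$, so your absorption still yields the constant $2$, but only barely --- the margin must be checked numerically, just as the paper must check its own marginally sharper bound $e^R-e^{-R^2/2}<\tfrac12$.
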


\begin{proof}

We begin with a finite sequence $F$ and write the nonlinear Fourier series as an ordered product
\begin{equation}
\label{discexp}(a(z),b(z))=\prod_{j=-\infty}^\infty
    (1+ |F_j|^2)^{-\frac{1}{2}}(1,F_jz^j),
\end{equation}
where the non-commutative product is understood in the sense of $j$ increasing from left to right. We decompose $(1,F_jz^j)=(1,0)+(0,F_jz^j)$ and 
apply the distributive law. 
The terms resulting from the distributive law 
are parameterized by increasing sequences
$j_1<\dots <j_n$
of indices, 
for which $F_jz^j$
appears in the term.
Hence we write the right side of \eqref{discexp} as

\begin{equation}\label{discdist}
 C(F)\left( \sum_{n=0}^{\infty} \sum_{j_1<j_2<\dots < j_n} \prod_{k=1}^n (0, F_{j_k} z^{j_k}) \right)
\end{equation}
with
\begin{equation*}
 C(F)=  \prod_{j=-\infty}^\infty (1+ |F_j|^2)^{-\frac{1}{2}} .\end{equation*}

The $n$-th term in the sum of \eqref{discdist}
is diagonal for even $n$ and anti-diagonal for odd $n$. Setting
\begin{equation}\label{texp}
    T_n (F^1,\dots, F^n)(z) := \sum_{j_1<j_2<\dots < j_n} \left(\prod_{\substack{1\le k\le n \\k \text{ is odd}}}F^k_{j_k}z^{j_k} \right) \left ( \prod_{\substack{1\le k\le n \\ k \text{ is even}}}-\overline{F^k_{j_k}} z^{-j_k}\right )  ,
\end{equation}
we obtain
\begin{equation}\label{aexp}
    a =  C(F) \sum_{n=0}^{\infty} T_{2n} (F,\dots, F).
\end{equation}
\begin{equation}\label{bexp}
    b =  C(F)\sum_{n=0}^{\infty} T_{2n+1} (F,\dots, F).
\end{equation}

We will show that both the function $C$ and the multilinear expansions in
\eqref{aexp} and \eqref{bexp}
extend to analytic maps in $\ell^1(\Z)$, thereby proving that
$a$ and $b$ extend to analytic maps in 
the argument $F\in \ell^1(\Z)$.

We first discuss 
$C(F)$. For 
a sequence $H_j$ of non-negative numbers, we have
\begin{equation}
\prod_{j=-\infty}^\infty (1+H_j)=1+\sum_{n=1}^\infty\sum_{j_1<\dots <j_n} \prod\limits_{k=1}^n H_{j_k}\le 
1+\sum_{n=1}^\infty\frac 1{n!}\|H\|_{\ell^1(\Z)},
\end{equation}
which we recognize as a multi-linear expansion with infinite radius of convergence.
As the map $F_j\to F_j\overline{F_j}$
is real analytic
from $\ell^1(\Z)$
to itself and the 
$-\frac 12$-th power is real analytic from $[1,\infty)$ to $(0,1]$, the function $C$
extends to a real analytic map from $\ell^1(\Z)$ to $(0,1]$.

As for $T_n$, taking all $F^j = F$ and  summing \eqref{texp} in absolute value over all permutations of the indices  $j_1$ to $j_n$, the sum separates into a product of sums and  one can estimate for $|z|=1$
\begin{equation}
    |T_n (F,\dots, F) (z)| \leq \frac{1}{n!}  \| F\|_{\ell^1} ^n.
\end{equation}
Thus the multilinear expansions in the expression \eqref{aexp} and \eqref{bexp} of $a$ and $b$ have infinite radius of convergence in $\ell^1(\Z)$ and extend to real analytic maps from $\ell^1$ to $L^{\infty}(\T,\mathbb{C}^2)$. 
Moreover, \eqref{texp} is the
linear Fourier series of the sequence given by 
\begin{equation*}   (\check{T}_n(F^1,\dots, F^n))_j=
 \sum_{\substack{j_1<j_2<\dots < j_n
\\\sum_{k=1}^n -(-1)^kj_k=j}
} \left(\prod_{\substack{1\le k\le n \\k \text{ is odd}}}F^k_{j_k} \right) \left ( \prod_{\substack{1\le k\le n \\ k \text{ is even}}} - \overline{F^i_{j_i}} \right ).
\end{equation*}
Absolutely summing over $j$ as well as over permutations of the indices from $j_1$ to $j_n$ yields that
\begin{equation}
\label{wienersum}
    \|T_n (F^1,\dots, F^n) \|_A \leq \frac{1}{n!} \prod_{j=1}^n \| F^j\|_{\ell^1}.
\end{equation}
Hence the multilinear expansions in \eqref{aexp} and \eqref{bexp} extend to real analytic maps from $\ell^1(\Z)$ to $A(\T)$. The nonlinear Fourier series extends to a real analytic map from $\ell^1(\Z)$ to $A(\T,\mathbb{C}^2)$. This
proves the first statement of Theorem \ref{wienertheorem}.

We turn to the proof of \eqref{eq:Lipschitz l1}.
By an $n$-fold application of the  triangle inequality and \eqref{wienersum} above,
    \begin{equation}
    \label{ttelescope}
    \| T_n (F,\dots, F) - T_n (F',\dots ,F') \|_A 
    \end{equation}
\begin{equation*}
\leq \sum_{j=1}^n \|T_n (F',\dots, F', F-F', F,\dots, F)\|_A \end{equation*}
\begin{equation*}
    \leq \frac{\|F-F'\|_{\ell^1} R^{n-1}}{(n-1)!} ,
\end{equation*}
where in the middle term the difference $F-F'$ occurs in the $j$-th entry.
On the other hand, 
by a telescoping sum as in \eqref{ttelescope}
using that all factors of $C(F)$ are bounded by $1$, we get
\begin{equation}\label{eq:estimateforproduct}
   |C(F)-C(F')| \leq \sum_{j=-\infty}^\infty
    \left|(1+ |F_j|^2)^{-\frac{1}{2}}-
    (1+ |F_j'|^2)^{-\frac{1}{2}}\right|
    \end{equation}
\begin{equation*}
    \le 
    \sum_{j=-\infty}^\infty
    \left|(1+ |F_j|^2)^{\frac{1}{2}}-
    (1+ |F_j'|^2)^{\frac{1}{2}}\right|\le
    \sum_{j=-\infty}^\infty
    |F_j-F'_j|=\|F-F'\|_{\ell^1(Z)}.
\end{equation*}
Thus, by the triangle inequality,  
\begin{equation}\label{eq:516}
    \|b-b'\|_A
 \end{equation}
 \begin{equation*}
    \leq |C(F)-C(F')|\sum_{n=0}^\infty\frac{R^{2n+1}}{(2n+1)!}+C(F')\sum_{n=0}^\infty \frac{\|F-F'\|_{\ell^1}R^{2n}}{(2n)!}
\end{equation*}
\begin{equation}
    \le \left( \sinh (R) + \cosh (R) \right) \|F-F'\|_{\ell^1}
    = e^R \|F-F'\|_{\ell^1}.
\end{equation}
This proves \eqref{eq:Lipschitz l1}.

We turn to the proof of  \eqref{eq:reverseLipschitz}. We first note a lower bound for $C(F)$. We have
\begin{equation}
 \label{logc}
    -2\log(C(F)) = { \sum_{j=-\infty}^{\infty} \log (1+|F_j|^2)) }
 \leq { \sum |F_j|^2 } \leq { \|F\|_{\ell^1}^2 } \leq { R^2 }
\end{equation}
and hence
\begin{equation}
\label{clower} 
C(F)\ge e^{-\frac 1 2  R^2}.
\end{equation}

The key observation is now that $T_1 (F)$ is the linear Fourier series of $F$. We will isolate this term in \eqref{bexp} by the triangle inequality as follows:
\begin{equation}
\label{btriangle}
\|b-b'\|_A + \|\sum_{n=1}^{\infty} C(F) T_{2n+1}(F,\dots,F) - C(F')T_{2n+1}(F',\dots,F')\|_A
\end{equation}
\begin{equation*}
        \geq \|C(F)T_1 (F) - C(F')T_1(F')\|_A
\end{equation*}

\begin{equation*}
    \geq C(F) \|T_1(F)-T_1(F')\|_{A} - |C(F)-C(F')| \|T_1(F')\|_A
\end{equation*}

\begin{equation*}    \geq e^{-\frac{1}{2}R^2} \|F-F'\|_{\ell^1} -  \|F-F'\|_{\ell^1}
\|F\|_{\ell^1} \geq( e^{-\frac{1}{2}R^2}-R) \|F-F'\|_{\ell^1} .
\end{equation*}
Here we have used \eqref{clower} and \eqref{eq:estimateforproduct}.

Estimating the second term on the far left-hand side of \eqref{btriangle} 
analogously to \eqref{eq:516} yields
\begin{equation*}
    \|b-b'\|_A + (e^R-1-R)\|F-F'\|_{\ell^1} \geq (e^{-\frac{1}{2}R^2}-R) \|F-F'\|_{\ell^1}
\end{equation*}
and hence
\begin{equation*}
    \|b-b'\|_A \geq \|F-F'\|_{\ell^1} \left( 1+e^{-\frac{1}{2}R^2} - e^R\right)  ,
\end{equation*}
where the last term in parentheses is larger than $\frac{1}{2}$ for $R\leq 0.36$.
\end{proof}

In \cite{Linlin}, the authors investigate similar inequalities. Up to comparing the constants, Theorem 3, Corollaries 18 and 20 of \cite{Linlin} state the same inequalities as \eqref{eq:Lipschitz l1} and \eqref{eq:reverseLipschitz}. In fact, constants in \cite{Linlin} are better than the ones we obtain. This is due to the fact that we only use the triangle inequality and absorb all the multilinear terms into the first term, whereas \cite{Linlin} carries out a more subtle estimate through the Jacobian of $\sum_{n=0}^{\infty}T_{2n+1}(F,\dots, F)$.

\section{Nonlinear Fourier series of one sided square summable sequences}

In this section, we largely follow \cite{tsai} while giving
a self-contained presentation.

For $1 \leq p \le  \infty$, let $H^p(\D)$ be the classical Hardy space associated to the disc $\D$,
 that is the set of functions $f$ in $L^p(\T)$ which are the linear Fourier series of a sequence supported in $[0,\infty)$.
The linear Fourier series of a Hardy space function $f$
provides an analytic extension of $f$ to $\D$ which has non-tangential limits almost everywhere on $\T$
equal to the function $f$.  We denote the value of the 
extension of $f$ at a point $z\in \D$ by $f(z)$. The anti-Hardy space $H^p (\D^*)$ consists of the functions $f$ on $\T$ for which $f^* \in H^p (\D)$.
The mean value theorem in the form of  \eqref{inftymeanvalue} continues to hold for
functions $a\in H^p(\D^*)$. In particular, values of functions $a$ in $H(\D^*)$ and $H(\D)$ respectively at $\infty$ and $0$, if real,
are the average of the real part of the function on $\T$.

If $f\in H^p(\D)$ is bounded by $1$, then
its extension to $\D$ is bounded by $1$. If $f$ has modulus $1$
almost everywhere on $\T$, then $f$ is called inner.
If $f(0)>0$, then $\log|f|$ is integrable on $\T$ and
\begin{equation}\label{outer}
\int_\T \log|f|\ge \log f(0),
\end{equation}
 and $f$ is called outer if equality holds in \eqref{outer}.
 
If $f\in H^p(\D)$, and $f$ vanishes at $0$, then the imaginary 
part of $f$ is the Hilbert transform $H$ with respect to the circle of the real part of $f$.
If $f\in H^p(\D^*)$, and $f$ vanishes at $\infty$, then the imaginary  part of $f$ is the negative of the Hilbert transform of the real part of $f$. The Hilbert transform has operator norm one
in $L^2(\T)$.

On the Hilbert space $L^2 \left ( \T \right )$, there is the orthogonal projection operator $P_{\D}$ onto $H^2 \left (\D \right )$. We also define 
\[
P_{\D ^*}f = ({P_{\D} ({f}^*)})^* .
\] 
Then $P_{\D^*}$ is the orthogonal projection of $L^2 \left (\T \right )$ onto $H^2 (\D^*)$. Both these operators have operator norm one on $L^2 \left (\T \right )$. 

We refer to \cite{garnett} for these and further details on the theory of Hardy spaces.

Let ${\mathbf L}$ be the set of pairs of 
measurable functions $(a,b)$ on $\mathbb{T}$ such that 
\begin{equation}\label{eq:det_torus_gen} 
aa^*+bb^*=1
\end{equation}
almost everywhere on $\T$ and $a$ is 
in $H^2(\D^*)$ with  $a(\infty)>0$.
We introduce the following metric on ${\mathbf L}$:
\begin{equation}\rho((a,b),(c,d))=\left (\int_{\mathbb T} |a-c|^2 \right )^{\frac 1 2}+
\left (\int_{\mathbb T}{|b-d|^2} \right )^{\frac 12}+|\log(a(\infty))-\log(c(\infty))|.
\end{equation}

The nonlinear Fourier series of a finite sequence is in $\mathbf{L}$. 
Moreover, the metric $\rho$ has the following compatibility with the product
\eqref{leftright} in Theorem \ref{SU2basictransformations}.
\begin{theorem}\label{rhoproduct}
    Let  $F,\tilde{F}$ be finite sequences
with support entirely to the left of the support of some other finite sequences 
$G,\tilde{G}$. Let $(a,b)$, $(\tilde{a},\tilde{b})$, $(c,d)$, $(\tilde{c},\tilde{d})$, be the nonlinear Fourier series of $F,\tilde{F}, G, \tilde{G}$, respectively. Then
\begin{equation}\label{eq:rho_gluing_sequences}
 \rho((a,b)(c,d), (\tilde{a},\tilde{b})
 (\tilde{c},\tilde{d}))
 \le 2\rho((a,b), (\tilde{a},\tilde{b})
 )
 +2\rho((c,d),
 (\tilde{c},\tilde{d})).
\end{equation}
\end{theorem}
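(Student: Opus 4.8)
The plan is to expand the product $(a,b)(c,d) = (ac - bd^*, ad + bc^*)$ according to the rule stated after \eqref{nlfsdef}, and estimate each of the three terms in $\rho$ separately: the $L^2(\T)$ distance of the $a$-components, the $L^2(\T)$ distance of the $b$-components, and the difference of the $\log a(\infty)$ values. Throughout I would use that all four functions $a,b,c,d$ and their tilde-counterparts are bounded by $1$ in modulus on $\T$ (by the remark following Theorem \ref{finitethm}, since $aa^* + bb^* = 1$ on $\T$ forces $|a|,|b| \le 1$, and likewise for $c,d$). This boundedness is what lets a telescoping/product estimate go through with only a constant factor loss.

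First, for the $b$-component: $ad + bc^* - (\tilde a \tilde d + \tilde b \tilde c^*)$. Write $ad - \tilde a\tilde d = (a - \tilde a)d + \tilde a(d - \tilde d)$ and similarly $bc^* - \tilde b\tilde c^* = (b-\tilde b)c^* + \tilde b(c^* - \tilde c^*)$; then in $L^2(\T)$, using $|d|, |\tilde a|, |c^*|, |\tilde b| \le 1$ and $\|c^* - \tilde c^*\|_{L^2(\T)} = \|c - \tilde c\|_{L^2(\T)}$, one bounds this by $\|a - \tilde a\|_2 + \|d - \tilde d\|_2 + \|b - \tilde b\|_2 + \|c - \tilde c\|_2$, which is at most $\rho((a,b),(\tilde a,\tilde b)) + \rho((c,d),(\tilde c,\tilde d))$. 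The same manipulation handles the $a$-component $ac - bd^*$ with the same bound. So far the two $L^2$ terms together contribute at most $2\rho(\cdot) + 2\rho(\cdot)$, but we must still account for the third term.

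Next, the $\log$ term. By \eqref{ainftyastar0} and the multiplicativity of the nonlinear Fourier series, the $a$-component of the product evaluated at $\infty$ equals $a(\infty)c(\infty)$ — indeed $bd^*$ vanishes at $\infty$ since $b$ is the Fourier series of a sequence supported on the support of $F$, which lies strictly to the left, so $(bd^*)(\infty) = 0$; alternatively this is just \eqref{finiteplancherel} applied to the concatenated sequence together with \eqref{leftright}. Hence $\log$ of the $\infty$-value of the product's first component is $\log a(\infty) + \log c(\infty)$, and the corresponding difference is exactly $|\log a(\infty) - \log \tilde a(\infty)| + |\log c(\infty) - \log \tilde c(\infty)|$, which is bounded by $\rho((a,b),(\tilde a,\tilde b)) + \rho((c,d),(\tilde c,\tilde d))$. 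Summing the three contributions gives the claimed $2\rho(\cdot) + 2\rho(\cdot)$ (the factor $2$ absorbing the two $L^2$ terms; the $\log$ term only adds a further $1+1$, still within the budget).

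The main obstacle I anticipate is purely bookkeeping: making sure the constant really comes out as $2$ and not something larger. The naive four-way split of each $L^2$ term produces four summands, and one has to notice that each of $\|a - \tilde a\|_2$, $\|b - \tilde b\|_2$ is already a summand of $\rho((a,b),(\tilde a,\tilde b))$, and similarly for the $(c,d)$ pair, so after grouping one gets a coefficient of $2$ on each $\rho$ rather than $4$. One should also double-check the harmless point that $bd^*$ and $b\tilde d^*$ genuinely vanish at $\infty$ so that the $\log a(\infty)$ computation is clean; this follows from the support condition via \eqref{eq:potentialshift} and \eqref{leftright}, and requires no estimate, only an algebraic identity.
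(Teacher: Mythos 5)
Your proposal is correct and follows essentially the same route as the paper's proof: split the product entrywise using that all of $a,b,c,d,\tilde a,\tilde b,\tilde c,\tilde d$ are bounded by $1$ on $\T$, handle the $L^2$ terms by a four-way telescoping and the log term via $(bd^*)(\infty)=0$ from the support assumption, and then apply the triangle inequality. The only thing to watch is the bookkeeping you flagged yourself, namely that the bound on each $L^2$ component should be kept as $\|a-\tilde a\|_2+\|b-\tilde b\|_2+\|c-\tilde c\|_2+\|d-\tilde d\|_2$ (not prematurely enlarged to $\rho_1+\rho_2$), so that after adding the log contribution the total is $2(\|a-\tilde a\|_2+\|b-\tilde b\|_2)+2(\|c-\tilde c\|_2+\|d-\tilde d\|_2)+|\log a(\infty)-\log\tilde a(\infty)|+|\log c(\infty)-\log\tilde c(\infty)|\le 2\rho((a,b),(\tilde a,\tilde b))+2\rho((c,d),(\tilde c,\tilde d))$; you correctly resolve this in your final paragraph.
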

\begin{proof}
    Thanks to the support 
properties of the sequences, $bd^*$
vanishes at $\infty $ and 
\[\log(a c-bd^*)(\infty)=\log(a(\infty))+\log(c(\infty))\]
and similarly for the quantities with tildes. This shows the desired inequality for the logarithmic
parts of the metric.

The functions $a,b,c,d$ and their tilde counterparts 
are bounded by $1$ almost everywhere on $\T$. Hence
\[|(ac-bd^*)-
(\tilde{a}\tilde{c}-\tilde{b}\tilde{d}^*)|\]
\[= |(a-\tilde{a})c+\tilde{a}(c-\tilde{c})-(b-\tilde{b})d^*-\tilde{b}(d^*-\tilde{d^*})|\]
\[\le |a-\tilde{a}|+|c-\tilde{c}|+
|b-\tilde{b}|+|d^*-\tilde{d^*}|,\]
and similarly
\[|(ad+bc^*)-
(\tilde{a}\tilde{d}+
\tilde{b}\tilde{c}^*)|
\le |a-\tilde{a}|+|c-\tilde{c}|+
|b-\tilde{b}|+|d^*-\tilde{d^*}|.\]
The desired bounds for the $L^2$ parts of $\rho$ then follow by the triangle inequality.
\end{proof}

Note that as the absolute values of $a$ and $c$ are almost everywhere bounded by $1$ on $\mathbb{T}$,  the metric  
$\rho$ defines the same topology as the metric  on $\mathbf{L}$ in \cite{tsai} using an $L^1$ integral.

Let $\overline{\mathbf H}$ be the set of functions in ${\mathbf L}$
such that $b$ is in $H^2(\D)$.
Let ${\mathbf H}$ be the set of functions in $\overline{\mathbf H}$ such that 
$a^*$ and $b$ have no common inner factor in the sense that if  $a^*g^{-1}$ and $bg^{-1}$ are in $H^2(\D)$ for some inner function $g$ on $\T$, then $g$ is constant. Note that the bar in 
$\overline{\mathbf{H}}$ has the meaning of a closure rather than a complex conjugation. In fact, $\bar{\mathbf{H}}$ is complete.

For a sequence $F$ supported on $[0,\infty)$
define $(a_k,b_k)$ for $k\ge 0$ recursively by
\begin{equation}\label{zerorec} (a_0(z),b_0(z))= (1+|F_0|^2)^{-\frac 1 2}( 1 , F_0)\end{equation}
and for $k>0$
\begin{equation}\label{halfrec}
(a_k(z),b_k(z) )= (a_{k-1}(z),b_{k-1}(z)){(1+|F_k|^2)^{-\frac 12}}(1 , F_k z^{k}).
\end{equation}
This definition coincides with \eqref{nlfsdef} under the identification of $G_k$ with $(a_k,b_k)$ for
sequences supported on $[0,\infty)$.

\begin{theorem}\label{forwardcauchy}
Let $F$ be a sequence in $l^2(\Z)$ with support in $[0,\infty)$. The sequence  $(a_k,b_k)$
as in \eqref{zerorec} and \eqref{halfrec} converges in $\mathbf{L}$ to an element
$(a,b)$ in $\overline{\mathbf{H}}$.
We have \begin{equation}\label{infplancherel}
a(\infty)= \prod_{n\ge 0} (1+|F_n|^2)^{-1/2} .
\end{equation}

\end{theorem}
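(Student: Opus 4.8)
The plan is to establish convergence of $(a_k,b_k)$ in the metric $\rho$ by exhibiting a Cauchy property, and then to identify the limit as an element of $\overline{\mathbf H}$. First I would observe that by the multiplicativity \eqref{leftright}, writing $F^{(m,n)}$ for the truncation of $F$ to the interval $[m,n]$, we have $(a_n,b_n)=(a_m,b_m)\,\overbrace{F^{(m+1,n)}}$, so that by Theorem \ref{rhoproduct} (or rather its underlying pointwise estimates, since here we compare a product against the original factor $(a_m,b_m)$ paired with the identity-like tail) the $\rho$-distance between $(a_m,b_m)$ and $(a_n,b_n)$ is controlled by how close $\overbrace{F^{(m+1,n)}}$ is to $(1,0)$. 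Concretely, the $L^2$ parts are bounded using $|a_n-a_m|\le |a_m||c-1|+|b_m||d^*|$ and $|b_n-b_m|\le|a_m||d|+|b_m||c^*-1|$ where $(c,d)=\overbrace{F^{(m+1,n)}}$, together with the fact that all of $|a_m|,|b_m|$ are bounded by $1$ on $\T$; and the logarithmic part is exactly $\sum_{m+1\le k\le n}\tfrac12\log(1+|F_k|^2)$.

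The key quantitative input is the nonlinear Plancherel-type bound from \eqref{outerplancherel} and its multilinear consequences: for a sequence $H$ supported on an interval, with $\overbrace{H}=(c,d)$ and $C(H)=\prod(1+|H_j|^2)^{-1/2}$, the estimates $\|d\|_{L^2(\T)}^2\le$ (something like) $e^{\|H\|_{\ell^2}^2}-1$ and $\|c-C(H)\|_{L^2}$, $\|C(H)-1\|$ small when $\|H\|_{\ell^2}$ is small, which all follow from the multilinear expansions \eqref{aexp}, \eqref{bexp} and the Plancherel bound $\|T_n(H,\dots,H)\|_{L^2(\T)}\le \frac{1}{\sqrt{n!}}\|H\|_{\ell^2}^n$ in place of the cruder $\ell^1$ bound \eqref{wienersum}. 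Since $F\in\ell^2$, the tail $\sum_{k>m}|F_k|^2\to 0$, so $\overbrace{F^{(m+1,n)}}\to(1,0)$ uniformly in $n$ as $m\to\infty$, giving the Cauchy property and hence convergence of $(a_k,b_k)$ to some $(a,b)\in\mathbf L$ (using that $\mathbf L$ with $\rho$ is complete — the $L^2$ limits exist and the determinant relation \eqref{eq:det_torus_gen} and the condition $a\in H^2(\D^*)$, $a(\infty)>0$ pass to the limit, the latter because $\log a_k(\infty)=\sum_{0\le j\le k}\tfrac12\log(1+|F_j|^2)$ converges).

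To see the limit lies in $\overline{\mathbf H}$, note each $b_k$ is a polynomial in $z$ with nonnegative powers, hence in $H^2(\D)$; since $H^2(\D)$ is $L^2$-closed and $b_k\to b$ in $L^2(\T)$, we get $b\in H^2(\D)$, so $(a,b)\in\overline{\mathbf H}$ by definition. Finally \eqref{infplancherel} is immediate: $a_k(\infty)=\prod_{0\le n\le k}(1+|F_n|^2)^{-1/2}$ by \eqref{finiteplancherel} applied to the truncation (or by induction directly from \eqref{zerorec}, \eqref{halfrec}, since multiplying by $(1+|F_k|^2)^{-1/2}(1,F_kz^k)$ multiplies the value at $\infty$ by $(1+|F_k|^2)^{-1/2}$ as $b_{k-1}(F_kz^k)^*$ vanishes at $\infty$), and the product converges as $k\to\infty$ because $\sum\log(1+|F_n|^2)<\infty$; convergence of $a_k(\infty)$ also already follows from $\rho$-convergence.

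\textbf{Main obstacle.} The delicate point is the completeness/closedness argument: one must be careful that $\rho$-convergence genuinely produces an element of $\mathbf L$ — in particular that the almost-everywhere determinant identity \eqref{eq:det_torus_gen} survives passing to an $L^2$ (hence a.e.\ along a subsequence) limit, and that $a$ retains membership in $H^2(\D^*)$ rather than merely $L^2(\T)$. The former is routine (products of $L^2$-convergent, uniformly bounded sequences converge in $L^1$); the latter uses that $H^2(\D^*)$ is $L^2$-closed, just as for $b$. A secondary technical nuisance is making the telescoping estimate for $\rho((a_m,b_m),(a_n,b_n))$ clean despite $\rho$ not being literally the metric appearing in Theorem \ref{rhoproduct}; I would simply redo the short pointwise computation inline rather than invoke that theorem as a black box.
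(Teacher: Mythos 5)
Your overall architecture (telescoping via multiplicativity, reducing to showing $\overbrace{F^{(k+1,l)}}=(c,d)$ is close to $(1,0)$, closedness of $H^2(\D)$ for the limit, and the induction for \eqref{infplancherel}) matches the paper. But the step you yourself flag as the ``key quantitative input'' contains a genuine error: the bound $\|T_n(H,\dots,H)\|_{L^2(\T)}\le \frac{1}{\sqrt{n!}}\|H\|_{\ell^2}^n$ is false, already for $n=2$. Indeed, $T_2(H,H)(z)=-\sum_{m>0}\rho(m)z^{-m}$ with $\rho(m)=\sum_j H_j\overline{H_{j+m}}$, so $\|T_2(H,H)\|_{L^2(\T)}^2=\sum_{m>0}|\rho(m)|^2$, which is comparable to $\|\widehat H\|_{L^4(\T)}^4$; taking $H=\delta N^{-1/2}1_{[0,N-1]}$ (a rescaled Dirichlet kernel) gives $\|H\|_{\ell^2}=\delta$ while $\sum_{m>0}|\rho(m)|^2\sim \delta^4 N$, unbounded as $N\to\infty$. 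So the bilinear term is not controlled in $L^2(\T)$ by the $\ell^2$ norm of the tail at all, even when that norm is small. This is precisely the well-known obstruction (Christ--Kiselev, Muscalu--Tao--Thiele) to running a perturbative/multilinear argument for the nonlinear Fourier series at the $\ell^2$ endpoint, and it is why the paper does not argue this way.

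The correct conclusions you want ($\|d\|_{L^2}$ and $\|1-c\|_{L^2}$ small when the tail $\ell^2$ norm is small) are true, but must be obtained non-perturbatively from the structure of $(c,d)$ as an $SU(2)$-valued boundary function. The paper's route: for $d$, use $\int_\T|d|^2=\int_\T(1-|c|^2)\le -2\int_\T\log|c|\le -2\log c(\infty)=\sum_{k<n\le l}\log(1+|F_n|^2)$, where the middle step is Jensen's/the outer inequality \eqref{outer} applied to $c$; for $1-c$, use that $1-c\in H^2(\D^*)$ is real at $\infty$, so $\Im(1-c)$ is (minus) the Hilbert transform of $\Re(1-c)$, whence $\frac14\int_\T|1-c|^2\le\frac12\int_\T|\Re(1-c)|^2\le\int_\T\Re(1-c)=1-c(\infty)\le-\log c(\infty)$, using $\|c\|_\infty\le1$ to pass from the square to the first power and the mean value theorem \eqref{inftymeanvalue}. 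Both bounds are quadratic in the $\ell^2$ tail and give the Cauchy property directly. Your remaining points (the logarithmic part of $\rho$, redoing the pointwise product estimate inline, and completeness of $\overline{\mathbf H}$) are consistent with the paper's proof; only the mechanism for the two $L^2$ estimates needs to be replaced as above.
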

We call the limit $(a,b)$ in this theorem the nonlinear Fourier
series of $F$. This definition is consistent with the definition of the nonlinear Fourier series  near \eqref{nlfsdef} in the case that $F$ has finite support.
Theorem 
\ref{SU2basictransformations} continues to hold for one sided infinite sequences, by taking limits as in Theorem \ref{forwardcauchy}.

\begin{proof}
We first show that the sequence $(a_k,b_k)$ is Cauchy in 
$\mathbf{L}$. Let $k< l$ and write
\begin{equation}\label{eq:cauchy_seq}
(a_k,b_k)(c,d) = (a_l,b_l)
\end{equation}
where $(c,d)$ is the nonlinear Fourier series
of the sequence $\left ( F_n \mathbf{1}_{\{k+1\leq n \leq l\}} \right )$
and where we have used the multiplicative property \eqref{leftright}
in Theorem \ref{SU2basictransformations}.
We have by Plancherel \eqref{finiteplancherel}, applied to $a_l$, $a_k$, and $c$
\begin{equation}\label{logcauchy}
\log|a_l(\infty)|-\log|a_k(\infty)|=
-\frac 12 \sum_{k<n\le l}\log(1+|F_n|^2) =\log |c(\infty)|.
\end{equation}
The right-hand side tends to zero for $k\to \infty$ because $(F_n)$
is square summable. 

By \eqref{eq:cauchy_seq} we have
$$|a_k-a_l|= |a_k(1-c)+ b_k d^*|\le |1-c|+|d|,$$
$$|b_l-b_k|= |a_k d+b_k (c^*-1)|\le |1-c|+|d|.$$
To control the $L^2$ norms  of the left-hand sides, it suffices to control  the $L^2$ norm of each summand on the right-hand side.
As $1-c(\infty)$ is real and $1-c$ is in $H^2(\D^*)$, the $L^2$ norm of the imaginary part of $1-c$, which equals the negative of the Hilbert transform of the real part of $1-c$, is hence bounded by the $L^2$ norm of its real part, since the Hilbert transform $H$ has norm one acting on $L^2(\T)$
 \cite[Chapter 3]{garnett}. Using this and the  bound $\|c\|_\infty \le 1$, we estimate 
\[ \frac 14\int_\T |1-c|^2 \le \frac 12 \int_\T |\Re(1-c)|^2
\le \int_\T \Re(1-c) =1-c(\infty)\le -\log c(\infty).\]
The latter tends to zero as in \eqref{logcauchy} as $k \to \infty$. 
Moreover, 
\[\int_\T |d|^2=\int_\T 1-|c|^2\le - 2\int_\T \log|c|
\le -2 \log(c(\infty))  ,\]
which also tends to zero as in \eqref{logcauchy}.
Having seen that the sequence $(a_k,b_k)$ is Cauchy with respect to
all three summands in the definition of $\rho$,
it is Cauchy with respect to $\rho$.

As each $(a_k,b_k)$ is in $\overline{\mathbf{H}}$
and $\overline{\mathbf{H}}$ is complete, $(a_k,b_k)$ has a limit in $\overline{\mathbf{H}}$.

\end{proof}

\begin{theorem}\label{layertheorem}
Let $(a,b)\in \overline{\mathbf{H}}$.
There is a unique $y\in \C$ such that there
exists $(c,d)\in \overline{\mathbf{H}}$ satisfying 
\begin{equation}\label{layerstrip}
    (c(z),d(z)z):=(1+|y|^2)^{-1/2} (1,-y)(a(z),b(z))
\end{equation}
for almost all $z\in \T$.
Using this statement, define the functions $(a_n,b_n)$ recursively for $n\ge 0$ by 
\[(a_0,b_0)=(a,b)\]
\begin{equation}\label{layerrecursion}(a_{n+1}(z),b_{n+1}(z)z)=(1+|F_n|^2)^{-\frac 12}
(1,-F_n)(a_n(z),b_n(z)),
\end{equation}
where $F_n$ is the unique number such that $(a_{n+1},b_{n+1})$ is in $\overline{\mathbf{H}}$.
Then the sequence $(F_n)$ is square summable and 
\begin{equation}\label{leplancherel}
a(\infty)\le \prod_{n\ge 0} (1+|F_n|^2)^{-1/2} .
\end{equation}
If $(a,b)\not\in \mathbf{H}$,
then we have strict inequality in \eqref{leplancherel}. 
\end{theorem}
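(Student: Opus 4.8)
The plan is to establish Theorem \ref{layertheorem} in four stages, treating the existence/uniqueness of the layer-stripping step, the recursive construction, the square summability, and the Plancherel inequality in turn.

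\textbf{Step 1: the single layer-stripping step.} Given $(a,b)\in\overline{\mathbf H}$, I would first determine the candidate $y$. Writing out \eqref{layerstrip}, the condition that $(c,d)\in\overline{\mathbf H}$ forces $c\in H^2(\D^*)$ with $c(\infty)>0$ and $d\in H^2(\D)$. From \eqref{layerstrip} we get $c = (1+|y|^2)^{-1/2}(a+yb^*)$ (using the product rule $(1,-y)(a,b)=(a+yb^*,\, b - y a^*)$, so actually $c = (1+|y|^2)^{-1/2}(a+\overline{(-y)}\,\text{-term})$; I would recompute carefully from the stated convention $(a,b)(c,d)=(ac-bd^*,ad+bc^*)$ applied with the left factor $(1+|y|^2)^{-1/2}(1,-y)$, giving $c = (1+|y|^2)^{-1/2}(a + \bar y b^*)$ and $d z = (1+|y|^2)^{-1/2}(b - \bar y a^*)$, so $d = (1+|y|^2)^{-1/2} z^{-1}(b-\bar y a^*)$). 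The requirement $d\in H^2(\D)$ means $b-\bar y a^*$ must vanish at $z=0$ to first order, i.e. $b(0) - \bar y\, a^*(0) = 0$. Since $a^*(0) = \overline{a(\infty)} = a(\infty) > 0$, this uniquely determines $\bar y = b(0)/a(\infty)$, hence $y$. Then one must verify this $y$ actually works: that $c\in H^2(\D^*)$ (automatic since $a,b^*\in H^2(\D^*)$), that $d\in H^2(\D)$ (now that the obstruction at $0$ is removed), that $cc^*+dd^*=1$ a.e. (immediate from the $SU(2)$ product structure on $\T$ since $(1+|y|^2)^{-1/2}(1,-y)$ times its starred transpose is the identity), and that $c(\infty)>0$ — this last point follows because $c(\infty) = (1+|y|^2)^{-1/2}(a(\infty) + \bar y\, b^*(\infty))$ and $b^*(\infty) = \overline{b(0)}$, so $c(\infty) = (1+|y|^2)^{-1/2}(a(\infty) + |b(0)|^2/a(\infty)) > 0$.

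\textbf{Step 2: the recursion and the Plancherel bookkeeping.} Iterating Step 1 defines $(a_n,b_n)$ and $F_n$. Here one should track $a_n(\infty)$. From the computation above, $a_{n+1}(\infty) = (1+|F_n|^2)^{-1/2}\big(a_n(\infty) + |F_n|^2 a_n(\infty)\big)\cdot$ — wait, more precisely, using $b_n(0) = \bar F_n a_n(\infty)$ so $|F_n|^2 = |b_n(0)|^2/a_n(\infty)^2$, one gets $a_{n+1}(\infty) = (1+|F_n|^2)^{-1/2}(1+|F_n|^2)a_n(\infty) = (1+|F_n|^2)^{1/2}a_n(\infty)$. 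Hence $a_n(\infty) = a(\infty)\prod_{k<n}(1+|F_k|^2)^{1/2}$, which is increasing in $n$. Since $a_n\in\overline{\mathbf H}$ we have $|a_n|\le 1$ on $\T$, and by the mean value inequality $a_n(\infty) = \int_\T \Re a_n \le \int_\T |a_n| \le 1$. Therefore $\prod_{k<n}(1+|F_k|^2)^{1/2}\le a(\infty)^{-1}$ for all $n$, so the infinite product converges and $\sum_n \log(1+|F_n|^2) < \infty$, which gives square summability of $(F_n)$. Taking $n\to\infty$ yields $\prod_{n\ge0}(1+|F_n|^2)^{1/2} = \lim_n a_n(\infty)/a(\infty) \le 1/a(\infty)$, i.e. \eqref{leplancherel}.

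\textbf{Step 3: strict inequality off $\mathbf H$.} This is the part I expect to be the main obstacle, since it requires understanding how inner factors are or are not stripped away. The idea: $\lim_n a_n(\infty) = a(\infty)\prod(1+|F_n|^2)^{1/2}$, so equality in \eqref{leplancherel} holds iff $a_n(\infty)\to 1$. Since each $a_n\in\overline{\mathbf H}$ has $\|a_n\|_{H^2(\D^*)}^2 = \int_\T|a_n|^2 = \int_\T(1-|b_n|^2) \to$ something, and $a_n(\infty)\to 1$ would force (via $a_n(\infty)=\int_\T\Re a_n \le \int_\T |a_n| \le (\int_\T|a_n|^2)^{1/2}\le 1$, with the Cauchy--Schwarz and the pointwise bound both pinched) that $|a_n|\to 1$ a.e. and $\Re a_n \to |a_n|$, hence $a_n\to$ an inner function in $H^2(\D^*)$ with $a_n(\infty)\to 1$, forcing $a_n\to 1$. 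Then $b_n = (1-a_na_n^*)^{\cdots}$ must tend to $0$ in $L^2$. One then argues that $a^*$ and $b$ inherit a common inner factor: the layer-stripping multiplies on the left by matrices whose determinant-type quantity is trivial, so any inner function $g$ dividing both $a^*$ and $b$ persists as dividing both $a_n^*$ and $b_n$ up to the shifts $z^{-n}$; if $a_n\to 1$ and $b_n\to 0$ while a fixed inner factor remains, one derives a contradiction unless $g$ was trivial, i.e. $(a,b)\in\mathbf H$. Conversely if $(a,b)\in\mathbf H$, equality holds. I would likely invoke the inner-outer factorization machinery from \cite{garnett} and the characterization of $\mathbf H$ via absence of common inner factors of $a^*$ and $b$; the delicate point is showing the common-inner-factor obstruction is exactly the defect $a(\infty)\prod(1+|F_n|^2)^{1/2} < 1$, which I would phrase as: $\prod(1+|F_n|^2)^{-1/2}$ equals $a(\infty)$ divided by the value at $\infty$ of the outer part of $a^*$, and the two agree iff $a^*$ is already outer relative to the pair, i.e. $a^*$ and $b$ share no inner factor.
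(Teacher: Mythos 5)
Your Steps 1 and 2 reproduce the paper's argument almost verbatim: the layer-stripping parameter is pinned down by requiring the second component to vanish at $0$, and the relation $a_{n+1}(\infty)=(1+|F_n|^2)^{1/2}\,a_n(\infty)$ together with $a_n(\infty)\le 1$ gives both square summability and \eqref{leplancherel}. One small slip in Step 1: with the paper's convention $(\alpha,\beta)(\gamma,\delta)=(\alpha\gamma-\beta\delta^*,\ \alpha\delta+\beta\gamma^*)$, the product $(1,-y)(a,b)$ equals $(a+y b^*,\ b-y a^*)$, so the forced value is $y=b(0)/a^*(0)$, not the conjugate; you flagged the uncertainty, and the conclusion is structurally unchanged, but do fix the sign.

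Step 3 is where the proposal is genuinely incomplete. Your core insight — that a common inner factor $g$ of $a^*$ and $b$ survives layer stripping — is the right one and is in fact what the paper uses, but as written your argument does not close. First, the persistence claim needs proof: one must check $g(0)\neq 0$ (which follows from $a^*(0)=a(\infty)>0$, since otherwise $g$ would force $a^*(0)=0$), and then verify that the choice of $F_n$ makes $g$ divide not just the new $a_{n+1}^*$ but also the new $b_{n+1}$, using that $b_n-F_n a_n^*$ vanishes at $0$ precisely because $g(0)\neq 0$ forces the $g$-stripped quantity to vanish there. Your phrase ``up to the shifts $z^{-n}$'' is a misdirection — the factor $g$ persists exactly, with no shift. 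Second, the detour through $a_n(\infty)\to 1\Rightarrow|a_n|\to 1\Rightarrow a_n\to 1$ is unnecessary and does not lead anywhere productive; once persistence is established, the correct and immediate conclusion is $a_n(\infty)=\overline{a_n^*(0)}\le |g(0)|<1$ for all $n$ (after normalizing $g(0)>0$), which gives $a(\infty)\prod_{k\ge 0}(1+|F_k|^2)^{1/2}\le |g(0)|<1$ directly. Third, your closing sentence ``Conversely if $(a,b)\in\mathbf H$, equality holds'' asserts more than the theorem states and more than can be concluded at this point; that converse is only established later (in Theorem \ref{halflinethm}) via the forward Plancherel identity. For comparison, the paper's route is to form $(\tilde a,\tilde b)=(a(g^*)^{-1},bg^{-1})$, show by induction (right-multiplying \eqref{layerrecursion} by $((g^*)^{-1},0)$) that $(\tilde a,\tilde b)$ has the same layer-stripping sequence as $(a,b)$, note $\tilde a(\infty)>a(\infty)$, and then simply apply the already-proven inequality \eqref{leplancherel} to $(\tilde a,\tilde b)$ — this bypasses any limit considerations entirely and is cleaner than either of your two suggested variants.
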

The sequence produced in this theorem is called the layer
stripping sequence of $(a,b)$, \cite{sylvesterwinebrenner}. Layer stripping  is an alternation between a left multiplication by a constant matrix and a shift by $z$, mirroring the alternation between two types of unitary matrices in the 
QSP representation of the nonlinear Fourier series.

\begin{proof}
To see the first statement of Theorem \ref{layertheorem}, note that for each $y \in \C$,
the factor $(1+|y|^2)^{-1/2} (1,-y)$ is in $SU(2)$
and thus the right-hand side of \eqref{layerstrip}
is an almost everywhere $SU(2)$-valued function on $\T$.
Hence $(c,d)$ is $SU(2)$-valued on $\T$.
The matrix product on the right-hand side of \eqref{layerstrip} is
$( a+ {y} b^*, b-ya^*)$.
We have that $a+yb^*$ is in $H^2(\D^*)$ and $b-ya^*$ is in $H^2(\D)$.
Equality in \eqref{layerstrip} requires
$b-ya^*$ to vanish at $0$. There is a unique complex number
$y$ so that this happens, namely 
\begin{equation}\label{eq:layer_stripping_explicit}
y=b(0)/a^*(0)  .
\end{equation}
For this $y$, we note that $b-ya^*$ can be written
as a product of $z$ with an $H^2(\D)$ function, and that, using that $a(\infty)$ is positive and thus equal to $a^*(0)$,
\begin{equation}\label{eq:first_strip} 
a(\infty)+{y}b^*(\infty) = a(\infty)+{y}\overline{b(0)}=
a(\infty) (1+|y|^2)>0.
\end{equation}
We may thus use \eqref{layerstrip} with this $y$ to define $c,d$. In particular, by \eqref{eq:first_strip} and \eqref{layerstrip} we have
$c(\infty)> 0$ .
It follows that $(c,d) \in \bar{\mathbf{H}}$.
We have thus shown existence and uniqueness of $y$
as in the first part of the theorem.

We may define $F_n$ as in the second 
part of the theorem and obtain by induction
\[a_{n+1}(\infty)=a(\infty)\prod_{k=0}^{n}(1+|F_n|^2)^{\frac 12} . \]
As $a_{n+1}(\infty)\le 1$ for all $n$, we obtain
\eqref{leplancherel}.

Now assume $(a,b)\not \in H$. Then there is a non-constant inner function $g$ such that $\tilde{a} ^* =a ^* g^{-1}$ and $\tilde{b}=bg^{-1}$ are in $H^2(\D)$. Multiplying by a number of modulus one, we may assume
that $g(0)\geq 0$, and since $g$ is not constant then by the maximum principle we have 
$g(0)<1$. Multiplying \eqref{layerrecursion} by $((g^*)^{-1}, 0)$ from the right, one obtains inductively that 
the layer stripping sequence $\tilde{F}$ of $(\tilde{a},\tilde{b})$ is the same as that of $(a,b)$. 
But
\[ \tilde{a}(\infty)> g^*(\infty)\tilde{a}(\infty)=a(\infty).\]
Applying \eqref{leplancherel} to $(\tilde{a},\tilde{b})$
then gives \eqref{leplancherel} for $(a,b)$ with strict inequality.

\end{proof}
\begin{theorem}\label{halflinethm}
Let $F$ be a sequence in $l^2(\Z)$ with support in $[0,\infty)$.
Then the layer stripping sequence of $\overbrace{F}=(a,b)$ is $F$. Moreover, $(a,b)$ is in $\mathbf{H}$.
Conversely, if $(a,b)$ is any element in $\mathbf{H}$, then its layer stripping
sequence is square summable and  
$(a,b)$ is the nonlinear Fourier series of this layer stripping sequence.
\end{theorem}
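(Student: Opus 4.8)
The plan is to handle the two directions separately, treating Theorems \ref{forwardcauchy} and \ref{layertheorem} as black boxes. For the forward direction, let $F\in\ell^2(\Z)$ be supported on $[0,\infty)$ and $(a,b)=\overbrace F\in\overline{\mathbf H}$. First I would identify the layer stripping sequence by one computation plus an induction: splitting $F=F\mathbf 1_{\{0\}}+F\mathbf 1_{[1,\infty)}$ and using the multiplicativity \eqref{leftright} and the shift rule \eqref{eq:potentialshift} (valid for one sided sequences) gives $(a,b)=(1+|F_0|^2)^{-1/2}(1,F_0)(c,d)$ with $(c,d)=\overbrace{F\mathbf 1_{[1,\infty)}}$, so $d(0)=0$ and $c(\infty)>0$. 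A direct evaluation at $0$ and $\infty$ gives $b(0)/a^*(0)=F_0$, so the first layer stripping parameter is $F_0$; and because $(1,-F_0)(1,F_0)=(1+|F_0|^2)(1,0)$, the first layer stripping step returns $(a_1,b_1)=\overbrace{(F_{n+1})_{n\ge0}}$. Iterating, the layer stripping sequence of $(a,b)$ is $F$. Then $(a,b)\in\mathbf H$ follows for free: \eqref{leplancherel} gives $a(\infty)\le\prod_{n\ge0}(1+|F_n|^2)^{-1/2}$, \eqref{infplancherel} gives equality, so the strict-inequality alternative in Theorem \ref{layertheorem} is excluded.

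For the converse, let $(a,b)\in\mathbf H$ with layer stripping sequence $F$, square summable by Theorem \ref{layertheorem}. The first step is an exact identity linking $(a,b)$ to the partial nonlinear Fourier series of $F$. Let $D$ be the shift $(c,d)\mapsto(c,zd)$ on pairs; one checks directly that $D$ is multiplicative for the matrix product and $D^k(1,w)=(1,wz^k)$. Rewriting \eqref{layerrecursion} as $D(a_{n+1},b_{n+1})=(1+|F_n|^2)^{-1/2}(1,-F_n)(a_n,b_n)$, inverting the constant factor, and iterating, I get
\[
(a,b)=P_{N-1}\,D^N(a_N,b_N),\qquad P_{N-1}:=\prod_{k=0}^{N-1}(1+|F_k|^2)^{-1/2}(1,F_kz^k)=\overbrace{F\mathbf 1_{[0,N-1]}},
\]
for all $N\ge1$.

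The second step is to send $N\to\infty$. By Theorem \ref{forwardcauchy}, $P_{N-1}\to\overbrace F=:(a',b')\in\overline{\mathbf H}$ in $\rho$, so its entries converge in $L^2(\T)$; and $|a'|,|b'|,|a_N|,|b_N|\le1$ on $\T$. Passing to a subsequence with $a_N\rightharpoonup a_\infty$ weakly in $L^2(\T)$ (hence $a_\infty\in H^2(\D^*)$), and using that $z^Nb_N\rightharpoonup0$ and $z^{-N}b_N^*\rightharpoonup0$ weakly (these are $L^2$-bounded and annihilate every trigonometric polynomial once $N$ is large, since $b_N\in H^2(\D)$), I take weak limits entry by entry in the identity above, invoking the elementary fact that a weakly convergent $L^2$ sequence times an $L^\infty$-bounded $L^2$-norm-convergent one converges weakly to the product. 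This gives $(a,b)=(a'a_\infty,\,b'a_\infty^*)=(a',b')(a_\infty,0)$. Since $(a,b)$ and $(a',b')$ are $SU(2)$-valued on $\T$, so is $(a_\infty,0)$, whence $|a_\infty|=1$ a.e.\ on $\T$, i.e.\ $a_\infty^*\in H^2(\D)$ is inner; from $a^*=(a')^*a_\infty^*$ and $b=b'a_\infty^*$ with $(a')^*,b'\in H^2(\D)$, the inner function $a_\infty^*$ divides both $a^*$ and $b$, so by $(a,b)\in\mathbf H$ it is constant, and $a_\infty(\infty)=a(\infty)/a'(\infty)>0$ forces $a_\infty^*=1$. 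Hence $(a,b)=(a',b')=\overbrace F$, the nonlinear Fourier series of $F$.

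The step I expect to be the main obstacle is precisely the passage $D^N(a_N,b_N)\to(1,0)$. Theorem \ref{layertheorem} supplies only $a(\infty)\le\prod(1+|F_n|^2)^{-1/2}$, and the matching nonlinear Plancherel \emph{equality} for $(a,b)\in\mathbf H$ is not reachable by the contour/mean-value estimates used for the inequality — in fact that equality is, a posteriori, a consequence of the converse together with \eqref{infplancherel}. The weak-compactness argument above is designed to avoid proving it directly: it exhibits the potential defect as a single inner function common to $a^*$ and $b$, which is exactly what coprimality in $\mathbf H$ forbids. A minor bookkeeping point is keeping track of which disc each function extends to when passing to weak limits, but that is routine from the definitions of $\overline{\mathbf H}$ and of the $\ast$ operation.
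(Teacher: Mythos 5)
Your proposal is correct. The forward direction is essentially the paper's argument (identify $F_0=b(0)/a^*(0)$ via the splitting $F=F\mathbf 1_{\{0\}}+F\mathbf 1_{[1,\infty)}$, iterate, then rule out $(a,b)\notin\mathbf H$ by playing the equality \eqref{infplancherel} against the strict inequality alternative in Theorem \ref{layertheorem}); the paper merely packages the induction as a minimal-counterexample argument. In the converse, your identity $(a,b)=\overbrace{F\mathbf 1_{[0,N-1]}}\,D^N(a_N,b_N)$ is exactly the paper's \eqref{injective}, but you pass to the limit by a genuinely different mechanism. The paper multiplies through by the inverse of the partial product, obtaining $(c_k,d_kz^k)=(a_k^*,-b_k)(c,d)$ whose right side converges \emph{strongly} in $L^2(\T)$; it then reads off that $d_kz^k\to0$ from continuity of Fourier coefficients and that $c_k^*$ converges to an inner function $g$ from $|c_k|=\sqrt{1-|d_k|^2}\to1$ a.e. You instead leave the factorization intact and extract a \emph{weak} subsequential limit of the residual factor, using that $z^Nb_N\rightharpoonup0$ and that a strongly convergent, uniformly bounded factor may be multiplied against a weakly convergent one; the innerness of the defect $a_\infty^*$ then comes from the $SU(2)$ determinant relation rather than from pointwise convergence of moduli. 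Both routes terminate identically: a common inner factor of $a^*$ and $b$, killed by coprimality in $\mathbf H$ and normalized to $1$ by positivity at $\infty$. What your version buys is that you never need to establish strong convergence of the residual factor (weak compactness plus the fact that $a_\infty$ disappears from the final identity makes the subsequence harmless); what the paper's version buys is the avoidance of the weak-$\times$-strong product lemma and a more explicit identification of the limit. Your closing remark is also on point: the Plancherel \emph{equality} for $(a,b)\in\mathbf H$ is indeed only a corollary of the converse, and any attempt to prove $D^N(a_N,b_N)\to(1,0)$ directly from \eqref{leplancherel} would fail; routing the defect through the coprimality hypothesis is the right move and is exactly what the paper does as well.
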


\begin{proof}
To prove the first part of the theorem, assume to get a contradiction that there is a sequence $F$ such that the layer stripping sequence of
$\overbrace{F}$ is not equal to $F$. Let $n$ be the minimal index such that
the $n$-th term of $F$ differs from the $n$-term of the layer stripping sequence of
$\overbrace{F}$.
We may assume that $n$ is minimal among all hypothetical 
counterexamples to the first
statement of the theorem.

Let $\tilde{F}$ be the sequence supported on $[1,\infty)$ which coincides with $F$ on 
$[1,\infty)$.
Let $(a_k,b_k)$ and $(\tilde{a}_k,\tilde{b_k})$ be the respective sequences defined by the recursion \eqref{zerorec}, \eqref{halfrec}. 

By induction, we obtain for $k\ge 0$
\begin{equation}
(1+|F_0|^2)^{-\frac 12} (1,-F_0)(a_k,b_k)=(\tilde{a}_k,\tilde{b}_k).
\end{equation}
Taking a limit as $k\to \infty$ with
Theorem \ref{forwardcauchy}, we obtain for the nonlinear Fourier series $(a,b)$ and $(\tilde{a},\tilde{b})$
\begin{equation}(1+|F_0|^2)^{-\frac 12} (1,-F_0)(a,b)=
(\tilde{a},\tilde{b}).
\end{equation}
By Theorem \ref{finitethm}, $\tilde{b}_k(0)=0$, for all $k\ge 0$, and by taking limits, as evaluation at $0$ is continuous in $H^2(\D)$, we have
$\tilde{b}(0)=0$. 
Hence there is $d\in H^2(\D)$ such that $d(z)z=\tilde{b}(z)$.
Set $c=\tilde{a}$, then  
\begin{equation}(1+|F_0|^2)^{-\frac 12} (1,-F_0)(a(z),b(z))=
(c(z),d(z)z).
\end{equation}
By the uniqueness part of Theorem
\ref{layertheorem}, we have that $F_0$
is the zeroth entry of the layer stripping sequence
of $(a,b)$. In particular, $n\ge 1$.
By definition, the later terms of the layer stripping sequence
of $(a,b)$ are those of the layer
stripping sequence of $(c,d)$. But $(c,d)$ is the nonlinear Fourier series of the sequence $H_{n}=\tilde{F}_{n+1}$ by Theorem \ref{SU2basictransformations}. It follows that
the $(n-1)$-st term of the 
layer stripping sequence of $(c,d)$ does not coincide with $H_{n-1}$.
This contradicts the minimality of $n$.

Thus we have shown that the layer stripping sequence of $(a,b)$ is equal to $F$.
As we have the Plancherel identity \eqref{infplancherel},
we observe that $(a,b)\in \mathbf{H}$
by Theorem \ref{layertheorem}.

We turn to the second part of the Theorem.
Let $(c,d)\in \mathbf{H}$, let $F$ be its layer stripping sequence and let $(c_k,d_k)$ be the corresponding sequence as defined in Theorem \ref{layertheorem}. By \eqref{leplancherel}, the sequence $F$ is square summable. Let $(a,b)$ be the nonlinear Fourier series of $F$ and let $(a_k,b_k)$
be as in \eqref{zerorec}, \eqref{halfrec}. By induction, we show that for each $k\ge 0$ we have 
\begin{equation}\label{injective}
    (a_k(z),b_k(z))(c_{k+1}(z),d_{k+1}(z)z^{k+1})=(c(z),d(z)).
\end{equation}
Namely, for $k=0$, both sides of the equation are equal to
\[(1+|F_0|^2)^{-\frac 12} (1,F_0)(c_{1}(z),d_{1}(z)z).\]
For $k\ge 1$, the left-hand side of \eqref{injective} is
\begin{equation}\label{injective3}(a_{k-1}(z),b_{k-1}(z))
(1+|F_k|^2)^{-\frac 12} (1,F_k(z)z^k)(c_{k+1}(z),d_{k+1}(z)z^{k+1})\end{equation}
 by definition of $(a_k,b_k)$.
Algebraic verification analogous to a shift show that we may replace
$z^k$ and $z^{k+1}$ in \eqref{injective3}
by $1$ and $z$, respectively.
Hence, by the definition of the sequence $(c_k,d_k)$, the expression \eqref{injective3} is equal to
\begin{equation}
\label{eq:strip_with_finite_approx}
    (a_{k-1}(z),b_{k-1}(z))
(c_{k}(z),d_{k}(z)z^{k}),
\end{equation}
which by the induction hypothesis is the right-hand side of \eqref{injective}.
This completes the induction step and proves \eqref{injective}.

Dividing by the left factor of \eqref{eq:strip_with_finite_approx}, we obtain from \eqref{injective}

\[(c_k(z),d_k(z)z^k)=(a_k^*(z),-b_k(z))(c,d).\]
The entries of the matrix on the right-hand side converge in $L^2(\T)$
because the entries of $(a_k^*,b_k)$ converge and 
such convergence is preserved under 
multiplication by a bounded measurable function. Therefore, the entries of the left-hand side converge in $L^2 (\T)$ as well.

 The limit of $d_k(z)z^k$ is in the closed subspace  $H^2(\D)$. The linear Fourier coefficients of this limit vanish  because the $k$ leading 
 Fourier coefficients of 
 $d_k(z)z^k$ vanish and the map taking a function to any individual Fourier coefficient is continuous in the space $H^2(\D)$. 
 Hence $d_k(z)z^k$ converges to zero. Also $c_k ^*$ converges in $H^2 (\D)$. The limit is an inner function $g$, because  $|c_k|=\sqrt{1-|d_k|^2}$ converges pointwise almost everywhere to $1$. Taking limits in \eqref{injective}
with Theorem \ref{rhoproduct} gives
\[(a,b)(g^*,0)=(c,d)\]
and thus $ag^*=c$ and $bg=d$. As $(c,d)\in \mathbf{H}$, we have by definition 
that $g$ is constant. This constant is positive and hence is equal to one.
It follows that  $(c,d)=(a,b)$, which proves the second part of Theorem \ref{halflinethm}.

\end{proof}

\section{Nonlinear Fourier series of square summable sequences.}

In this section, we adapt
arguments in \cite{TaoThiele2012}
to the $SU(2)$ setting. Unlike \cite{TaoThiele2012}, we need to assume an effective bound on $b$ and assume that $a$ is outer. Some of the complex analytic tools need adjustments. We give a self-contained presentation.

Recall $\mathbf{H}$ and define $\mathbf{H}_0^*$ to be the set of 
$(a,b)$ in ${\mathbf L}$ such that 
$(a, b^*)$ is in $\mathbf{H}$
and $b^*(0)=0$. By the shift and mirror symmetries
of Theorem \ref{SU2basictransformations},
the results of the previous section apply in symmetric form. In particular, $\mathbf{H}_0^*$ is the space
of nonlinear Fourier series 
of sequences in $l^2(\Z)$ 
supported on $(-\infty, -1]$.

We split a sequence $F$ in $l^2(\mathbb{Z})$,  as
$F_-$ + $F_+$, where $F_-$ is supported in $(-\infty, -1]$  
and $F_+$ is supported in $[0,\infty)$.
Let $(a_-,b_-)$ in $\mathbf{H}_0^*$ and $(a_+,b_+)$ in $\mathbf{H}$ be the nonlinear Fourier series of $F_-$ and $F_+$, respectively. Then we define  $(a,b)$ 
almost everywhere on $\T$ by
\begin{equation}\label{riemannhilbert} (a,b):=(a_-,b_-)(a_+,b_+).\end{equation}
As a product of $SU(2)$ matrices almost everywhere, 
$(a,b)$ is in $SU(2)$ and thus entry-wise bounded 
by one almost everywhere. The identity
\begin{equation} a= a_- a_+ -b_- b_+^*
\end{equation}
shows that $a$ has an analytic extension to $\mathbb{D}^*$
with
\begin{equation}
\label{triplea}
a(\infty)= a_-(\infty) a_+(\infty)  .
\end{equation}
Hence $a$ is in $H^2(\D^*)$  with $a(\infty)>0$ and we have $(a,b)\in \mathbf{L}$.

We define $(a,b)$ 
as in \eqref{riemannhilbert} to be the nonlinear Fourier series of $F$.
By Theorems \ref{rhoproduct} and \ref{forwardcauchy}
and the symmetries of Theorem \ref{SU2basictransformations},  
$(a,b)$ is the limit as $k\to \infty$ in $\mathbf{L}$
of the nonlinear Fourier  series of the truncations
of $F$ to the intervals $[-k,k]$.
The properties in Theorem \ref{SU2basictransformations} 
continue to hold for this extension of the  definition of 
nonlinear Fourier series. We also see with \eqref{infplancherel}
 and  \eqref{triplea} that
\begin{equation}\label{fullplancherel}
a(\infty) =  \prod_{n\in \Z} (1+|F_n|^2)^{-1/2} .
\end{equation}

Let $\mathbf{B}$ be the subspace of $\mathbf{L}$ of 
all $(a,b)$ such that 
\begin{align}\label{eq:a_bded_below}
    \inf_{z\in \mathbb{D}^*}|a(z)|^2> \frac 1{2}.
\end{align} 
For a function $a$ satisfying \eqref{eq:a_bded_below}, there is a holomorphic  branch of $\log(a^*)$ on $\D$ with nontangential limits coinciding with $\log(a^*)$ on the boundary. By the mean value theorem for the real part of $\log(a^*)$,  $a^*$ is outer on $\D$.

We embed $\mathbf{B}$ into the Hilbert space $ \mathcal{H} \equiv L^2 \left( \T \right ) \oplus L^2 \left( \T \right )$, written as column vectors,  with the norm
\[
\left \| \begin{pmatrix}
    a \\
    b
\end{pmatrix} \right \|_{\mathcal{H}}  =  \sqrt{\left \|a \right \|_{L^2 \left (\T \right )} ^2 + \left \|b \right \|_{L^2 \left (\T \right )} ^2 } .
\]
For $(a,b)$ and
$(c,d)$ in  $\mathbf{B}$, the metrics defined by $\mathcal{H}$ and $\rho$ are equivalent,
\[\frac 1{8}\rho((a,b),(c,d)) \le \left \|\begin{pmatrix}
    a\\
    b
\end{pmatrix}-
\begin{pmatrix}
    c\\
    d
\end{pmatrix}\right \|_{\mathcal{H}}\leq \rho((a,b),(c,d)).\]
Indeed, the second inequality follows directly from the definition of $\rho$, while the first follows from the additional observation that for outer functions $a^*$ and $c^*$,
\[\left | \log|a(\infty)|-\log|c(\infty)| \right |= 
\left | \int_\T \log|a|-\log|c| \right | \le 2\int_\T|a-c|\le 2\|a-c\|_{L^2(\T)},\]
which used an elementary inequality for the logarithm in the domain $[\frac 1 2, 1]$.

\begin{theorem}\label{exista}
    For each complex valued measurable function $b$ on $\mathbb{T}$ with
    \begin{equation}\label{bbound}
    \esssup_{z\in \mathbb{T}}|b(z)|^2< \frac 1{2},
    \end{equation}
    there is a unique measurable function $a$ on $\mathbb{T}$ such that
    $(a,b)\in \mathbf{B}$.
    \end{theorem}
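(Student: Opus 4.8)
The plan is to read off from the definition of $\mathbf{B}$ that, on $\T$, the modulus $|a|$ is completely prescribed by $b$, and that $a$ is forced to be the $*$-reflection of the corresponding outer function; everything then follows from classical facts about outer functions.

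First I would transfer the conditions defining $\mathbf{B}$ to the circle. Since $a^*=\bar a$ and $b^*=\bar b$ on $\T$, the relation $aa^*+bb^*=1$ built into $\mathbf{L}$ becomes $|a|^2=1-|b|^2$ almost everywhere on $\T$. Set $\phi:=(1-|b|^2)^{1/2}$. Hypothesis \eqref{bbound} gives $\tfrac12<1-\esssup_{\T}|b|^2\le\phi^2\le 1$ almost everywhere, so $\log\phi\in L^\infty(\T)\subset L^1(\T)$. This is the only point where the essential supremum bound, rather than a merely pointwise one, is used, and its strictness is exactly what makes \eqref{eq:a_bded_below} strict below.

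For existence I would take $a^*$ to be the outer function on $\D$ with boundary modulus $\phi$ normalized by $a^*(0)>0$, namely
\[
a^*(z)=\exp\left(\int_0^1\frac{e^{2\pi it}+z}{e^{2\pi it}-z}\,\log\phi(e^{2\pi it})\,dt\right),
\]
and set $a:=(a^*)^*$. Since $\log\phi\le 0$ we have $a^*\in H^\infty(\D)$ with $\|a^*\|_\infty\le 1$, hence $a\in H^2(\D^*)$; moreover $a(\infty)=\overline{a^*(0)}=a^*(0)>0$ by \eqref{ainftyastar0}, and $|a|=|a^*|=\phi$ on $\T$, so $|a|^2+|b|^2=1$, i.e. $aa^*+bb^*=1$, there. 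Thus $(a,b)\in\mathbf{L}$. To see that $(a,b)\in\mathbf{B}$, I would use that for an outer function $\log|a^*(z)|$ is the Poisson integral over $\T$ of $\log|a^*|=\log\phi$; since $\log(1-|b|^2)\ge\log(1-\esssup_{\T}|b|^2)$ almost everywhere and the Poisson kernel is a probability measure, $\log|a^*(z)|^2\ge\log(1-\esssup_{\T}|b|^2)>\log\tfrac12$ for every $z\in\D$. Equivalently $|a(z)|^2>\tfrac12$ for every $z\in\D^*$, the value at $\infty$ being the case $z=0$, so \eqref{eq:a_bded_below} holds.

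For uniqueness, suppose $(a,b)$ and $(\tilde a,b)$ both lie in $\mathbf{B}$. By the remark preceding the theorem, condition \eqref{eq:a_bded_below} forces $a^*$ and $\tilde a^*$ to be outer on $\D$. They have the same boundary modulus $|a^*|=\phi=|\tilde a^*|$ on $\T$, and an outer function is determined by its boundary modulus up to a unimodular constant factor; since $a^*(0)=a(\infty)>0$ and $\tilde a^*(0)=\tilde a(\infty)>0$, that constant must equal $1$, so $a^*=\tilde a^*$ and $a=\tilde a$. All the Hardy space facts invoked---existence and boundary modulus of outer functions, the Poisson representation of $\log|\cdot|$, and rigidity of outer functions---are classical, see \cite{garnett}. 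There is no serious obstacle: the content is just that the determinant relation prescribes $|a|$ on $\T$; the only care needed is in the reflection bookkeeping between $\D$ and $\D^*$ and in verifying that the strict bound \eqref{bbound} survives the Poisson average, so that $(a,b)$ lands in $\mathbf{B}$ rather than merely in a closure.
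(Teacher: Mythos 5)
Your proposal is correct and follows essentially the same route as the paper: the paper's $a=e^{M-iHM}$ with $M=\log\sqrt{1-|b|^2}$ is precisely the outer function with boundary modulus $(1-|b|^2)^{1/2}$ that you write via the Herglotz integral, and the lower bound \eqref{eq:a_bded_below} is obtained in both cases from the uniform lower bound on $\log(1-|b|^2)$. Your uniqueness argument via rigidity of outer functions is the same as the paper's observation that $\tilde a a^{-1}$ and its reciprocal are bounded analytic with unimodular boundary values, hence a positive constant.
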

\begin{proof}
To see existence of $a$, let 
\[
M (z) \equiv \log \sqrt{1 - \left | b (z) \right |^2} 
\]
for almost every $ z \in \T$. By \eqref{bbound}, $M$
is real and integrable on $\T$. Then $M-iHM$ with $H$ the Hilbert transform on $\T$ has an analytic extension to $\D^*$.
Define
\begin{align}\label{eq:b_to_a}
a :=e^{M-iHM}  ,
\end{align} which is in $H^2(\D^*)$  and satisfies
\[
\left |a (z)\right | = e^{M} = \sqrt{1 - \left | b(z) \right |^2} 
\]
for almost every $z\in \T$. Also $a^{-1}$ has analytic extension to $\D^*$ and is bounded by $2^{\frac 1 2}$. It follows that
 \[\inf\limits_{z \in \D^*} \left | a \left (z \right ) \right |^2 > \frac{1}{2}  .
\] Hence $(a,b) \in \mathbf{B}$.

To see uniqueness of $a$, let $\tilde{a}$ be another function as claimed in the theorem. Then $\tilde{a} a^{-1}$
and its reciprocal  are analytic in the disc $\D^*$ with
boundary values of modulus one almost everywhere.
Hence both are bounded by $1$ on the disc and thus 
of modulus one and are hence constant. This constant is positive
at $\infty$ and thus $1$. This proves uniqueness.

\end{proof}

\begin{theorem}\label{inverse}
    For each $(a,b)\in \mathbf{B}$, there are unique $(a_+,b_+)\in \bH$ and $(a_-,b_-)\in \bH_0^*$ such that we have the Riemann Hilbert type factorization  
\begin{equation}\label{eq:RH_factorization_1}
(a_{-}, b_{-})(a_{+}, b_{+}) = (a,b)
\end{equation}
almost everywhere on $\T$.
Moreover, there is a unique
    $F\in l^2(\Z)$ whose nonlinear Fourier series is $(a,b)$.
\end{theorem}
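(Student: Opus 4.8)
The plan is to reduce everything to results already established about $\mathbf{H}$ and $\mathbf{H}_0^*$ in the previous section, combined with the existence and uniqueness of $a$ from Theorem \ref{exista}. The uniqueness of $a$ for a given $b$ with $(a,b)\in\mathbf{B}$ is exactly Theorem \ref{exista}, so once the Riemann--Hilbert factorization \eqref{eq:RH_factorization_1} is shown to exist and be unique, and once we produce a square summable $F$ whose nonlinear Fourier series equals $(a,b)$, uniqueness of $F$ will follow by tracing the factorization back through the layer stripping machinery.

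First I would establish the factorization. Given $(a,b)\in\mathbf{B}$, I want to split $b$ according to the two half-lines. Write $b = b_- + z\, b_0$ where $b_0 \in H^2(\D)$ is obtained by applying $P_\D$ to $z^{-1}(b - b(0))$ — more precisely, decompose $b$ using the projections $P_\D$ and $P_{\D^*}$ so that the ``$\D^*$ part'' carries no constant term, matching the structure of $\mathbf{H}_0^*$ (second entry lies in $z H^2(\D^*)$, equivalently $b_-^*(0)=0$ after starring) and of $\mathbf{H}$ (second entry in $H^2(\D)$). The idea is that there should be a factorization $(a,b)=(a_-,b_-)(a_+,b_+)$ with $(a_-,b_-)\in\mathbf{H}_0^*$ and $(a_+,b_+)\in\mathbf{H}$; expanding the product gives $b = a_- b_+ + b_- a_+^*$, and on the Hardy-space level $a_- b_+ \in H^2(\D)$ while $b_- a_+^* \in z H^2(\D^*)$, so the splitting is dictated by the projections. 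To actually construct the factors I would use the right-hand multiplication (layer-stripping) from the $\D^*$ side: define $(a_-,b_-)$ via the $\mathbf{H}_0^*$ analogue of Theorem \ref{layertheorem} applied to $(a,b)$, i.e., successively strip off entries indexed by $-1,-2,\dots$, and check square summability of that one-sided sequence using the Plancherel inequality \eqref{leplancherel} together with the bound \eqref{eq:a_bded_below}, which guarantees $a^*$ is outer and hence that equality holds in Plancherel rather than strict inequality. Then set $(a_+,b_+) := (a_-,b_-)^{-1}(a,b) = (a_-^*, -b_-)(a,b)$, which is $SU(2)$-valued on $\T$; one must verify its first entry extends to $H^2(\D^*)$ with positive value at $\infty$ and that its second entry lies in $H^2(\D)$ — the latter because $b_+ = a_-^* b - b_- a^*$... wait, I should instead read off $b_+$ from the product and confirm it has vanishing negative Fourier coefficients, which follows from how $b_-$ was chosen to absorb the ``$\D^*$'' part of $b$. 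Finally I would check $(a_+,b_+)\in\mathbf{H}$, i.e., that $a_+^*$ and $b_+$ share no inner factor; since $a^*$ is outer and $a^* = a_-^* a_+^* - \ldots$, this should propagate. Uniqueness of the factorization: if $(a_-',b_-')(a_+',b_+')$ is another such factorization, then $(a_-^*,-b_-)(a_-',b_-') = (a_+,b_+)(a_+'^*,-b_+')$, and the left side is built from sequences on $(-\infty,-1]$ while the right from $[0,\infty)$; matching Fourier supports forces both to equal $(1,0)$ up to a positive constant, hence a constant, hence $1$.

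Next, having the factorization, I would apply Theorem \ref{halflinethm} (and its mirror version for $\mathbf{H}_0^*$ guaranteed by the symmetry remarks opening this section): $(a_+,b_+)\in\mathbf{H}$ is the nonlinear Fourier series of its layer stripping sequence $F_+ \in \ell^2([0,\infty))$, and symmetrically $(a_-,b_-)\in\mathbf{H}_0^*$ is the nonlinear Fourier series of some $F_- \in \ell^2((-\infty,-1])$. Setting $F = F_- + F_+$, the definition \eqref{riemannhilbert} of the nonlinear Fourier series on $\ell^2(\Z)$ gives $\overbrace{F} = (a_-,b_-)(a_+,b_+) = (a,b)$. For uniqueness of $F$: given any $\tilde F\in\ell^2(\Z)$ with $\overbrace{\tilde F}=(a,b)$, split $\tilde F = \tilde F_- + \tilde F_+$; then $(\overbrace{\tilde F_-})(\overbrace{\tilde F_+}) = (a,b)$ is a factorization of the type just shown to be unique, so $\overbrace{\tilde F_\pm}$ equals $(a_\pm,b_\pm)$; then the injectivity half of Theorem \ref{halflinethm} (layer stripping recovers the sequence) forces $\tilde F_+ = F_+$ and $\tilde F_- = F_-$.

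The main obstacle I expect is the bookkeeping in the Riemann--Hilbert factorization: showing that the factor $(a_-,b_-)$ produced by one-sided layer stripping of $(a,b)$ lands in $\mathbf{H}_0^*$ (not merely $\overline{\mathbf{H}_0^*}$) and is square summable, and then that the complementary factor $(a_-^*,-b_-)(a,b)$ genuinely lies in $\mathbf{H}$ with second entry in $H^2(\D)$. The square summability and the membership in the ``primed'' (coprime) spaces both hinge crucially on the outer property of $a^*$, which is exactly what the strict bound \eqref{eq:a_bded_below} buys us via the mean value theorem for $\log|a^*|$ — so this is where the hypothesis $(a,b)\in\mathbf{B}$, as opposed to merely $(a,b)\in\mathbf{L}$, is essential, and the delicate point is feeding ``$a^*$ outer'' correctly into Theorem \ref{layertheorem}'s dichotomy to rule out strict inequality in Plancherel and hence get $\ell^2$ summability with the right total mass.
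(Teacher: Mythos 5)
There is a genuine gap, and it sits exactly at the heart of the theorem: the \emph{construction} of the factorization for a two-sided $(a,b)$. You propose two routes and neither works. First, the Fourier-support splitting of $b$: expanding the product gives $b = a_- b_+ + b_- a_+^*$, and neither summand is one-sided in Fourier support --- $a_-\in H^2(\D^*)$ while $b_+\in H^2(\D)$, so $a_-b_+$ has Fourier coefficients on all of $\Z$ (and likewise for $b_-a_+^*$). So $b_-$ and $b_+$ are \emph{not} obtained by applying $P_{\D}$ and $P_{\D^*}$ to $b$; the projections enter only through the coupled system $b_+=P_\D\bigl(\tfrac{b}{a}a_+\bigr)$, $a_+^*=\tfrac{1}{a_+(\infty)}-P_\D\bigl(\tfrac{b}{a}b_+^*\bigr)$, in which the unknowns appear on both sides. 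Second, the layer-stripping route: Theorem \ref{layertheorem} (and its mirror) requires $(a,b)\in\overline{\mathbf{H}}$ (resp.\ $\overline{\mathbf{H}}_0^*$), i.e.\ $b$ must already lie in $H^2(\D)$ (resp.\ $H^2(\D^*)$ with vanishing value at $\infty$). A generic element of $\mathbf{B}$ is in neither space --- $b$ is merely a bounded measurable function on $\T$ --- so the stripping formula $y=b(0)/a^*(0)$ is not even defined, and there is no ``first'' coefficient to strip from a two-sided sequence. You cannot read off $F_{-1}$ from $(a,b)$ by any local formula precisely because the factorization mixes the two half-lines.

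The paper resolves this by treating the two projection identities above as a fixed-point equation for $(A,B)=a_+(\infty)(a_+,b_+)$ on $\mathcal{H}=L^2(\T)\oplus L^2(\T)$. The map \eqref{eq:contraction_mapping} is a contraction because $\|b/a\|_\infty<1$, which is where the hypothesis \eqref{eq:a_bded_below} is actually used (you invoke it only to get outerness of $a^*$, which is needed too, but only for the coprimality and the uniqueness of the normalization). Banach's fixed point theorem then gives both existence and uniqueness of $(a_+,b_+)$ in one stroke, after one verifies that $AA^*+BB^*$ is constant and positive so that the fixed point can be normalized into $\overline{\mathbf{H}}$; the left factor is then defined as $(a,b)(a_+^*,-b_+)$ and checked to lie in $\overline{\mathbf{H}}_0^*$, with coprimality of both factors following from $a^*$ being outer. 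Your reduction of the uniqueness of $F$ to the uniqueness of the factorization via Theorem \ref{halflinethm} is fine, but without the fixed-point (or some equivalent Riemann--Hilbert) argument the existence and uniqueness of the factorization itself is unproved.
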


\begin{proof}
Existence and uniqueness of the factorization 
\eqref{eq:RH_factorization_1} shows existence and uniqueness of $F$ by
the one sided Theorem \ref{halflinethm} and the definition 
\eqref{riemannhilbert} of the nonlinear Fourier series on
$\ell^2(\Z)$.
It therefore suffices to show existence and uniqueness of the factorization.

We first discuss uniqueness and begin by deducing necessary conditions on the factors in \eqref{eq:RH_factorization_1}.
Multiplying by the inverse of the matrix $(a_+,b_+)$ from the right
in \eqref{eq:RH_factorization_1}, we obtain

\begin{align}\label{aminusbminus}
(a_{-},  b_{-}) = (a , b)  (a_{+} ^* , -b_{+} ) = (a a_{+} ^* + b b_{+} ^* , - a b_{+} +  a_{+} b)  .
\end{align}
In particular, the second component of this identity reads as
\begin{align}\label{eq:b_linear_reln}
 b_{-} = - a b_{+} +  a_{+} b  .   
\end{align}
Because $|a|$ is bounded below almost everywhere on $\T$, we can divide by $a$ to get
\begin{equation}\label{rh1}
b_{+}=-\frac{b_{-}}{a}  + \frac{b}{a} a_{+}  . 
\end{equation}

The term $b_+$  is in $H^2(\D)$. 
The term $\frac{b_-}{a}$ has an analytic extension to $\D^*$ and hence is in $H^2(\D^*)$ because $|a|$ is bounded below by $2^{-\frac{1}{2}}$. Moreover, $\frac{b_-}{a}$  vanishes at $\infty$.
Acting on \eqref{rh1} by the Cauchy projection $P_{\D}$ yields
\begin{align}\label{eq:b+_projection}
 b_{+}= P_{\D} \left ( \frac{b}{a} a_{+} \right )  .
\end{align}
We similarly rewrite the identity for the first component of \eqref{aminusbminus} as
\[
a_{+} ^*=\frac{a_{-}}{a} - \frac{b}{a} b_{+} ^* , 
\]
and applying $P_{\D}$ yields
\begin{align}\label{eq:a+_projection}
a_{+} ^* = \frac{1}{a _{+} (\infty)} - P_{\D} \left ( \frac{b}{a} b_{+} ^* \right ) .
\end{align}
Here we used that ${a_-} {a}^{-1}$ has analytic extension to $\D^*$ and applying $P_\D$ to it gives the constant term in the linear Fourier expansion, which is equal to
${a_-(\infty)}{a(\infty)}^{-1}$, which is positive and equal to ${a_+(\infty)}^{-1}$ by  
 \eqref{triplea}.

Motivated by \eqref{eq:b+_projection} and \eqref{eq:a+_projection},  we consider the mapping 
\begin{align}\label{eq:contraction_mapping}
(A,B) \mapsto \left  ( \left ( 1 - P_{\D}\left ( \frac{b}{a}B^* \right ) \right ) ^*,  P_{\D} \left ( \frac{b}{a} A \right ) \right )   ,
\end{align}
which is a contraction on $\mathcal{H}$ because $P_{\D}$ is a projection and by \eqref{eq:a_bded_below}, 
\[
\esssup\limits_{z \in \T} \left | \frac{b \left (z \right )}{a \left (z \right )} \right | = \sqrt{\esssup\limits_{z \in \T} \frac{1}{\left |a \left (z \right ) \right |^2} - 1}< 1  , 
\]
where we used that $a$ has limits almost everywhere on $\T$.
Thus \eqref{eq:contraction_mapping} has a unique fixed point $(A,  B)$ by Banach's fixed point theorem. 
Multiplying \eqref{eq:b+_projection} and \eqref{eq:a+_projection}
by $a_+(\infty)$ shows
\begin{equation}\label{eq:fixed_point_+}
(A,B) = a_+ (\infty) ( a_+, b_+) \,  
\end{equation}
is the fixed point of \eqref{eq:contraction_mapping}
 and therefore the right side of \eqref{eq:fixed_point_+} 
is uniquely determined.
Evaluating at infinity gives
\begin{equation}\label{Aaa}
A(\infty)= a_+(\infty)^2.
\end{equation}  
Identity \eqref{Aaa} is necessary and thus the positive value
$a_+(\infty)$ is unique. Dividing the necessary \eqref{eq:fixed_point_+} by this unique number shows that $(a_+,b_+)$ is unique. And by \eqref{eq:RH_factorization_1},  $(a_{-}, b_{-})$ is also unique.

For existence of a factorization \eqref{eq:RH_factorization_1}, again consider the map in \eqref{eq:contraction_mapping} on 
$\mathcal{H}$, and let $(A,B) \in \mathcal{H}$  be the unique solution.
We claim that 
\begin{align}\label{eq:fn_det_fixed_pt}
M \equiv A A^* + B B^*
\end{align}
is constant on $\T$. Clearly, $M$ is real on $\T$, so it suffices to show that $M$ is in $H^1 \left ( \D^* \right )$. Indeed, this will ensure the linear Fourier coefficients of $M$ are supported on $(-\infty, 0]$, while the conjugate antipodal symmetry of the Fourier coefficients of any real valued function ensures the negative Fourier coefficients of $M$ vanish just as its positive coefficients do. We use the fact that $(A,B)$ is a fixed point of \eqref{eq:contraction_mapping} to write
\begin{align}
M = A \left [ 1 - P_{\D}\left ( \frac{b}{a}B^* \right )  \right ]  + B^*  P_{\D} \left ( \frac{b}{a} A \right ) \label{eq:expand_det_fn_fixed_pt},
\end{align}
which after adding and subtracting  $AB^*ba^{-1}$ gives
\begin{align}
M= A \left [ 1 + (\Id- P_{\D})\left ( \frac{b}{a}B^* \right )  \right ]  - B^*  \left (\Id- P_{\D} \right ) \left ( \frac{b}{a} A \right ) \label{eq:M_expanded}  .
\end{align}
We recognize $\Id - P_{\D}$ as the projection operator onto $H^2 _0 \left (\D^* \right )$, i.e., the set of functions in $H^2 \left (\D^* \right )$ with vanishing zeroth Fourier coefficient. 
As the fixed point equation shows that $A\in H^2(\D^*)$ and $B\in H^2(\D)$, then $M$ is a sum of products of functions in $H^2 (\D^*)$, which must then belong in $H^1 (\D^*)$.

By \eqref{eq:M_expanded}, we also have $M\left ( \infty \right )$ equals
\[ A \left ( \infty \right )\left [ 1 + (\Id- P_{\D}) \left ( \frac{b}{a}B^* \right ) \left ( \infty \right )  \right ]  - B^* (\infty) \left (\Id- P_{\D} \right ) \left ( \frac{b}{a} A \right )(\infty) =A(\infty)  .
\]
We can also write
\begin{align} \label{eq:M_equal_Ainfty} 
  A (\infty) = M =
\int\limits_{\T}  M = \int\limits_{\T} \left |A\right |^2 + \left |B\right |^2 \geq 0   ,
\end{align}
where equality holds in the last step if and only if $\left |A \right | = \left |B \right | = 0$ almost everywhere on $\T$. However $(A,B) = (0, 0)$ is not the fixed point of \eqref{eq:contraction_mapping}, hence $A(\infty) > 0$.

Define normalized versions of  $A$ and $B$, i.e.,
\begin{align}\label{abnormalized}
    a_{+} \left (z \right) \equiv \frac{A(z)}{A(\infty)^{\frac{1}{2}}}  , \quad  
    b_{+} \left (z \right) \equiv \frac{B(z)}{A(\infty)^{\frac{1}{2}}}  ,
\end{align}
so that  $a_{+} \in H^2 \left ( \D^* \right )$, $b_+ \in H^2 \left( \D \right )$ satisfy 
\[
a_{+} a_{+} ^* + b_{+} b_{+} ^*  = \frac{M}{A \left (\infty \right )} = 1 
\]
and thus $(a_+,b_+)\in \overline{\mathbf{H}}$.

Now we define 
\[
\left (a_{-}, b_{-} \right ) \equiv \left ( a,b \right ) \left ( a_{+} ^*, -b_{+} \right )
\]
Because $(a_{-}, b_{-})$ is the product of matrices in $SU(2)$, we have
\[
a_{-} a_{-} ^* + b_{-} b_{-} ^* = 1 \text{ on } \T  .
\]
We claim that $\left (a_{-}, b_{-} \right ) \in \overline{\bH}_{0} ^*$, that is $a_-,b_- \in H^2 (\D^*)$ and $b_-(\infty)=0$. 

Indeed, because $(A, B)$ is a fixed point of  \eqref{eq:contraction_mapping}, then $a_{-}$ equals
\begin{align*}
    a  a_{+} ^* + b b_{+} ^* = a \left ( \frac{1}{a_{+}  (\infty)} - P_{\D}\left ( \frac{b}{a}b_{+} ^* \right ) \right ) + b b_{+} ^* =  a \frac{1}{a_{+}(\infty)} + a\left (\Id - P_{\D} \right) \left ( \frac{b}{a}b_{+} ^* \right )  ,
\end{align*}
which is clearly an element of $H ^2 \left (\D^* \right )$ with constant term $ \frac{ a(\infty)}{a_{+} (\infty)} $,
which is positive as both numerator and denominator  are positive.
Using  \eqref{eq:contraction_mapping} again, we have
\[
b_{-} = -a b_{+} + b a_{+} = - a  P_{\D} \left ( \frac{b}{a} a_+ \right ) + b a_{+} = a\left (\Id - P_{\D} \right ) \left ( \frac{b}{a} a_+ \right )  ,  
\]
which again is in $H^2 \left (\D^* \right )$ and has constant term $b_-(\infty)=0$. Thus we see that $(a_-,b_-)$ is in $ \overline{\bH}^*_0$.

To check that  $(a_+, b_+) $ and $(a_{-}, b_{-})$
are indeed in $ \bH$ and
$ \bH_{0} ^*$, it remains to show that $a_{+} ^*$ and $b_+$ share no common nontrivial inner factor $g$ on $\D$, and likewise for  $a_{-} ^*$ and $b_{-}^*$.  Suppose first that $g$ is an inner function such that ${a_{+} ^*}{g^{-1}}$ and ${b_{+}}{g^{-1}}$ are both in $H^2 (\D)$. 
Then by \eqref{eq:RH_factorization_1}, we have
\begin{equation}\label{eq:common_inner_fn}
{a ^*}{g^{-1}} = a_{-} ^* {a_{+} ^*}{g^{-1}} - b_{-} ^* {b_{+}}{g^{-1}}   ,
\end{equation}
which is an $H^2 (\D)$ function. Thus $g$ is an inner factor of $a^*$. But $a^*$ is outer as observed near \eqref{eq:a_bded_below}. This implies that $|g(0)|=1$ because otherwise
\begin{equation}\label{eq:inner_factor_cst}
    \log|a^*(0)|<\log|a^*g^{-1}(0)|\le \int_\T \log|a^*g^{-1}|=\int_\T \log|a^*|,
\end{equation}
a contradiction. By the maximum principle,  $g$ is constant.
Hence $a_{+} ^*$ and  $b_{+}$ share no common inner factor and so
$(a_+,b_+)\in \bH$. Similar reasoning with \eqref{eq:common_inner_fn} and \eqref{eq:inner_factor_cst} shows $a_{-}^*$ and $b_{-}$ share no common inner factor, and hence 
$(a_{-}, b_{-})$
is in
$ \bH_{0} ^*$.

\end{proof}

Given $\epsilon > 0$, let $\mathbf{B}_{\epsilon}$ be the subset of all
 elements $(a,b)$ of $\mathbf{B}$ which satisfy
 \begin{equation}\label{definebe}
  \inf\limits_{z \in \D} \left | a (z) \right |  \geq 2^{-\frac{1}{2}} + \epsilon .
 \end{equation}
\begin{lemma}\label{lemma:best_cst_eta}
Let $\epsilon \in (0, 1 -2^{-\frac 1 2})$ and let $(a,b) \in \mathbf{B}_{\epsilon}$. Then for $\cst \equiv 3^{\frac 3 2}2^{-1}$ we have
\begin{equation}\label{abbound}
\esssup_{z \in \T} \frac{\left |b (z) \right |}{ \left |a (z) \right |}  \leq 1 -  \cst  \epsilon  . 
\end{equation}
\end{lemma}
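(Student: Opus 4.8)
The plan is to reduce \eqref{abbound} to a sharp one-variable inequality. Since $(a,b)\in\mathbf{B}_\epsilon\subseteq\mathbf{L}$, the determinant relation \eqref{eq:det_torus_gen} gives $|a(z)|^2+|b(z)|^2=1$ for almost every $z\in\T$. Moreover, the bound \eqref{definebe} transfers to the boundary: $a$ has a bounded analytic extension on which $1/a$ is also bounded and analytic, so the infimum of $|a|$ over the domain of analyticity coincides with the essential infimum of $|a|$ on $\T$, and hence $|a(z)|\ge 2^{-\frac12}+\epsilon$ for almost every $z\in\T$. Fixing such a $z$ and writing $s:=|a(z)|\in[2^{-\frac12}+\epsilon,1]$, we get $|b(z)|/|a(z)|=\sqrt{1-s^2}/s$. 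Since $\phi(s):=\sqrt{1-s^2}/s=\sqrt{s^{-2}-1}$ is decreasing on $(0,1)$, it is enough to prove the numerical inequality $\phi\bigl(2^{-\frac12}+\epsilon\bigr)\le 1-\cst\,\epsilon$ for $\epsilon\in(0,1-2^{-\frac12})$.

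Writing $s=2^{-\frac12}+\epsilon$ and setting $\psi(s):=1-\cst\bigl(s-2^{-\frac12}\bigr)-\phi(s)$ on $[2^{-\frac12},1)$, the target is $\psi\ge 0$. First I would record the boundary value $\psi(2^{-\frac12})=1-\phi(2^{-\frac12})=1-1=0$, which reduces matters to showing that $\psi$ is nondecreasing. Differentiating and using $\phi'(s)=-\bigl(s^2\sqrt{1-s^2}\bigr)^{-1}$, this is the statement
\[
\psi'(s)=-\cst+\frac{1}{s^2\sqrt{1-s^2}}\ge 0\qquad\text{for }s\in\bigl(2^{-\tfrac12},1\bigr),
\]
equivalently $s^2\sqrt{1-s^2}\le \cst^{-1}$ on $(0,1)$.

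The crux — the only step requiring a small observation rather than a routine computation — is that $\cst=3^{3/2}2^{-1}$ is precisely the reciprocal of $\max_{s\in(0,1)}s^2\sqrt{1-s^2}$: with $u=s^2$ one maximizes $u\sqrt{1-u}$ on $(0,1)$, the derivative $\tfrac{2-3u}{2\sqrt{1-u}}$ vanishes only at $u=\tfrac23$, and the maximum value is $\tfrac23\cdot 3^{-1/2}=2\cdot 3^{-3/2}=\cst^{-1}$. Hence $s^2\sqrt{1-s^2}\le\cst^{-1}$ on $(0,1)$, so $\psi'\ge 0$, so $\psi$ is nondecreasing, and together with $\psi(2^{-\frac12})=0$ this yields $\psi\ge 0$ on $[2^{-\frac12},1)$; unwinding the reduction of the first paragraph gives \eqref{abbound}. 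I do not expect a real obstacle here: once the sharp value of $\cst$ is recognized everything is elementary, and the only point demanding a little care is the bookkeeping that passes from the lower bound on $|a|$ over the disc to an almost-everywhere bound on $\T$.
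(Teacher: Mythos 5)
Your proof is correct and essentially mirrors the paper's: both reduce to bounding $\phi(s)=\sqrt{s^{-2}-1}$ at $s=2^{-1/2}+\epsilon$ by $1-\cst\epsilon$, and both identify $\cst$ as the reciprocal of $\max_{s\in(0,1)}s^2\sqrt{1-s^2}$ (the paper phrases this as maximizing $x^4-x^6$, which is the square of the same quantity). The only cosmetic difference is that the paper invokes the mean value theorem on $f(2^{-1/2}+\epsilon)-f(2^{-1/2})$, while you show the auxiliary function $\psi$ is nondecreasing from its zero at $2^{-1/2}$, which is the same derivative bound packaged differently.
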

\begin{proof}
For $x$ in the interval $I \equiv (2^{-\frac{1}{2}}, 1)$, define the positive function
\[ 
f(x) \equiv \sqrt{x^{-2} - 1} .   
\]
Then
\[
f'(x) = \frac{- x^{-3}}{\sqrt{x^{-2} - 1}} = \frac{-1}{\sqrt{x^{4} - x^6}}
\]
achieves its maximum on $I$ when $x^4 - x^6$ achieves its maximum.
Because  
\[
(x^{4} - x^6)' =4 x^3 - 6 x^5 = -6 x^3\left ( x - 2^{ \frac 1 2} 3^{- \frac 1 2} \right )   \left (x + 2^{ \frac 1 2} 3^{- \frac 1 2} \right ) \, , 
\]
then $x^{4} - x^6$, and hence $f'(x)$, achieves its maximum on $I$ at $x=2^{ \frac 1 2} 3^{- \frac 1 2}$ .

Now, let $(a,b) \in \mathbf{B}_{\epsilon}$. Using first \eqref{eq:det_torus_gen} and then our assumption \eqref{definebe}, we write
\begin{equation*}
\esssup_{z \in \T} \frac{\left |b (z) \right |}{ \left |a (z) \right |} = \esssup_{z \in \T} \sqrt{\left |a (z) \right |^{-2} -1} \leq \sqrt{\left ( 2^{-\frac 1 2} +\epsilon \right )^{-2} -1}  = f (2^{- \frac 1 2 } +\epsilon). 
\end{equation*}
By the mean value theorem, there exists $\xi \in (2^{- \frac 1 2 }, 2^{- \frac 1 2 } + \epsilon)$ for which this equals
\[
f(2^{- \frac 1 2}) +f' (\xi) \epsilon \leq f( 2^{ - \frac 1 2}) +f' \left ( 2^{ \frac 1 2} 3^{- \frac 1 2} \right )\epsilon = 1 - 3 ^{\frac{3}{2} }2^{-1} \epsilon \, . 
\]
\end{proof}

\begin{theorem}\label{thm:Lipschitz}
  If we endow both $\mathbf{B}_{\epsilon}$ and $\bH$ with the metric from $\mathcal{H}$, then
 the map sending $(a,b)\in \mathbf{B}_\epsilon$ to the right factor $(a_+,b_+)\in \bH$ of  \eqref{eq:RH_factorization_1} in Theorem \ref{inverse} is Lipschitz  with constant at most $(2^{\frac{5}{2}}+4) (\cst \epsilon)^{-\frac 32}$, where $\cst$ is the constant in \eqref{abbound}. 
\end{theorem}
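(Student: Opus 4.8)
The plan is to exploit the Banach fixed point structure already used to prove Theorem~\ref{inverse}. Recall from that proof that, for $(a,b)\in\mathbf{B}_\epsilon$, the right factor $(a_+,b_+)$ of \eqref{eq:RH_factorization_1} is obtained as $a_+=A(\infty)^{-1/2}A$, $b_+=A(\infty)^{-1/2}B$, where $(A,B)\in\mathcal{H}$ is the unique fixed point of the contraction in \eqref{eq:contraction_mapping}, which I denote $\Phi_{(a,b)}$, and where moreover $A(\infty)=\|(A,B)\|_{\mathcal{H}}^2=a_+(\infty)^2$ and $|A|^2+|B|^2=a_+(\infty)^2$ a.e.\ on $\T$ (all established in the course of the proof of Theorem~\ref{inverse}). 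Fix a second $(\tilde a,\tilde b)\in\mathbf{B}_\epsilon$, with fixed point $(\tilde A,\tilde B)$ and right factor $(\tilde a_+,\tilde b_+)$; the task is to compare these.

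The first ingredient is a quantitative contraction constant: by Lemma~\ref{lemma:best_cst_eta}, $\|b/a\|_{L^\infty(\T)}\le 1-\cst\epsilon<1$, so $\Phi_{(a,b)}$ is $(1-\cst\epsilon)$-Lipschitz on $\mathcal{H}$ (and likewise $\Phi_{(\tilde a,\tilde b)}$). Since $(\tilde A,\tilde B)$ is the fixed point of $\Phi_{(\tilde a,\tilde b)}$, the triangle inequality yields
\[
\|(A,B)-(\tilde A,\tilde B)\|_{\mathcal{H}}\ \le\ (\cst\epsilon)^{-1}\,\bigl\|\Phi_{(a,b)}(\tilde A,\tilde B)-\Phi_{(\tilde a,\tilde b)}(\tilde A,\tilde B)\bigr\|_{\mathcal{H}},
\]
which is the source of the negative power of $\cst\epsilon$. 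To estimate the right-hand side, note that $\Phi_{(a,b)}$ and $\Phi_{(\tilde a,\tilde b)}$ differ only through the multiplier $b/a$ versus $\tilde b/\tilde a$, so $(\Phi_{(a,b)}-\Phi_{(\tilde a,\tilde b)})(\tilde A,\tilde B)$ consists of $P_{\D}$ applied to $(\tfrac{\tilde b}{\tilde a}-\tfrac ba)\tilde B^{*}$ and to $(\tfrac ba-\tfrac{\tilde b}{\tilde a})\tilde A$. Since $P_{\D}$ is a contraction on $L^2(\T)$, the map $f\mapsto f^{*}$ is an $L^2(\T)$ isometry, and $|\tilde A|^2+|\tilde B|^2=\tilde a_+(\infty)^2\le 1$ pointwise on $\T$, this difference has $\mathcal{H}$-norm at most $\tilde a_+(\infty)\,\|\tfrac ba-\tfrac{\tilde b}{\tilde a}\|_{L^2(\T)}\le\|\tfrac ba-\tfrac{\tilde b}{\tilde a}\|_{L^2(\T)}$. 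Finally $\tfrac ba-\tfrac{\tilde b}{\tilde a}=(a\tilde a)^{-1}\bigl(b(\tilde a-a)+a(b-\tilde b)\bigr)$, and using $|a|,|b|\le 1$ and $|a\tilde a|\ge(2^{-1/2})^2=\tfrac12$ a.e.\ on $\T$ (the latter from the defining inequality of $\mathbf{B}$), the right-hand side is $\le 2\bigl(|a-\tilde a|+|b-\tilde b|\bigr)$ pointwise, whence $\|\tfrac ba-\tfrac{\tilde b}{\tilde a}\|_{L^2(\T)}\le 2\sqrt2\,\|(a,b)-(\tilde a,\tilde b)\|_{\mathcal{H}}$.

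It remains to transfer this control to $(a_+,b_+)-(\tilde a_+,\tilde b_+)$. Here I would use $a_+(\infty)>2^{-1/2}$: since $a$ is the nonlinear Fourier series coefficient of a sequence $F\in\ell^2(\Z)$, \eqref{fullplancherel} and $a_-(\infty)\le1$ give $a_+(\infty)\ge a(\infty)>2^{-1/2}$, and likewise for $\tilde a_+(\infty)$. Writing $(a_+,b_+)-(\tilde a_+,\tilde b_+)=a_+(\infty)^{-1}\bigl((A,B)-(\tilde A,\tilde B)\bigr)+\bigl(a_+(\infty)^{-1}-\tilde a_+(\infty)^{-1}\bigr)(\tilde A,\tilde B)$, bounding $a_+(\infty)^{-1}\le 2^{1/2}$, and using $\|(\tilde A,\tilde B)\|_{\mathcal{H}}=\tilde a_+(\infty)$ together with $|a_+(\infty)-\tilde a_+(\infty)|=\bigl|\,\|(A,B)\|_{\mathcal{H}}-\|(\tilde A,\tilde B)\|_{\mathcal{H}}\,\bigr|\le\|(A,B)-(\tilde A,\tilde B)\|_{\mathcal{H}}$, one arrives at $\|(a_+,b_+)-(\tilde a_+,\tilde b_+)\|_{\mathcal{H}}\le 2^{3/2}\|(A,B)-(\tilde A,\tilde B)\|_{\mathcal{H}}$. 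Chaining the three estimates gives a Lipschitz constant at most $8(\cst\epsilon)^{-1}$, and since $\cst\epsilon<1$ this is in particular at most $(2^{5/2}+4)(\cst\epsilon)^{-3/2}$, as claimed; a slightly less economical bookkeeping of the same steps lands on the stated constant directly.

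The step I expect to require the most care is reconciling the $L^\infty$ nature of the contraction — the multiplication operator by $b/a$ contracts $\mathcal{H}$ precisely because $\|b/a\|_{L^\infty(\T)}<1$ — with the $L^2$-type metric in which Lipschitz continuity is asserted. The resolution is that in the perturbation term $(\Phi_{(a,b)}-\Phi_{(\tilde a,\tilde b)})(\tilde A,\tilde B)$ the bounded functions $\tilde A,\tilde B$ (of modulus $\le 1$ on $\T$ because $|\tilde A|^2+|\tilde B|^2$ is constant there) can carry the $L^\infty$ factor, leaving the multiplier difference $b/a-\tilde b/\tilde a$ measured in $L^2$, where it is controlled by $\|(a,b)-(\tilde a,\tilde b)\|_{\mathcal{H}}$ using only the pointwise lower bound on $|a|$; everything else is explicit bookkeeping of constants.
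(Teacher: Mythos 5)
Your proof is correct, and it follows the same overall architecture as the paper's argument: write the perturbation equation for the difference of fixed points of \eqref{eq:contraction_mapping}, bootstrap using the $(1-\cst\epsilon)$-contraction to peel off a factor $(\cst\epsilon)^{-1}$, and then transfer to the normalized pair $(a_+,b_+)=A(\infty)^{-1/2}(A,B)$ by controlling $a_+(\infty)^{-1}\le 2^{1/2}$ and $|a_+(\infty)-\tilde a_+(\infty)|\le\|X\|_{\mathcal{H}}$ via the identity $\|(A,B)\|_{\mathcal{H}}=a_+(\infty)$.

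The one genuine point of departure is how you estimate the inhomogeneous term $(\Phi_{(a,b)}-\Phi_{(\tilde a,\tilde b)})(\tilde A,\tilde B)$. The paper separates $\tilde A$ and $\tilde B$ and bounds each pointwise by $|\tilde A(\infty)|^{1/2}\le(\cst\epsilon)^{-1/2}$ (which is \eqref{eq:A_infty_estimate}), and this loss of $(\cst\epsilon)^{-1/2}$ is precisely where the exponent $-\tfrac32$ in the final constant comes from. You instead keep $\tilde A$ and $\tilde B$ together and use that $|\tilde A|^2+|\tilde B|^2$ is the \emph{constant} $\tilde A(\infty)=\tilde a_+(\infty)^2\le 1$ almost everywhere on $\T$ (constancy was established in the proof of Theorem~\ref{inverse}; $\tilde a_+(\infty)\le 1$ holds because $(\tilde a_+,\tilde b_+)\in\mathbf{L}$ forces $|\tilde a_+|\le 1$ on $\T$ and hence on $\D^*$). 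This cancels the factor entirely and yields the strictly sharper Lipschitz constant $8(\cst\epsilon)^{-1}$, which, as you note, implies the paper's $(2^{\frac 52}+4)(\cst\epsilon)^{-\frac 32}$ since $\cst\epsilon<1$. So the approach is the same in spirit but your bookkeeping exploits a structural identity the paper's proof leaves on the table, buying a full power of $(\cst\epsilon)^{1/2}$. All the intermediate steps — the algebraic manipulation of $\tfrac ba-\tfrac{\tilde b}{\tilde a}$, the pointwise bounds $|a|,|b|\le 1$ and $|a|,|\tilde a|\ge 2^{-1/2}$ on $\T$, the reverse triangle inequality for $a_+(\infty)$ — check out.
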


\begin{proof}

For the fixed point $(A,B)$ of the map \eqref{eq:contraction_mapping} 
we obtain 
\begin{equation}\label{xbootstrap}
\left \|  \begin{pmatrix} A\\ B\end{pmatrix} \right \|_{\mathcal{H} } \leq \left \| \begin{pmatrix}
   0 & -P_{\D ^* } \frac{b^*}{a^*}\\
   P_{\D} \frac{b}{a} & 0
\end{pmatrix} \right \|_{\mathcal{H} \to \mathcal{H}}\left \|  \begin{pmatrix} A\\ B\end{pmatrix} \right \|_{\mathcal{H} } +  \left \| \begin{pmatrix}
    1\\
    0
\end{pmatrix} \right \|_{\mathcal{H}}\end{equation}
\begin{equation*}\leq  \left ( 1- \cst {\epsilon} \right ) \left \|  \begin{pmatrix} A\\ B\end{pmatrix} \right \|_{\mathcal{H} } + 1 ,
\end{equation*}
where we used \eqref{abbound} and the fact that $P_{\D}$ and $P_{\D^*}$ have operator norm $1$ on $L^2 (\T)$. Collecting
the norms of $(A,B)^T$ on the left hand side and dividing by $\cst \epsilon$, we obtain
\begin{align}\label{eq:norm_AB_estimate} 
\left \|  \begin{pmatrix} A\\ B\end{pmatrix} \right \|_{\mathcal{H} } \leq \frac{1}{\cst \epsilon}  .
\end{align}
By the mean value property and the Cauchy-Schwarz inequality, we obtain further
\begin{align}\label{eq:A_infty_estimate}
 \left |A \left (\infty \right ) \right | = \left | \int\limits_{\T} A \right | \leq \left \| A \right \|_{L^2 \left (\T\right )} \leq 
\frac{1}{ \cst \epsilon}  .   
\end{align}

To see that the map from $\mathbf{B}_{\epsilon}$ to the fixed point of \eqref{eq:contraction_mapping} is Lipschitz, let $(a,b), (c,d) \in \mathbf{B}_{\epsilon}$, and let $(A,B)$ and $(C,D)$ be the respective fixed points, i.e.,
\begin{align}
\begin{pmatrix}
    A\\
    B
\end{pmatrix} &= \begin{pmatrix}
    1\\
    0
\end{pmatrix} + \begin{pmatrix}
   0 & -P_{\D ^* } \frac{b^*}{a^*}\\
   P_{\D} \frac{b}{a} & 0
\end{pmatrix} \begin{pmatrix}
    A\\
    B
\end{pmatrix}, \label{eq:A_fixed_point_map} 
\end{align}
\begin{align}
\begin{pmatrix}
    C\\
    D
\end{pmatrix} &= \begin{pmatrix}
    1\\
    0
\end{pmatrix} + \begin{pmatrix}
   0 & -P_{\D ^* } \frac{d^*}{c^*}\\
   P_{\D} \frac{d}{c} & 0
\end{pmatrix} \begin{pmatrix}
    C\\
D
\end{pmatrix}  . \label{eq:C_fixed_point_map}
\end{align}
We subtract the second equation from the first to get an equation for
\[X=\begin{pmatrix}
    A\\
    B \end{pmatrix} - 
\begin{pmatrix}
    C\\
    D \end{pmatrix},\]
namely
\begin{align}\label{eq:X_diff_contraction}
    X=
 \begin{pmatrix}
   0 & -P_{\D ^* } \frac{b^*}{a^*}\\
   P_{\D} \frac{b}{a} & 0
\end{pmatrix} 
X +   \begin{pmatrix}
   P_{\D ^* }\left (  \frac{d^*}{c^*} -\frac{b^*}{a^*} \right )D\\
   P_{\D} \left ( \frac{b}{a} - \frac{d}{c} \right )C
\end{pmatrix}    .
\end{align}
This equation is analogous to \eqref{xbootstrap}, and the same 
bootstrapping argument as there leading to \eqref{eq:norm_AB_estimate}, combined with the fact that $P_{\D}, P_{\D^*}$ have operator norm $1$, gives 
\begin{equation}\label{firstx}\|X\|_{\mathcal{H}}\le \frac1{\cst \epsilon}\left\|
\begin{pmatrix}
   P_{\D ^* }\left (  \frac{d^*}{c^*} -\frac{b^*}{a^*} \right )D\\
   P_{\D} \left ( \frac{b}{a} - \frac{d}{c} \right )C
\end{pmatrix} 
\right\|_{\mathcal{H}}
 \leq  \frac 1{\cst \epsilon} \left \| \begin{pmatrix}
   \left (  \frac{d^*}{c^*} -\frac{b^*}{a^*} \right )D\\
   \left ( \frac{b}{a} - \frac{d}{c} \right )C
\end{pmatrix} \right 
\|_{\mathcal{H}}
 .
\end{equation}
 By \eqref{eq:fn_det_fixed_pt}, \eqref{eq:M_equal_Ainfty}, we have that $C$ and $D$ are almost everywhere bounded on $\T$
by
\[
 \left | C \left (\infty \right ) \right |^\frac{1}{2}  \leq (\cst \epsilon)^{-\frac{1}{2}}  ,
\]
where the last inequality follows from \eqref{eq:A_infty_estimate}.
Moreover, $a$ and $c$ are bounded below by $2^{-\frac 12}$ by assumption.
 Hence \eqref{firstx} gives
\[\|X\|_{\mathcal{H}}\le (\cst \epsilon)^{-\frac 32}
\left \| \begin{pmatrix}
     \frac{d^*}{c^*} -\frac{b^*}{a^*} \\
    \frac{b}{a} - \frac{d}{c} 
\end{pmatrix} \right 
\|_{\mathcal{H}}
\leq 2 (\cst \epsilon)^{-\frac 32}\left \| \begin{pmatrix}
     d^* a^* - b^* c^* \\
   b c - d a 
\end{pmatrix} \right \|_{\mathcal{H}}  .
\]

Adding and subtracting terms as in $bc-ad  = c(b-d) + d(c-a)$, and using
that $a,b,c,d$ are all bounded above by $1$ on $\T$, we obtain 
\begin{equation}\label{Xlipsch}\|X\|_{\mathcal{H}}\le 4 (\cst \epsilon)^{-\frac 32}
\left \| \begin{pmatrix}
     a \\
     b
\end{pmatrix} - \begin{pmatrix}
     c \\
     d
\end{pmatrix} \right \|_{\mathcal{H}}  . 
\end{equation}

By \eqref{abnormalized}, we have
\begin{equation*}
\left \| \begin{pmatrix}
    a_{+} \\ b_{+}
\end{pmatrix} - \begin{pmatrix}
    c_{+} \\ d_{+}
\end{pmatrix} \right \|_{\mathcal{H}} = \left \|  \frac{1}{A (\infty ) ^{\frac{1}{2}}} \begin{pmatrix}
    A \\
    B
\end{pmatrix} - \frac{1}{C (\infty ) ^{\frac{1}{2}}} \begin{pmatrix}
    C \\
D
\end{pmatrix} \right \|_{\mathcal{H}} 
\end{equation*}
\begin{equation}\label{a+lip}
\le \frac{1}{A (\infty ) ^{\frac{1}{2}}} \left \| X\right \|_{\mathcal{H} } +  \left \|\begin{pmatrix}
    C\\
    D
\end{pmatrix} \right \|_{\mathcal{H}} \left | \frac{1}{C (\infty ) ^{\frac{1}{2}}} - \frac{1}{ A (\infty )^{\frac{1}{2}}} \right |  . 
\end{equation}

Using \eqref{triplea} and
\eqref{definebe} the fact that $a_{-}  \left (\infty \right )\le 1$,
we have
\begin{align}\label{eq:A_bded_below}
\left | A (\infty ) ^{\frac{1}{2}} \right | = 
|a_{+} \left (\infty \right )|=\left | \frac{a  \left (\infty \right )}{a_{-}  \left (\infty \right )} \right | \geq 2^{-\frac 12} 
\end{align}
and analogously for $C (\infty )$. Further, as $(c,d) \in\mathbf{L}$, we have
\begin{equation}\label{secondlip}
\left \| \frac{1}{C (\infty )^\frac{1}{2}} \begin{pmatrix}
    C\\
    D
\end{pmatrix} \right \|_{\mathcal{H}} = \left \|  \begin{pmatrix}
    c\\
    d
\end{pmatrix} \right \|_{\mathcal{H}} =1  ,
\end{equation}
and using \eqref{eq:A_bded_below}, in particular 
\[
C (\infty ) ^\frac{1}{2} + A (\infty ) ^\frac{1}{2} \geq 2^{\frac{1}{2}}  ,
\]
we get
\[
   \left | {1} - \frac{C (\infty )^{\frac{1}{2}}}{A (\infty ) ^{\frac{1}{2}}} \right |=  \frac{ |C (\infty ) - A (\infty )|}{ A (\infty )^{\frac{1}{2}}  \left ( A (\infty ) ^\frac{1}{2} + C(\infty ) ^\frac{1}{2} \right )} 
\le 
\left | A (\infty ) -  C (\infty ) \right | 
\le  \int\limits_{\T} |A - C| 
\]
\begin{equation}\label{thirdlip}\leq 
\left \| X \right \|_{\mathcal{H}} .\end{equation}

Using \eqref{eq:norm_AB_estimate} and the above estimates
\eqref{eq:A_bded_below}, \eqref{secondlip}, and \eqref{thirdlip},
we obtain from \eqref{a+lip}
and \eqref{Xlipsch} 
\[
\left \| \begin{pmatrix}
    a_{+} \\
    b_{+}
    \end{pmatrix}-\begin{pmatrix}
    c_{+} \\
    d_{+}
    \end{pmatrix}\right \|_{\mathcal{H}} \leq 
    (2^{\frac 12}+1)
\left \|X \right \|_{\mathcal{H}}
    \le 
4(2^{\frac{1}{2}}+1) (\cst \epsilon)^{-\frac 32}
\left \| \begin{pmatrix}
     a \\
     b
\end{pmatrix} - \begin{pmatrix}
     c \\
     d
\end{pmatrix} \right \|_{\mathcal{H}}  .
\]
This proves Theorem \ref{thm:Lipschitz}.
\end{proof}

\begin{theorem} \label{layerlipschitz}
    The map sending $(a,b)$ to the coefficient $F_0$ of the sequence $F$ as in Theorem \ref{inverse} is Lipschitz 
    on $\mathbf{B}_{\epsilon}$ endowed with the metric $\mathcal{H}$.
    The Lipschitz constant is at most $(8 + 2^{ \frac 5 2}) (\cst \epsilon)^{-\frac{3}{2}}$, where $\cst$ is the constant in \eqref{abbound}.
\end{theorem}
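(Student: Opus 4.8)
The plan is to reduce the statement to results already established and then carry out a short elementary estimate; there is no new analytic input required.

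\emph{Step 1: identify $F_0$.} For $(a,b)\in\mathbf{B}_\epsilon$, Theorem \ref{inverse} gives the factorization \eqref{eq:RH_factorization_1}, $(a,b)=(a_-,b_-)(a_+,b_+)$, and the associated sequence is $F=F_-+F_+$, where $F_+$ is supported on $[0,\infty)$ and equals the layer stripping sequence of $(a_+,b_+)\in\mathbf{H}$ (Theorem \ref{halflinethm}). Since $F_-$ is supported on $(-\infty,-1]$, we have $F_0=(F_+)_0$, which by the zeroth step of layer stripping, formula \eqref{eq:layer_stripping_explicit} in Theorem \ref{layertheorem}, equals $b_+(0)/a_+^*(0)$. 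As $a_+(\infty)>0$ we have $a_+^*(0)=a_+(\infty)$; writing $(a_+,b_+)=A(\infty)^{-1/2}(A,B)$ as in \eqref{abnormalized}, with $(A,B)$ the fixed point of the contraction in the proof of Theorem \ref{inverse}, this becomes $F_0=b_+(0)/a_+(\infty)=B(0)/A(\infty)$. So the map in question is $(a,b)\mapsto B(0)/A(\infty)$, and in particular it is scale invariant in $(A,B)$, hence insensitive to the normalization in \eqref{abnormalized}.

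\emph{Step 2: a priori bounds.} From \eqref{eq:A_bded_below}, $a_+(\infty)=A(\infty)^{1/2}\ge 2^{-1/2}$, so $A(\infty)\ge\tfrac12$. Since $(a_+,b_+)\in\overline{\mathbf{H}}\subset\mathbf{L}$ we have $\|a_+\|_{L^2(\T)}^2+\|b_+\|_{L^2(\T)}^2=1$ and $\|a_+\|_{L^2(\T)}^2\ge|a_+(\infty)|^2$, hence $|b_+(0)|^2\le\|b_+\|_{L^2(\T)}^2\le 1-a_+(\infty)^2$ and therefore $|F_0|^2=|b_+(0)|^2/a_+(\infty)^2\le a_+(\infty)^{-2}-1\le 1$; the same gives $|\tilde F_0|\le 1$ for $(c,d)$. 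Moreover, for any $L^2(\T)$ function the modulus of the zeroth Fourier coefficient (equivalently, by \eqref{inftymeanvalue}, the value at $\infty$ of an $H^2(\D^*)$ function) is at most the $L^2(\T)$ norm, so with $(C,D)$ the fixed point for $(c,d)$ and $X=(A,B)-(C,D)\in\mathcal{H}$ we have $|B(0)-D(0)|\le\|B-D\|_{L^2(\T)}$ and $|A(\infty)-C(\infty)|\le\|A-C\|_{L^2(\T)}$.

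\emph{Step 3: the estimate and the constant.} With $\tilde F_0=D(0)/C(\infty)$, adding and subtracting $D(0)/A(\infty)$ yields
\[
F_0-\tilde F_0=\frac{B(0)-D(0)}{A(\infty)}+\tilde F_0\,\frac{C(\infty)-A(\infty)}{A(\infty)} ,
\]
so by $|\tilde F_0|\le 1$, $A(\infty)\ge\tfrac12$, Cauchy--Schwarz on the two summands, and Step 2,
\[
|F_0-\tilde F_0|\le\frac{|B(0)-D(0)|+|C(\infty)-A(\infty)|}{A(\infty)}\le 2\sqrt 2\,\bigl(\|B-D\|_{L^2(\T)}^2+\|A-C\|_{L^2(\T)}^2\bigr)^{1/2}=2^{3/2}\|X\|_{\mathcal{H}} .
\]
Now invoke the bound \eqref{Xlipsch} from the proof of Theorem \ref{thm:Lipschitz}, namely $\|X\|_{\mathcal{H}}\le 4(\cst\epsilon)^{-3/2}\|(a,b)-(c,d)\|_{\mathcal{H}}$, to obtain $|F_0-\tilde F_0|\le 2^{7/2}(\cst\epsilon)^{-3/2}\|(a,b)-(c,d)\|_{\mathcal{H}}$, and $2^{7/2}\le 8+2^{5/2}$. (Alternatively one can split $F_0-\tilde F_0$ so that one summand is controlled by $\|(a_+,b_+)-(c_+,d_+)\|_{\mathcal{H}}$, to which Theorem \ref{thm:Lipschitz} applies with constant $2^{5/2}+4$, and the other by $\|X\|_{\mathcal{H}}$, to which \eqref{Xlipsch} applies with constant $4$; this reproduces the stated constant $8+2^{5/2}$ more literally.)

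\emph{Main obstacle.} Honestly there is no real obstacle beyond assembling the right structural facts: the estimate in Step 3 is mechanical once Step 1 is in place. The one step needing attention is Step 1 — correctly tracing $F_0$ through the two-sided Riemann--Hilbert factorization (Theorem \ref{inverse}), the one-sided layer-stripping identification (Theorem \ref{halflinethm}), and the explicit first layer-stripping coefficient \eqref{eq:layer_stripping_explicit}, and noticing that $F_0$ depends only on the right factor $(a_+,b_+)$, equivalently on the fixed point $(A,B)$ in a scale-invariant way. The only other thing to watch is bookkeeping the factors of $\sqrt 2$ so the final constant stays within $8+2^{5/2}$; working with the unnormalized $(A,B)$ and citing \eqref{Xlipsch} rather than re-running the contraction argument keeps this tidy.
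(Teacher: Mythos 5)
Your proposal is correct and follows essentially the same route as the paper: identify $F_0$ via the layer--stripping formula \eqref{eq:layer_stripping_explicit} as a ratio attached to the right factor of the Riemann--Hilbert factorization, then control the difference of ratios by an add-and-subtract argument together with the Lipschitz bound for the factorization map. The only (harmless, and in fact advantageous) deviation is that you work with the unnormalized fixed point $(A,B)$ and cite the intermediate bound \eqref{Xlipsch} directly instead of passing through the normalized $(a_+,b_+)$ and the full statement of Theorem \ref{thm:Lipschitz}; this sidesteps the normalization estimates and yields the slightly better constant $2^{7/2}\le 8+2^{5/2}$.
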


\begin{proof}
Let $(a,b)$ and $(c,d)$ in $\mathbf{B}_{\epsilon}$ be the nonlinear Fourier series of $(F_n)$ and $(G_n)$. Let $(a_+, b_+)$ and $(c_+, d_+)$ be as in the proof of Theorem \ref{inverse}. We have 
\[2^{-\frac 12}\le {a_{+}  \left (\infty \right )}, {c_{+}  \left (\infty \right )}\le 1,\]
the upper bound for general elements in $\mathbf{L}$ and the lower bound
by \eqref{triplea} and assumptions on $a$ and $c$.
 By \eqref{eq:layer_stripping_explicit}, we have
\[
\frac{1}{2} \left | F_0 -G_0 \right | = \frac{1}{2}\left | \frac{b_+(0)}{a_+(\infty)}-\frac{d_+ (0)}{c_+ (\infty)} \right | \le \left | b_+ (0)c_+ (\infty)- d_+ (0)a_+ (\infty) \right |
\]

\[\le 
|(b_+ (0)- d_+ (0))c_+ (\infty)|+|d_+ (0)(c_+ (\infty)- a_+ (\infty))| \]
\[\le |b_+ (0)- d_+ (0)|+ |c_+ (\infty)- a_+ (\infty)| \le 
\int_{\T} |b_+ -d_+|+\int_{\T} |c_+- a_+ | 
\]
\[
\le 2^{\frac 12} \left \| \begin{pmatrix}
    a_+ \\
    b_+
\end{pmatrix} - \begin{pmatrix}
    c_+ \\
    d_+
\end{pmatrix} \right \|_{\mathcal{H}}  \leq (8+ 2^{ \frac 5 2}) (\cst \epsilon)^{-\frac{3}{2}} \left \| \begin{pmatrix}
    a \\
    b
\end{pmatrix} - \begin{pmatrix}
    c \\
    d
\end{pmatrix} \right \|_{\mathcal{H}}  . 
\]
Here the last inequality followed from Theorem \ref{thm:Lipschitz}.
This proves Theorem \ref{layerlipschitz}.
\end{proof}

\section{Proof of the main Theorem \ref{main}}

\label{mainproof}

Let $f$ be given as in Theorem \ref{main}.
Extend $f$ to an even function on $[-1,1]$.

We first show existence of the sequence $\Psi$ by construction.
Define
$b(z)=if(x)$ where $x=\cos \theta$ for $\theta\in [0,\pi]$ given by $z=e^{2i\theta}$. As $f$ is even, $b(z)=b(z^{-1})$ for  $z\in \T$, and in particular $b(z)$ is well-defined at $z=1$ because $f(-1) = f(1)$. Moreover, $b$ is purely imaginary and is bounded in absolute value by $2^{-\frac{1}{2}}-\epsilon$. By Theorem \ref{exista}, there is an $a$ such that $(a,b)\in \mathbf {B}_\epsilon$. By Theorem \ref{inverse}, there is a sequence
$F=(F_n) \in \ell^2 \left (\Z \right )$ so that $(a,b)$ is the nonlinear Fourier series of $F$.

The reflection symmetry of the purely imaginary $b$ implies $F_{-n} = F_n$ and $\bar{F_n} = F_n$ for all $n$. Indeed, by \eqref{eq:potentialreflection}, extended to infinite sequences,  and the fact that $b(z^{-1}) = b(z)$, the sequence $(F_{-n})$ has nonlinear Fourier series $(a ^* (z^{-1}), b(z))$. By the uniqueness part of Theorem \ref{exista}, this implies 
\[
a ^* (z^{-1}) = a(z)  .
\]
Thus $(a,b)$ is the nonlinear Fourier series of both $(F_n)$ and $(F_{-n})$, which by the uniqueness part of Theorem \ref{inverse} implies $F_n = F_{-n}$. And by \eqref{eq:potentialconjugate} extended to infinite sequences, the sequence $\bar{F_n}$ has nonlinear Fourier series 
\[
(a ^* (z^{-1}), b^* (z^{-1})) = (a (z), - b(z))  ,
\]
which by \eqref{eq:potentialmodulate}, again extended to infinite sequences, is the nonlinear Fourier series of $-F_n$. Again by uniqueness part of Theorem \ref{inverse}, we conclude that $\bar{F_n} = -F_n$, i.e., $F_n$ is purely imaginary.

We define $\psi_n\in (-\frac \pi 2,\frac \pi 2)$ for $n\in \N$ by
\[
\psi_n \equiv \arctan\left ( \frac{F_n}{i} \right )  .
\]

We now show the desired properties of the sequence $\Psi$. We begin with the Plancherel identity \eqref{plancherel}. We compute
\[
   \frac{1}{\pi} \int\limits_{-1} ^1 \log \left ( 1 - f(x)^2 \right ) \frac{dx}{\sqrt{1-x^2} }= \frac{1}{\pi}\int_0^\pi \log(1-f(\cos \varphi)^2 ) \, d\varphi \]
   \[=\int\limits_{\T} \log \left ( 1 - \left | b \right |^2 \right )= 2 \int\limits_{\T} \log \left | a \right |  = 2 \log \left | a^* \left ( 0 \right ) \right |  , 
\]
where the last equality follows from $a$ being outer. By 
\eqref{fullplancherel}, the last term equals
\[
 - \sum\limits_{k \in \Z} \log (1+|F_k|^2)  = -\sum\limits_{k \in \Z} \log (1+\tan ^2 \psi_{|k|} )  .
\]
This proves \eqref{plancherel}.

Next we show convergence of $\Im (u_d \left (\Psi, x \right )$ to $f$ in the norm \eqref{hsnorm}.
Let $(a_d, b_d)$ be the nonlinear Fourier series of the truncated sequence $(F_n 1_{\{|n| \leq d \}})$. Then by Lemma \ref{upperleft}, we have 
\[
b_d \left (z \right ) = \Im \left (u_d \left (\Psi, x \right ) \right )  ,
\]
where we again recall that the right side is an even function on $[-1,1]$. 
By the reasoning just above \eqref{fullplancherel}, the
sequence $(a_d, b_d)$ converges to $(a,b)$ in $\mathbf{L}$, and hence $b$ converges to $b_d$ in $L^2 \left (\T \right )$. As
\begin{align}\label{eq:change_of_vars} 
\left ( \frac{1}{\pi} \int\limits_{-1}^1 \left | \Im  u_d \left (\Psi, x\right ) - f \left (x \right ) \right |^2 \frac{dx}{\sqrt{1-x^2}} \right ) ^{\frac{1}{2}}  = \left \| b_d - b \right \|_{L^2 \left ( \T \right )}  ,
\end{align}
which converges to $0$ as $d \to \infty$ by the remarks just above \eqref{fullplancherel}, we then have convergence of $\Im u_d \left (\Psi, x\right )$ to $f$ in the norm \eqref{hsnorm}. 

This shows existence of $\Psi$. 
To see uniqueness, let $\tilde{\Psi}$ be any sequence satisfying the properties of the theorem. Set $\tilde{F}_n =  i \tan \tilde{\psi}_{|n|}$. By \eqref{plancherel}, the sequence $\tilde{F}$ is square summable. Let $(\tilde{a},\tilde{b})$ be its non-linear Fourier series, and $(\tilde{a}_d,\tilde{b}_d)$ the non-linear Fourier series of its truncations. By the remarks just above \eqref{fullplancherel}, $\tilde{b}_d$ converges to $\tilde{b}$, and it also converges to $b$ by definition of $b$ and the convergence assumption of the theorem.
Hence $b=\tilde{b}$. Because 
\[
\left | \frac{\tilde{a^*}}{a^*} \right | = 1
\] on $\T$ and $\frac{1}{a^*} \in H^2 (\D)$, then $\frac{\tilde{a^*}}{a^*}$ is an inner function on $\D$. By \eqref{fullplancherel}, \eqref{plancherel}, the definition of the function $b = i f$ and then the definition \eqref{eq:b_to_a} of the outer function $a$, we have
\[
\tilde{a} (\infty) = \prod\limits_{n \in \Z} (1 + |\tilde F_{n}|^2)^{ - \frac 1 2} = \frac{1}{2} \int\limits_{\T} \log (1-|\tilde b|^2) =\int\limits_{\T} \log (|a|) = a (\infty)\, .
\]
Thus $\frac{\tilde{a} (\infty)}{a (\infty)} =1$ and by the maximum principle, the inner function $\frac{\tilde{a} ^*}{a^*}$ must be constant $1$, i.e., $a=\tilde{a}$. By the uniqueness part of Theorem \ref{inverse}, we have $\tilde{F}_n=F_n$. 
Hence $\tilde{\Psi}=\Psi$ since for each $j$ we have $\psi_j, \tilde \psi_j \in (- \frac{\pi}{2}, \frac{\pi}{2})$, an interval on which $\tan$ is injective.

Now we show that the map sending $f$ to $\Psi$ is Lipschitz.
It suffices to show for each $k\ge 0$ that the map from $f$ to
$\psi_k$ is Lipschitz.

We write this map as a composition of three maps.
By Theorem \ref{layerlipschitz}, the map  sending 
 $(a,b)$ in $\mathbf{B}_{\epsilon}$
 to $F_0$ is Lipschitz. As the shift $(a,b) \mapsto (a,b z)$ is an isometry in $\mathbf{B}_{\epsilon}$, the same holds for $F_k$ with $k\in \Z$ and we have for $(a,b)$ and $(\tilde{a},\tilde{b})$ in $\mathbf{B}_{\epsilon}$,
\begin{align}\label{eq:ab_to_F}
\left | F_k -\tilde{F}_{k}  \right | \leq (8+ 2^{ \frac 5 2}) (\cst \epsilon)^{-\frac{3}{2}} \left \| \begin{pmatrix}
    a\\
    b
\end{pmatrix} - \begin{pmatrix}
    \tilde{a}\\
    \tilde{b}
\end{pmatrix}
\right \|_{\mathcal{H}} , 
\end{align}
where $\cst$ is as in \eqref{abbound}.

As $\arctan \left (x \right )$ has slope between $-1$ and $1$, we have
\begin{align}\label{eq:Lip_F_to_psi}
\left | \psi_k -\tilde{\psi}_{k}  \right | \leq \left | F_k -\tilde{F}_{k}  \right |   . 
\end{align}

It remains to obtain a Lipschitz bound for the map sending $f$ to $(a,b)$.
 By the cosine theorem, with  $\theta(z)$
 the angle between $a(z)$ and $\tilde{a}(z)$
\begin{equation*}
|a- \tilde{a}|^2=|a|^2+|\tilde{a}|^2-2|a||\tilde{a}|\cos \theta 
=(|a|-|\tilde{a}|)^2 + 2 |a||\tilde{a}|(1-\cos \theta ) 
\end{equation*} 
\begin{equation}
\label{costheorem1}
    \le (|a|-|\tilde{a}|)^2 + 2 (1-\cos \theta ) 
    \le (|a|-|\tilde{a}|)^2 + \theta^2 .
\end{equation} 
Here we used that  $|a|,|\tilde{a}|\le 1$, and
that $2(1-\cos \theta )$ vanishes  of order two at $\theta=0$ and has second derivative less than 
or equal to two.

The angle $\theta$ is given by the imaginary part of $\log(a)-\log(\tilde{a})$.
As $\log(a)$ and $\log(\tilde{a})$ have  analytic extensions to $\D^*$ that are real at $\infty$, the angle $\theta$ 
is dominated in absolute value by the imaginary part of $\log(a)-\log(\tilde{a})$, which in turn is given as $-H(\log|a|-\log|\tilde{a}|)$ for the Hilbert transform $H$. Recall that the Hilbert transform 
has operator norm bounded by $1$ on $L^2(\T)$.

Inequality \eqref{costheorem1} yields 
\begin{equation}\label{asquared}
  \left \| a - \tilde{a} \right \|_{L^2 \left (\T \right )}^2 \leq 
  \left \|  \left | a \right|   -   \left | \tilde{a} \right|   \right \|_{L^2 \left (\T \right )}^2  + \left \|  H({\log   \left | a \right|}  -  {\log  \left | \tilde{a} \right|})   \right \|_{L^2 \left (\T \right )}^2 ,
\end{equation}
\begin{equation}\label{asquared1}
 \leq 
  \left \|   \sqrt{1- |b|^2}   -    \sqrt{1- |\tilde{b}|^2}   \right \|_{L^2 \left (\T \right )}^2  + \frac 14\left \|  {\log   \left | 1- |b|^2 \right|}  -  {\log   | 1- |\tilde{b}|^2 |}   \right \|_{L^2 \left (\T \right )}^2 , \nonumber
\end{equation}
Using that on the interval  $[0, 2^{-\frac 12}]$, the map 
$x \mapsto \log (1-x^2)$ has slope bounded by $2^{\frac 32}$  and $x \mapsto \sqrt{ 1-x^2}$ has slope bounded by $2^{\frac 12}$, we estimate the 
\eqref{asquared} by

\[\le 
4\left \| \left | b \right| -  \left | \tilde{b} \right| \right \|_{L^2 \left (\T \right )} ^2 \leq 4 \left \| b -  \tilde{b} \right \|_{L^2 \left (\T \right )} ^2.
\]
We obtain
\begin{align}\label{eq:Lip_f_to_ab}
  \left \| \begin{pmatrix}
    a\\
    b
\end{pmatrix} - \begin{pmatrix}
    \tilde{a}\\
    \tilde{b}
\end{pmatrix}
\right \|_{\mathcal{H}} ^2 \leq 5\left \| b - \tilde{b} \right \|_{L^2 \left ( \T \right )} ^2 = \frac{5}{\pi} \int\limits_{-1}^1 \left | f \left (x \right ) - \tilde{f} \left (x \right )\right |^2 \frac{dx}{\sqrt{1-x^2}}   .
\end{align}
Combining  \eqref{eq:ab_to_F}, \eqref{eq:Lip_F_to_psi} and \eqref{eq:Lip_f_to_ab}, we obtain
\[
\left |\psi_0 -\tilde{\psi}_0  \right | \leq 5 ^{ \frac 1 2} (8+ 2^{ \frac 5 2}) (\cst \epsilon)^{-\frac{3}{2}}\left ( \frac{1}{\pi} \int\limits_{-1}^1 \left | f \left (x \right ) - \tilde{f} \left (x \right )\right |^2 \frac{dx}{\sqrt{1-x^2}} \right ) ^{\frac{1}{2}}  . 
\]
The bound for $\cst$ from Lemma \ref{lemma:best_cst_eta} yields the Lipschitz constant above is at most $7.3 \epsilon^{- \frac 3 2}$, which completes the proof of Theorem \ref{main}.

\section*{Funding and Conflicts of interests}

The authors acknowledge support by the Deutsche Forschungsgemeinschaft (DFG, German Research Foundation) under Germany's Excellence Strategy -- EXC-2047/1 -- 390685813 as well as CRC 1060.

The authors have no relevant financial and non-financial interests to disclose.

\bibliographystyle{amsalpha}

\bibliography{references}

\newcommand{\etalchar}[1]{$^{#1}$}
\providecommand{\bysame}{\leavevmode\hbox to3em{\hrulefill}\thinspace}
\providecommand{\MR}{\relax\ifhmode\unskip\space\fi MR }
\providecommand{\MRhref}[2]{%
  \href{http://www.ams.org/mathscinet-getitem?mr=#1}{#2}
}
\providecommand{\href}[2]{#2}
\begin{thebibliography}{DMWL21}

\bibitem[AKNS74]{akns}
Mark~J. Ablowitz, David~J. Kaup, Alan~C. Newell, and Harvey Segur, \emph{The
  inverse scattering transform-{F}ourier analysis for nonlinear problems},
  Studies in Appl. Math. \textbf{53} (1974), no.~4, 249--315. \MR{450815}

\bibitem[BC84]{bealscoifman}
R.~Beals and R.~R. Coifman, \emph{Scattering and inverse scattering for first
  order systems}, Comm. Pure Appl. Math. \textbf{37} (1984), no.~1, 39--90.
  \MR{728266}

\bibitem[CDG{\etalchar{+}}20]{FindingAnglesforQSP}
Rui Chao, Dawei Ding, Andr\'as Gily\'en, Cupjin Huang, and Mario Szegedy,
  \emph{Finding angles for quantum signal processing with machine precision},
  2020, arXiv:2003.02831.

\bibitem[CK01]{ChristKiselev}
Michael Christ and Alexander Kiselev, \emph{Maximal functions associated to
  filtrations}, Journal of Functional Analysis \textbf{179} (2001), no.~2,
  409--425.

\bibitem[CMN{\etalchar{+}}18]{childs2018toward}
Andrew~M Childs, Dmitri Maslov, Yunseong Nam, Neil~J Ross, and Yuan Su,
  \emph{Toward the first quantum simulation with quantum speedup}, Proceedings
  of the National Academy of Sciences \textbf{115} (2018), no.~38, 9456--9461.

\bibitem[Den06]{Denisov}
Sergey~A. Denisov, \emph{Continuous analogs of polynomials orthogonal on the
  unit circle. {K}rein systems}, International Mathematics Research Surveys
  \textbf{2006} (2006), 54517.

\bibitem[DLNW22]{Linlin}
Yulong Dong, Lin Lin, Hongkang Ni, and Jiasu Wang, \emph{Infinite quantum
  signal processing}, 2022, arXiv:2209.10162.

\bibitem[DLNW23]{linlinrecentnumerics}
\bysame, \emph{Robust iterative method for symmetric quantum signal processing
  in all parameter regimes}, 2023, arXiv:2307.12468.

\bibitem[DMWL21]{symmetricphasefactors}
Yulong Dong, Xiang Meng, K.~Birgitta Whaley, and Lin Lin, \emph{Efficient
  phase-factor evaluation in quantum signal processing}, Phys. Rev. A
  \textbf{103} (2021), 042419.

\bibitem[Gar81]{garnett}
J.~B. Garnett, \emph{Bounded analytic functions}, Pure and Applied Mathematics,
  vol.~96, Academic Press Inc. [Harcourt Brace Jovanovich Publishers], New
  York, 1981.

\bibitem[GSLW19]{factorizationmethod2}
Andr{\'a}s Gily{\'e}n, Yuan Su, Guang~Hao Low, and Nathan Wiebe, \emph{Quantum
  singular value transformation and beyond: exponential improvements for
  quantum matrix arithmetics}, Proceedings of the 51st Annual ACM SIGACT
  Symposium on Theory of Computing, 2019, pp.~193--204.

\bibitem[Haa19]{factorizationmethod1}
Jeongwan Haah, \emph{Product decomposition of periodic functions in quantum
  signal processing}, Quantum \textbf{3} (2019), 190.

\bibitem[KSR22]{KovacDiogo}
Vjekoslav Kovač, Diogo Silva, and Jelena RupčIć, \emph{Asymptotically sharp
  discrete nonlinear {H}ausdorff–{Y}oung inequalities for the
  {$SU(1,1)$}-valued {F}ourier products}, The Quarterly Journal of Mathematics
  \textbf{73} (2022), no.~3, 1179--1188.

\bibitem[LC17]{low2017}
Guang~Hao Low and Isaac~L Chuang, \emph{Optimal hamiltonian simulation by
  quantum signal processing}, Physical review letters \textbf{118} (2017),
  no.~1, 010501.

\bibitem[MTT01]{MTT}
Camil Muscalu, Terence Tao, and Christoph Thiele, \emph{A counterexample to a
  multilinear endpoint question of {C}hrist and {K}iselev}, Mathematical
  Research Letters \textbf{10} (2001), 237--246.

\bibitem[MTT02]{MuscaluTaoThiele}
\bysame, \emph{A {C}arleson type theorem for a {C}antor group model of the
  scattering transform}, Nonlinearity \textbf{16} (2002), no.~1, 219--246.

\bibitem[OST{\etalchar{+}}12]{Oberlin2012}
Richard Oberlin, Andreas Seeger, Terence Tao, Christoph Thiele, and James
  Wright, \emph{A variation norm {C}arleson theorem}, Journal of the European
  Mathematical Society \textbf{014} (2012), no.~2, 421--464.

\bibitem[RBMC23]{su11qsp}
Zane~M. Rossi, Victor~M. Bastidas, William~J. Munro, and Isaac~L. Chuang,
  \emph{Quantum signal processing with continuous variables}, 2023,
  arXiv:2304.14383.

\bibitem[Sil17]{diogosurvey}
Diogo Oliveira~e Silva, \emph{Inequalities in non-linear {F}ourier analysis},
  CIM Bulletin \textbf{38} (2017), 39.

\bibitem[Sil18]{diogo}
\bysame, \emph{A variational nonlinear {H}ausdorff-{Y}oung inequality in the
  discrete setting}, Mathematical Research Letters \textbf{25} (2018),
  1993--2015.

\bibitem[Sim05]{simon}
Barry Simon, \emph{Orthogonal polynomials on the unit circle}, OKS prints,
  2005.

\bibitem[SW99]{sylvesterwinebrenner}
John Sylvester and Dale~P. Winebrenner, \emph{Linear and nonlinear inverse
  scattering}, SIAM J. Appl. Math. \textbf{59} (1999), no.~2, 669--699.
  \MR{1654391}

\bibitem[Tsa05]{tsai}
Ya-Ju Tsai, \emph{{$SU(2)$} non-linear {F}ourier transform}, Ph.D. thesis,
  University of California Los Angeles, 2005.

\bibitem[TT12]{TaoThiele2012}
Terence Tao and Christoph Thiele, \emph{Nonlinear {F}ourier analysis}, arXiv
  e-prints (2012), arXiv:1201.5129.

\bibitem[WDL22]{EnergylandscapeofQSP}
Jiasu Wang, Yulong Dong, and Lin Lin, \emph{On the energy landscape of
  symmetric quantum signal processing}, {Quantum} \textbf{6} (2022), 850.

\bibitem[Yin22]{StableFactorization}
Lexing Ying, \emph{Stable factorization for phase factors of quantum signal
  processing}, {Quantum} \textbf{6} (2022), 842.

\end{thebibliography}

\end{document}